\newcommand{\rrVert}{\Vert}
\newcommand{\rrvert}{\vert}
\newcommand{\llVert}{\Vert}
\newcommand{\llvert}{\vert}
\newcommand{\eqref}[1]{(\ref{#1})}
\newtheorem{Theorem}{Theorem}[section]
\newtheorem{Lemma}[Theorem]{Lemma}
\newtheorem{Proposition}[Theorem]{Proposition}
\newtheorem{Corollary}[Theorem]{Corollary}
\newtheorem{claim}[Theorem]{Claim}
\begin{document}
\begin{frontmatter}

\title{Universality for one-dimensional hierarchical coalescence
processes with double and triple~merges\thanksref{T1}}
\runtitle{1D hierarchical coalescence processes}

\begin{aug}
\author[A]{\fnms{A.} \snm{Faggionato}\ead[label=e1]{faggiona@mat.uniroma1.it}},
\author[B]{\fnms{C.} \snm{Roberto}\corref{}\ead[label=e2]{croberto@math.cnrs.fr}}
\and
\author[C]{\fnms{C.} \snm{Toninelli}\ead[label=e3]{cristina.toninelli@upmc.fr}\thanksref{t2}}
\runauthor{A. Faggionato, C. Roberto and C. Toninelli}
\affiliation{University La Sapienza, Universit\'{e} Paris Ouest Nanterre La D\'{e}fense and Universit\'{e} Paris VI-VII}
\address[A]{A. Faggionato\\
Dip. Matematica G.~Castelnuovo\\
University La Sapienza\\
P.le Aldo Moro 2\\
00185 Roma\\
Italy\\
\printead{e1}} 
\address[B]{C. Roberto\\
MODAL'X\\
Universit\'{e} Paris Ouest Nanterre La D\'{e}fense\\
200 avenue de la R\'{e}publique 92000 Nanterre\\
France\\
\printead{e2}}
\address[C]{C. Toninelli\\
L.P.M.A. and CNRS-UMR 7599\\
Universit\'{e} Paris VI-VII\\
4 Pl. Jussieu 75252 Paris\\
France\\
\printead{e3}}
\end{aug}
\thankstext{T1}{Supported by the European Research Council through the
``Advanced Grant'' PTRELSS 228032.}
\thankstext{t2}{Supported in part by the French Ministry of
Education through the Grant ANR-2010-BLAN-0108.}

\received{\smonth{1} \syear{2012}}
\revised{\smonth{12} \syear{2012}}

%
\begin{abstract}
We consider one-dimensional hierarchical coalescence processes (in
short HCPs) where two or three neighboring domains can merge. An HCP
consists of an infinite sequence of stochastic coalescence processes:
each process occurs in a~different ``epoch'' and evolves for an
infinite time, while the evolutions in subsequent epochs are linked in
such a~way that the initial distribution of epoch $n+1$ coincides with
the final distribution of epoch $n$. Inside each epoch a~domain can
incorporate one of its neighboring domains or both of them if its
length belongs to a~certain epoch-dependent finite range.

Assuming that the distribution at the beginning of the first epoch is
described by a~renewal simple point process, we prove limit theorems
for the domain length and for the position of the leftmost
point (if any). 
Our analysis extends the results obtained in [\textit{Ann. Probab.}
\textbf{40} (2012) 1377--1435] to a~larger family of models, including
relevant examples from the physics literature [\textit{Europhys. Lett.}
\textbf{27} (1994) 175--180, \textit{Phys. Rev. E} (3) \textbf{68}
(2003) 031504]. It reveals the presence of a~common abstract structure
behind models which are apparently very different, thus leading to very
similar limit \mbox{theorems}. Finally, we give here a~full
characterization of the infinitesimal generator for the dynamics inside
each epoch, thus allowing us to describe the time evolution of the
expected value of regular observables in terms of an ordinary
differential equation.
\end{abstract}

%
\begin{keyword}[class=AMS]
\kwd{60G55}
\kwd{60B10}
\end{keyword}
\begin{keyword}
\kwd{Coalescence process} \kwd{simple point process} \kwd{renewal
process} \kwd{universality} \kwd{nonequilibrium dynamics}
\end{keyword}

\end{frontmatter}

\section{Introduction}
\setcounter{footnote}{2}

A one-dimensional hierarchical
coalescence process\break  (HCP) consists of an infinite sequence of
one-dimensional coalescence processes:
each process occurs in a~different \textit{epoch}
and evolves for an infinite time, while the evolution in subsequent
epochs are linked in such a~way that the initial distribution of epoch
$n+1$
coincides with
the final distribution of epoch $n$.
At a~given time inside epoch $n$ the state of the process is described
by a~simple point process on $\mathbb{R}$, that is, by a~random locally
finite subset of $\mathbb{R}$, such that the intervals among
consecutive points (\textit{domains}) are not smaller than $d^{(n)}$,
where $\{d^{(n)}\}_{n\geq1}$ is an a~priori fixed sequence of strictly
increasing and diverging positive numbers. The evolution inside epoch
$n$ can be informally described as follows.
Only domains whose length belongs to the
finite range $[d^{(n)},d^{(n+1)})$ are
\emph{active}, that is, they can
incorporate their left neighboring domain, their right neighboring
domain or both of them. Inactive domains cannot incorporate their
neighbors and can increase their length only if they are incorporated
by active neighbors. The rates of the merging events and the sequence
$\{d^{(n)}\}_{n\geq 1}$ are quite general, with the important feature
that the activity ranges $[d^{(n)},d^{(n+1)})$ should be such that
after each merging step the newly produced domain always becomes
inactive for that epoch but active for some future epoch.

We have introduced the concept of HCP in \cite{FMRT0}, considering only
left or right merging of domains, that is, a~domain cannot incorporate
simultaneously both its neighbors. There
we proved that if the initial distribution is a~renewal process, such
property is preserved at all times and epochs and the distribution of
certain rescaled variables---the domain length and the position of the
leftmost point (if any)---has a~well-defined limiting behavior
corresponding to large universality classes (most of the dynamical
details disappear in the scaling limit).
Here we extend these results to the more general HCPs defined above
which also allow triple merging, and we determine the corresponding
limiting behavior and universality classes.

Besides the mathematical interest, our study has been motivated by the
fact that several HCPs have been implicitly introduced in the physics
literature to model the nonequilibrium evolution of one-dimensional
systems whose dynamics is dominated by the coalescence of proper
domains or droplets characterizing the experiments. We refer to
Section~\ref{physics} for a~review of some of these HCPs and the
\mbox{corresponding} physical systems. A~key common feature emerges
from the experiments on all these systems: an interesting coarsening
phenomena occurs which leads to a~scale-invariant morphology for large
times, namely the system is described by a~single (time-dependent)
length and the distribution approaches a~scaling form. Several models,
even very simple ones, have been proposed by physicists in order to
capture and explain such intriguing behavior and in many cases these
models turn out to be HCPs; see, for example,
\cite{P,DBG,DGY1,DGY2,SE,BDG}. Supported by computer simulations and
under the key assumption of a~well-defined limiting behavior under
suitable rescaling, physicists have derived for these HCPs in the mean
field approximation some nontrivial limiting distributions for the
relevant quantities and noticed that these distributions display
a~certain degree of universality. The results we obtained in
\cite{FMRT0} prove and generalize the findings of physicists. However
the analysis in \cite{FMRT0} does not cover some cases of interests for
physics which involve triple merging, for example, the HCP which has
introduced in \cite{BDG} to model Ising at zero temperature; see
Section~\ref{ising}. These models are instead covered by the present
study which explains why the limiting distributions of several models,
although different, have a~similar structure.

The analysis in \cite{FMRT0} is based on a~robust combinatorial study
of the coalescence inside a~given epoch, which becomes extremely hard
in the present setting. Hence, here we have followed a~different route
inspired by the approach of \cite{SE}. In particular, we start with the
infinitesimal generator of the one-epoch coalescence, giving a~complete
characterization of its form and domain (Theorem
\ref{amico_marco_zero}). It is well known that this allow us to
characterize the time evolution of the expectation of regular
observables in terms of an ordinary differential equation. Applied to
the domain length and the position of the leftmost point (if any), this
method leads to recursive equations between the Laplace transforms of
the involved quantities at the beginning and the end of each epoch, and
therefore at the beginning of two consecutive epochs (Theorems
\ref{differisco} and \ref{differiscobis}).

The study of the Markov generator for stochastic processes whose state
at a~given time is described by a~simple point process is not trivial.
Two fundamental contributions are given by \cite{Pr} and \cite{GK},
where spatial birth and death processes are obtained as solutions of
stochastic equations (in \cite{Pr} only finite populations are
considered, while in \cite{GK} the analysis is extended to locally
finite populations). Here, we have introduced a~lattice structure
(which is somehow artificial from a~geometric point of view) that
strongly simplifies the analysis of the Markov generator, and in
particular allows us to use the standard methods described in \cite{L}
(another possible route could have been to adapt the method developed
in \cite{GK}). Such a~discretization requires some very special care,
because of the use the vague topology on the space $\mathcal{N}$ of
locally finite subset of $\mathbb{R}$.

Once obtained the above mentioned system of recursive equations between
Laplace transforms, we have generalized the transformation introduced
in \cite{FMRT0}, Section~5, which in some sense linearizes the system
and allows us to analyze the recursive identities and obtain the limit
behavior (Theorems \ref{teo2} and \ref{teo3}). The resulting
transformation is now a~more abstract object and can therefore be
applied to a~larger class of models.

Finally we stress that the heuristic technique developed by physicists
(see~\cite{BDG}) to derive the limiting distribution (under the
assumption of the existence of a~limiting behavior) is restricted to
models with $d^{(n)}=n$, and it becomes meaningless also at heuristic
level if the ratio $d^{(n)}/ d^{(n+1)}$ does not converge to $1$ as $n$
goes to $\infty$.
Under the same hypothesis of \cite{BDG}, namely $d^{(n)}=n$ and via the
mean field approximation, in \cite{CP} the authors proposed a~time
evolution equation which should describe the domain size distribution
when the time variable $t$ is a~continuous approximation of the
discrete label $n$ of the epochs and one forgets how much time elapses
between and during the merging events. This equation has been
rigorously analyzed in \cite{CP} and \cite{GM}, and in the latter work
a limiting self-similar profile for this equation has been proved. In
this special case, a~transformation similar to the one presented in
more generality in \cite{FMRT0}, and here, has been used.

\section{Model and results}\label{monello}

In this section we fix some notation and give our main results. We
first introduce the simple point processes we are interested in
(standard references are \cite{DV,FKAS}). Then we define the process
called one-epoch coalescence process (in short OCP) and the
hierarchical coalescence process (HCP). Finally we provide some
examples of HCPs coming from the physics literature.

\subsection{Simple point processes (SPP)}\label{sec_SPP}

We denote by $\mathcal{N}$ the family of locally finite subsets $\xi
\subset \mathbb{R}$. $\mathcal{N}$ is a~measurable space endowed with
the \mbox{$\sigma$-}algebra of measurable subsets generated by
\[
\bigl\{ \xi\in\mathcal{N}\dvtx  |\xi\cap A_1|=n_1, \ldots, |\xi
\cap A_k |=n_k \bigr\},
\]
$A_1, \ldots, A_k$ being bounded Borel sets in $\mathbb{R}$ and $n_1,
\ldots, n_k \in\mathbb{N}$. We recall that any probability measure on
the measurable space $\mathcal{N}$ defines a~simple point process
(SPP).

We call \textit{domains} the intervals $[x,x']$ between
nearest--neighbor points $x,x'$ in $\xi\cup\{-\infty, +\infty\} $. Note
that the existence of the domain $[-\infty,x']$ corresponds to the fact
that $\xi$ is bounded from the left and its leftmost point is given by
$x'$. A similar consideration holds for $[x, \infty]$. Points of $\xi$
are also called \textit{domain separation points}. Given a~point
$x\in\mathbb{R}$, we define
\[
d_x^\ell:= \inf\{ t > 0\dvtx  x-t \in\xi\},\qquad
d_x ^r:= \inf\{ t > 0\dvtx  x+t \in\xi\}
\]
with the convention that the infimum of the empty set is $\infty$.
Note that if $x \in\xi$, then $d_x^\ell$ ($d_x^r$) is simply the
length of the domain to the left (right) of~$x$.

In what follows $\mathbb{N}$ ($\mathbb{N}_+$) will denote the set of
nonnegative (positive) integers.

\begin{definition}
\begin{longlist}[(iii)]
\item[(i)] We say that a~SPP $\xi$ is \textit{left-bounded} if it has
    a~leftmost point and has infinite cardinality.

\item[(ii)] We say that a~SPP $\xi$ is $\mathbb{Z}$-\textit{stationary}
    if $\xi \subset\mathbb{Z}$ and its law $\mathcal{Q}$ is
    invariant by $\mathbb{Z}$-translations, that is, if for any $x
    \in\mathbb{Z}$ the random set $\xi-x$ has law
    $\mathcal{Q}$.\vadjust{\goodbreak}

\item[(iii)] We say that a~SPP $\xi$ is \textit{stationary} if its law
    $\mathcal{Q}$ is invariant under \mbox{$\mathbb{R}$-}translations,
    that is,  if for any $x \in\mathbb{R}$ the random set $\xi-x$
    has law $\mathcal{Q}$.
\end{longlist}
\end{definition}

If $\xi$ is $\mathbb{Z}$-stationary or stationary, then a.s. the
following dichotomy holds \cite{FKAS}: $\xi $ is unbounded from the
left and from the right or $\xi$ is empty. In the sequel we will always
assume the first alternative to hold a.s. and we will write $
\xi=\{x_k\dvtx  k \in \mathbb{Z}\}$ with the rules: $x_0 \leq0 < x_1$
and $x_k < x_{k+1}$ for all $k \in\mathbb{Z}$. In the case of
a~left-bounded SPP, we enumerate the points of $\xi$ as $\{x_k\dvtx
k \in\mathbb{N}\}$ in increasing order.

We now describe the main classes of SPP's we are interested in.
%
\begin{definition}
Let $\nu$ and $\mu$ be probability measures on $\mathbb{R}$ and
$(0,\infty)$, respectively. Let $\xi$ be a~SPP with law $\mathcal{Q}$.
\begin{itemize}
\item We say that $\xi$ is a~\emph{renewal SPP containing the
    origin and with interval law $\mu$}, and write $\mathcal{Q}=
    \operatorname{Ren}(\mu | 0)$, if:
\begin{longlist}[(ii)]
\item[(i)] $0 \in\xi$;

\item[(ii)] $\xi$ is unbounded from the left and from the right
    and, labeling the points in increasing order with $x_0=0$, the
    random variables $d_k= x_k-x_{k-1}$, $k \in\mathbb{Z}$, are
    i.i.d. with common law $\mu$.
\end{longlist}

\item We say that $\xi$ is a~\emph{right renewal} SPP with first
    point law $\nu$ and interval law $\mu$, and write $\mathcal{Q}=
    \operatorname{Ren}(\nu, \mu)$, if:
\begin{longlist}[(iii)]
\item[(i)] $\xi= \{x_k,  k \in\mathbb{N}\} $ is a~left-bounded SPP;

\item[(ii)] the first point $x_0$ has law $\nu$;

\item[(iii)] $ d_k = x_k -x_{k-1}$ ($k \in\mathbb{N}_+$) has
    law $\mu$;

\item[(iv)] the random variables $x_0, \{d_k\}_{k \in
    \mathbb{N}_+}$ are independent.
\end{longlist}

\item If $\mu$ has finite mean, we say that $\xi$ is a
    \textit{stationary renewal SPP with interval law~$\mu$}, and
    write $\mathcal{Q}=\operatorname{Ren} (\mu)$, if:
\begin{longlist}[(ii)]
\item[(i)] $\xi$ is a~stationary SPP with finite intensity and
    $\xi$ is nonempty a.s.;

\item[(ii)] the random variables $d_k= x_k-x_{k-1}$, $k \in
    \mathbb{Z}$, are i.i.d. with common law $\mu$ w.r.t. the
    Palm distribution associated to $\mathcal{Q}$.
\end{longlist}

\item If $\mu$ has support on $\mathbb{N}_+$ and has finite mean,
    we say that $\xi$ is a~$\mathbb{Z}$-\textit{stationary renewal
    SPP with interval law $\mu$}, and write
    $\mathcal{Q}=\operatorname{Ren} _{\mathbb{Z}} (\mu)$, if:
\begin{longlist}[(ii)]
\item[(i)] $\xi$ is $\mathbb{Z}$-stationary and a.s. nonempty;

\item[(ii)] w.r.t. the conditional probability $\mathcal{Q}(\cdot
    |0 \in\xi)$ the random variables $d_k= x_k-x_{k-1}$, $k
    \in\mathbb{Z}$, are i.i.d. with common law $\mu$.
\end{longlist}
\end{itemize}
\end{definition}

We recall that the intensity $\lambda_{\mathcal{Q}}$ of a~stationary
SPP with law $\mathcal{Q}$ is defined as the expectation
$\lambda_{\mathcal{Q}}:= \mathbb{E}_{\mathcal{Q}}  ( |\xi \cap[0,1]|
)$. A ($\mathbb{Z}$-)stationary renewal SPP with interval law $\mu$
having infinite mean cannot exist (see Proposition 4.2.I in \cite{DV}
and Appendix C in \cite{FMRT0}). Also recall that, given a~stationary
SPP with law $\mathcal{Q}$, its Palm distribution is formally defined
by the conditional probability measure $\mathcal{Q}( \cdot|0 \in\xi)$.
Since for a~stationary process the event $\{0 \in\xi\}$ has zero
probability, a~precise definition requires more care \cite{FKAS,DV}.
For those readers who are not familiar with Palm distributions, it is
sufficient to know that as discussed after Theorem 1.3.4 in
\cite{FKAS}, $\mathcal{Q}=\operatorname{Ren} (\mu)$ if and only if the
following holds: the random variables $d_k=x_k-x_{k-1}$, $k\neq1$ are
i.i.d. with law $\mu$ and are independent from the random vector
$(x_0,x_1)$, which satisfies
%
\begin{eqnarray}
\mathcal{Q}( -x_0> u, x_1> v) &=& \lambda_\mathcal{Q}
\int_{u+v}^\infty \bigl( 1- F(t) \bigr) \,dt,\qquad F(t):= \mu
\bigl((0,t] \bigr),
\nonumber
\\[-8pt]
\eqntext{u,v>0.}
\end{eqnarray}

\subsection{The one-epoch coalescence process (OCP)} \label{sec_OCP} This process depends on two constants
$0<d_{\min}<d_{\max}$ and on nonnegative bounded continuous functions
$\lambda_\ell$, $\lambda_r$, $\lambda_a $ defined on $[d_{\min},
\infty]$ which, with $ \lambda(d):= \lambda_\ell(d)+ \lambda_r
(d)+\lambda_a (d)$, satisfy the following assumptions:
\begin{longlist}[(A1)]
\item[(A1)] $\lambda(d) >0$ if and only if $d \in[d_{\min},
    d_{\max})$;

\item[(A2)] if $d,d' \geq d_{\min}$, then $d+d'\geq d_{\max}$.
\end{longlist}
Trivially, \textup{(A2)} is equivalent to the bound $2d_{\min}\geq
d_{\max}$.

The admissible starting configurations for the OCP belong to the subset
$\mathcal{N}(d_{\min} )$ given by the configurations $\xi
\in\mathcal{N}$ having only domains of length not smaller than
$d_{\min}$, that is,
%
\begin{equation}
\label{scudetto} \mathcal{N}(d_{\min})= \bigl\{\xi\in \mathcal{N}\dvtx
d_x^{ \ell} \geq d_{\min}, d_x^{  r}
\geq d_{\min}\ \forall x \in\xi\bigr\}.
\end{equation}

Then the stochastic evolution of the OCP is given by a~jump dynamics
with c\`{a}dl\`{a}g paths $\{ \xi(t) \}_{t \geq0}$ in the Skorohod
space $D ([0,\infty), \mathcal{N}(d_{\min})  )$; cf. \cite{B}. Roughly
speaking, the dynamics is the following. Each domain $\Delta$ of length
$d$ waits an exponential time with parameter $\lambda(d)$, and
afterward exactly one of the following annihilations takes place: the
left extreme of $\Delta$ is erased with probability
$\lambda_\ell(d)/\lambda(d)$; the right extreme of $\Delta$ is erased
with probability $\lambda_r(d)/\lambda(d)$; both the extremes of
$\Delta$ are erased with probability $\lambda_a(d)/\lambda(d)$. We say
that the domain $\Delta$ incorporates its left domain, its right
domain, both its neighboring domains, respectively. In
Section~\ref{universal} we present a~full construction of all OCPs,
varying the initial configuration, on the same probability space
(\emph{universal coupling}).

Note that assumptions \textup{(A1)} and \textup{(A2)} on the
coalescence rates imply that any domain which has been generated by
a~coalescence event is not active; that is, it cannot incorporate other
domains. This assumption comes from several models of physical interest
(see Section~\ref{physics}) and plays a~fundamental role in our
analysis.

\begin{remark}
Note that $\lambda_\ell$ and $\lambda_r$ correspond to $\lambda_r $ and
$\lambda_\ell$ in \cite{FMRT0}, respectively. The dynamics here is
thought as coalescence of domains while in \cite{FMRT0} is thought as
annihilation of points. In particular, in \cite{FMRT0}
$\lambda_\ell(d)$ corresponds to the rate at which a~point $x$ is
erased by the effect of its \emph{left} domain $[x-d,x]$, while here
$\lambda_\ell(d)$ corresponds to the rate at which a~domain of length
$d$ merges with its left domain and similarly for $\lambda_r(d)$.

The case $\lambda_a \equiv0$ has been treated in \cite{FMRT0} without
the additional assumption that $\lambda_\ell$, $\lambda_r$ are
continuous functions. This assumption will be crucial in the
investigation of the Markov generator; see Section~\ref{infinitesimi}.
\end{remark}

Formally, the Markov generator of
the OCP is given by
%
\begin{eqnarray}
\label{santi} \mathcal{L}f(\xi) &=& \mathop{\sum_{[x,x+d]}}_{\mathrm{domain\ in\ }\xi}
\bigl\{ \lambda_\ell(d) \bigl[ f \bigl(\xi\setminus\{x\} \bigr)- f(
\xi) \bigr]\nonumber
\\[-4pt]
&&\hspace*{44pt}{} + \lambda_r (d) \bigl[ f \bigl(\xi\setminus\{x+d\} \bigr)-
f(\xi) \bigr]
\\
&&\hspace*{44pt}{}+ \lambda_a (d) \bigl[ f \bigl(\xi\setminus\{x, x+d\}
\bigr)- f(\xi) \bigr] \bigr\}.\nonumber
\end{eqnarray}
A precise description of the Markov generator $\mathcal{L}$ is
given 
below while its full rigorous analysis is postponed to Section~\ref{infinitesimi} for clarity of exposition.

We will write $\mathbb{P}_{\mathcal{Q}}$ for the law on $D ([0,\infty),
\mathcal{N}(d_{\min})  )$ of the OCP with initial law $\mathcal{Q}$ on
$\mathcal{N} (d_{\min})$ and $\mathcal{Q}_t$ for its marginal at time
$t$. With a~slight abuse of notation, for any configuration
$\zeta\in\mathcal{N}(d_{\min})$ we also let $\mathbb{P}_{\zeta}$ be
$\mathbb{P}_{\mathcal{Q}}$ when $\mathcal{Q}$ is concentrated in the
configuration $\zeta$.

Since the OCP is an annihilation process, points can only disappear.
Furthermore, assumptions \textup{(A1)}~and~\textup{(A2)} guarantee that
the process converges to a~limiting configuration. One can easily prove
the following lemma already stated in \cite{FMRT0} in a~less general
setting (details are left to the reader).

%
\begin{Lemma} \label{volarelontano}
For any given initial condition $\xi\in\mathcal{N}(d_{\min})$ the
following holds:
\begin{longlist}[(ii)]
\item[(i)] $\xi(t) \subset\xi(s)$ if $s \leq t$;

\item[(ii)] there exists a~unique element $\xi(\infty)$ in
    $\mathcal{N}(d_{\max})$ such that $\xi(t)\cap I= \xi(\infty)\cap
    I$ for all large enough $t$ (depending on $I$) and all bounded
    intervals~$I$.
\end{longlist}
\end{Lemma}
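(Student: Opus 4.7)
For part $(i)$, the statement is built into the very form of the generator \eqref{santi}: every elementary jump of the OCP erases one or two points of $\xi$, and no point is ever added. Hence, along any well-defined trajectory $\{\xi(t)\}_{t\geq 0}$, the inclusion $\xi(t) \subset \xi(s)$ for $s \leq t$ is automatic.

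For the existence of $\xi(\infty)$ in part $(ii)$, fix a bounded interval $I\subset \bbR$ and set $N_I(t) = |\xi(t) \cap I|$. Since $\xi(0) \in \cN(d_{\min})$ is locally finite, $N_I(0)$ is finite. A jump of the OCP modifies $\xi(\cdot)\cap I$ only when it removes at least one point of $I$, and in that case $N_I(\cdot)$ strictly decreases by $(i)$. Therefore at most $N_I(0)$ jumps can ever affect $\xi(\cdot)\cap I$, so $\xi(t)\cap I$ is eventually constant. Defining $\xi(\infty)$ from this common limit along a countable exhaustion of $\bbR$ by bounded intervals produces a unique element of $\cN$ satisfying $\xi(t)\cap I = \xi(\infty)\cap I$ for all sufficiently large $t$ (depending on $I$).

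It remains to verify that $\xi(\infty)\in \cN(d_{\max})$, i.e.\ every domain of $\xi(\infty)$ has length $\geq d_{\max}$. Any domain of $\xi(t)$ is either a \emph{survivor} -- both its endpoints belong to $\xi(0)$ and already bounded a domain there -- or a \emph{merged} domain, produced by a coalescence event collapsing two or three adjacent initial domains into one. If $[x,x']$ is a non-survivor domain of $\xi(t)$, then at least one point of $\xi(0)$ lies strictly between $x$ and $x'$, so $|x'-x|\geq 2d_{\min}$, which is $\geq d_{\max}$ by assumption (A2). Thus the only candidates for ``short'' domains in $\xi(\infty)$ are survivors of length $d\in [d_{\min},d_{\max})$. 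Fix such a potential survivor $[x,x']$. Because $\xi(t)\subset\xi(0)$ by $(i)$, no point of $\xi(t)$ lies strictly between $x$ and $x'$, so $[x,x']$ is a domain of $\xi(t)$ as soon as $x,x'\in\xi(t)$. While both endpoints persist, \eqref{santi} fires events attached to $[x,x']$ at total rate $\l(d)>0$ by (A1), and each such event removes $x$, $x'$ or both. Stochastic domination by an exponential clock of parameter $\l(d)$ therefore gives that this survivor a.s.\ disappears in finite time. Since $\xi(0)$ contains only countably many survivor domains of length $<d_{\max}$, a union bound shows that a.s.\ no such domain persists in $\xi(\infty)$.

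The monotonicity and the local stabilization are essentially bookkeeping consequences of the annihilation structure once the OCP is constructed. The main subtle point is the very last step: upgrading ``$\xi(\infty)$ exists'' to ``$\xi(\infty)\in \cN(d_{\max})$'' requires the probabilistic argument above, which crucially combines the strict positivity of $\l$ on $[d_{\min},d_{\max})$ provided by (A1) with the countability of the initial survivor domains in order to apply the union bound.
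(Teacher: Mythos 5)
Your proof is correct and supplies the details that the paper explicitly omits (it refers back to \cite{FMRT0} and leaves the verification to the reader). Part $(i)$ is immediate from the annihilation structure of the graphical construction. For part $(ii)$, the monotonicity argument showing local stabilization, and the case split between survivor domains (which, being active by (A1), are a.s.\ hit by a legal ring in finite time while both endpoints persist) and non-survivor domains (whose length is automatically $\geq 2d_{\rm min}\geq d_{\rm max}$ by (A2)), is exactly the right structure. One small imprecision in the prose: a non-survivor domain need not arise from a \emph{single} coalescence event collapsing two or three initial domains --- several chained events can collapse arbitrarily many. This does not affect your argument, since the only property you actually use is that a non-survivor domain $[x,x']$ of $\xi(t)$ has $x,x'\in\xi(0)$ and at least one point of $\xi(0)$ strictly between them, giving length $\geq 2d_{\rm min}$.
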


The next result is a~simple generalization of \cite{FMRT0},
Theorem~2.13. It states that if the process starts with some right
renewal (resp., stationary, $\mathbb{Z}$-stationary, etc.) simple point
process $\xi$, then at any later time $t$ the process $\xi(t)$ is still
of the same type. This preservation of the renewal is a~consequence of
the following key property of the dynamics (called ``separation
effect'' in \cite{FMRT0}, Lemma 3.1): if at time~$t$ a~point $x$
survives this allows to decouple events that occur for $s\leq t$
spatially on its left and on its right. More precisely if $\mathcal{A}$
is in the \mbox{$\sigma$-}algebra generated by\vadjust{\goodbreak}
$\{\xi(s)\cap(-\infty,x)\}_{s\leq t}$, and $\mathcal{B}$ is in the
\mbox{$\sigma$-}algebra generated by $\{\xi(s)\cap(x,\infty)\}_{s\leq
t}$, then
\[
\mathbb{P}_{\zeta}\bigl(\mathcal{A}\cap\mathcal{B}\cap\bigl\{x\in\xi(t)
\bigr\} \bigr)=\mathbb{P}_{\zeta\cap(-\infty,x]}\bigl(\mathcal{A}\cap\bigl\{x\in\xi (t)
\bigr\}\bigr)\mathbb{P}_{\zeta\cap[x,\infty)}\bigl(\mathcal{B}\cap\bigl\{x\in\xi (t)
\bigr\}\bigr).
\]
The proof of this equality goes as in \cite{FMRT0}, Lemma 3.1, using
now the universal coupling described in Section~\ref{universal}. By
using this result, the proof of Lemma \ref{vecchioteo1} follows exactly
the same lines \cite{FMRT0}, Theorem 2.13.

\begin{Lemma} \label{vecchioteo1}
Let $\nu, \mu$ be two probability measures on $\mathbb{R}$ and
$[d_{\min}, \infty)$, respectively. Then, for all $t\in[0,\infty]$
there exist probability measures $\nu_t,\mu_t$ on $\mathbb{R}$ and
$[d_{\min}, \infty)$, respectively, such that $\nu_0=\nu$, $\mu_0=\mu$
and:
\begin{longlist}[(iii)]
\item[(i)] if $\mathcal{Q}=\operatorname{Ren}(\nu,\mu)$, then
    $\mathcal {Q}_t=\operatorname{Ren} (\nu_t, \mu_t)$;

\item[(ii)] if $\mathcal{Q}= \operatorname{Ren} (\mu)$, then
    $\mathcal{Q}_t= \operatorname{Ren} (\mu_t)$;

\item[(iii)] if $\mathcal{Q}= \operatorname{Ren}_\mathbb{Z}(\mu)$, then
    $\mathcal{Q}_t=\operatorname{Ren}_\mathbb{Z} (\mu_t)$;

\item[(iv)] if $\mathcal{Q}= \operatorname{Ren} (\delta_0, \mu)$, then
    $\mathcal{Q}_t(\cdot|0\in\xi)=
    \operatorname{Ren}(\delta_0,\mu_t)$;

\item[(v)] $\lim_{t\to\infty}\nu_t=\nu_\infty$ and $\lim_{t\to
    \infty}\mu_t=\mu_\infty$ weakly.
\end{longlist}
\end{Lemma}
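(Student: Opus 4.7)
The plan is to exploit the universal coupling described in Section \ref{universal}, which realizes simultaneously on a common probability space the OCPs for all initial configurations $\xi_0\in\cN(d_{\rm min})$, using a family of independent marks (exponential alarm clocks with the rates $\l_\ell(d)$, $\l_r(d)$, $\l_a(d)$) attached to each initial point of $\xi_0$. Under this coupling, $\xi(t)$ is an explicit measurable function $F_t(\xi_0,\text{marks})$. The crucial structural property, thanks to (A1)--(A2), is that merges only produce \emph{inactive} intervals, so every domain of $\xi(t)$ is the union of a contiguous block of original $\xi_0$-domains, whose identity depends only on the lengths and marks of that block together with those of its immediate active neighbours.

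For the preservation statements (i)--(iv), I would fix a reference point $y$ (e.g.\ $y=x_0$ for right renewal, $y=0$ for $\text{Ren}(\mu\tc 0)$, or a generic point of $\xi_0$ under the Palm measure in the stationary cases) and study the event $\{y\in\xi(t)\}$ together with the first few domains of $\xi(t)$ on either side of $y$. By the locality just described, these are determined by the $\xi_0$-intervals and marks in an almost surely finite block around $y$. In the renewal setting the initial intervals $\{d_k\}$ are i.i.d.\ and independent of $x_0$ (respectively of the origin), and the marks are independent across intervals, so conditioning on $\{y\in\xi(t)\}$ decouples the configurations to the left and to the right of $y$ and identifies the common law of consecutive spacings in $\xi(t)$ as some $\mu_t$; this yields the renewal property and defines the law $\nu_t$ of the first surviving point. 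Items (ii) and (iii) follow by combining this argument with the fact that the universal coupling commutes with $\bbR$-- and $\bbZ$--translations, so translation invariance passes from $\xi_0$ to $\xi(t)$; the Palm identity characterizing $\text{Ren}(\mu)$ is preserved by a direct verification using the locality of $F_t$.

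Item (v) is then a direct consequence of Lemma \ref{volarelontano}: on the coupling space, $x_0(t)$ and $d_1(t)=x_1(t)-x_0(t)$ are eventually constant as $t\to\infty$ (equal to the corresponding quantities computed for $\xi(\infty)$), so the weak convergences $\nu_t\to\nu_\infty$ and $\mu_t\to\mu_\infty$ follow by dominated convergence applied to bounded continuous test functions. The main obstacle is the locality claim underpinning the second paragraph: an inactive domain produced by a merge can itself be further absorbed by a still-active neighbour, so a priori a single active domain may trigger a cascade of arbitrarily many absorptions. One must therefore verify that within any finite time window $[0,t]$ the cluster of original points that merge around a fixed reference site is almost surely finite. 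In the renewal setting this follows from a geometric estimate: each absorption consumes a fresh exponential clock of bounded rate acting on an interval of length at least $d_{\rm min}$, so the probability that the cascade reaches the $k$-th original neighbour decays geometrically in $k$, giving the desired almost-sure finiteness and justifying the independence splitting on which the renewal preservation rests.
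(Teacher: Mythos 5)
Your proposal is correct and follows the same route that the paper indicates (and then explicitly omits), namely the universal graphical coupling of Section \ref{universal} combined with the observation that assumptions (A1)--(A2) force every merger to produce an inactive domain. Two small points of imprecision are worth noting, neither fatal: the auxiliary clocks in the coupling are attached to the \emph{domains} of the initial configuration rather than to its points, and the claim that the identity of a domain of $\xi(t)$ is determined by ``the block together with its immediate active neighbours'' is a bit too local as stated --- the correct statement is that it is determined by the a.s.\ finite connected component of the percolation graph $\cW_t[\omega,\zeta]$, which is precisely the cluster-finiteness property you address in your last paragraph (and it is a property of the bounded clock rates alone, independent of the renewal assumption on $\mu$). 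The decoupling across a surviving reference point $y$ that you invoke rests on the further observation that, as long as $y$ is present, the coupled dynamics strictly to the left of $y$ and strictly to the right of $y$ never interact, so $\{y\in\xi(t)\}$ factors into independent left and right events; this is implicit in your argument but would deserve being spelled out in a complete write-up.
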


Thanks to the previous results $\xi(\infty)$, $\mu_\infty$ and
$\nu_\infty$ are well defined. In fact there exists a~recursive
identity between the Laplace transform of the interval law, and of the
first point law, at time $t=0$ and at time $t=\infty$. These
identities, stated in Theorems \ref{differisco}~and~\ref{differiscobis}
below, will be the keystones of the analysis of the asymptotic of the
hierarchical coalescence process.

Given a~probability measure $\mu$ on $[d_{\min}, \infty)$, let $\mu_t$
be as in Lemma \ref{vecchioteo1}. Then, for $s \in\mathbb{R}_+$, define
\[
G_t (s) = \int e^{-s x} \mu_t (dx),\qquad
H_t (s) = \int_{[d_{\min}, d_{\max} )} e^{-sx}
\mu_t (dx).
\]

\begin{Theorem}[(Recursive identities for the interval law)]\label{differisco}
For any $s \in\mathbb{R}_+$, the functions $[0,\infty)\ni t \mapsto G_t(s),
H_t (s) $ are differentiable and satisfy
%
\begin{eqnarray}
\partial_t H_t(s)&=& -\int\mu_t (dx)
\lambda(x) e^{-s x}, \label{davide}
\\
\partial_t \bigl[G_t(s)-H_t(s) \bigr] &=&
G_t(s)\int \mu_t(dx) (
\lambda_\ell+\lambda_r) (x) e^{-sx}
\nonumber\\[-8pt]\label{orecchiettebis} \\[-8pt]
&&{}  + G_t(s)^2 \int \mu_t(dx)\lambda_a(x)
e^{-sx}.\nonumber
\end{eqnarray}
In particular, it holds:
\begin{longlist}[(ii)]
\item[(i)] If $\lambda_a\equiv0$, then $\partial_t G_t (s)= \partial_t
    H_t(s) (1- G_t(s))$.
Hence,
%
\begin{eqnarray}
1-G_t(s) &=& \bigl(1-G_0(s)\bigr) e ^{H_0(s) - H_t(s) },
\qquad t\in\mathbb {R}_+, \label{caviale}
\\
1-G_\infty(s)&=& \bigl(1-G_0(s)\bigr) e
^{H_0(s) }. \label{uovone}
\end{eqnarray}

\item[(ii)] If $\lambda_\ell+\lambda_r \equiv\gamma\lambda_a$ for some
    $\gamma\geq0$, then
\[
\partial_t G_t (s)= \partial_t H_t(s)  \biggl( 1 -
    \frac{G_t(s)(\gamma+ G_t(s))}{1+\gamma} \biggr).
\]
Hence, for $s>0$ it holds
%
\begin{eqnarray}
&&  e^{-((\gamma+2)/(\gamma+1)) H_t(s)} \frac{\gamma+1+G_t(s)
}{1-G_t(s)}
\nonumber\\[-8pt]\label{salsa} \\[-8pt]
&&\qquad = e^{-((\gamma+2)/(\gamma+1)) H_0(s)} \frac{\gamma+1+G_0(s) }{1-G_0(s)},\qquad
t\in\mathbb{R}_+,\nonumber
\\
&& \frac{\gamma+1+G_\infty(s) }{1-G_\infty(s)} = e^{-((\gamma
+2)/(\gamma+1)) H_0(s)} \frac{\gamma+1+G_0(s) }{1-G_0(s)}. \label{salsona}
\end{eqnarray}
\end{longlist}
\end{Theorem}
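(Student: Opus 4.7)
The idea is to apply the generator $\cL$ of (\ref{santi}) to domain--sum observables and exploit the renewal property preserved by the OCP (Lemma~\ref{vecchioteo1}). It is enough to prove (\ref{davide}) and (\ref{orecchiettebis}) starting from a $\bbZ$--stationary (or stationary) renewal SPP with interval law $\mu$: by Lemma~\ref{vecchioteo1} the resulting $\mu_t$ depends only on $\mu$, so the ODEs transfer to every other starting law.

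For bounded measurable $g:[d_{\min},\infty)\to\bbR$ and $L>0$, set
\[
F_g^L(\xi)\;=\;\sum_{[x,x+d]\text{ domain of }\xi,\;x\in[0,L)} g(d),
\]
and write $\lambda_t$ for the intensity of $\cQ_t$. By stationarity and the renewal property, $\bbE_{\cQ_t}[F_g^L] = L\lambda_t \int g(d)\,\mu_t(dd)$; in particular the choices $g\equiv 1$, $g(d)=e^{-sd}$ and $g(d)=e^{-sd}\mathbf{1}_{[d_{\min},d_{\max})}(d)$ yield $L\lambda_t$, $L\lambda_t G_t(s)$ and $L\lambda_t H_t(s)$. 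Applying (\ref{santi}) to $F_g^L$: every merge initiated by an active domain of length $d$ deletes $g(d)$ and $g$ at the absorbed neighbour length(s) and inserts $g$ at the new length $d+d_L$, $d+d_R$ or $d+d_L+d_R$, which by (A2) lies in $[d_{\max},\infty)$, hence vanishes when $g$ is supported in $[d_{\min},d_{\max})$. Taking expectations, factoring out the neighbour lengths (i.i.d.\ $\mu_t$ and independent of $d$ by the renewal property), and letting $L\to\infty$ to discard the $O(1)$ boundary terms, I obtain $\partial_t\lambda_t = -\lambda_t\int\mu_t(dd)\bigl[(\lambda_\ell+\lambda_r)(d)+2\lambda_a(d)\bigr]$ (one point lost per left/right merge, two per absorb merge), together with explicit expressions for $\partial_t(\lambda_t G_t(s))$ and $\partial_t(\lambda_t H_t(s))$. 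Subtracting $G_t\partial_t\lambda_t$ and $H_t\partial_t\lambda_t$, cancelling $\lambda_t$, and rewriting $\partial_t(G_t-H_t)$ yield (\ref{davide}) and (\ref{orecchiettebis}).

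The two special cases reduce to elementary integration. In (i), (\ref{davide}) and (\ref{orecchiettebis}) combine into $\partial_t G_t = (1-G_t)\partial_t H_t$, i.e.\ $-\partial_t\log(1-G_t)=\partial_t H_t$, which integrates to (\ref{caviale}); (\ref{uovone}) follows as $t\to\infty$ because $H_\infty(s)=0$ (since the support of $\mu_\infty$ lies in $[d_{\max},\infty)$). In (ii) the combination gives $\partial_t G_t = \partial_t H_t\cdot(1-G_t)(\gamma+1+G_t)/(\gamma+1)$, and the partial fraction $\frac{1}{(1-G)(\gamma+1+G)} = \frac{1}{\gamma+2}\bigl[\frac{1}{1-G}+\frac{1}{\gamma+1+G}\bigr]$ rewrites this as $\partial_t\log\frac{\gamma+1+G_t}{1-G_t} = \frac{\gamma+2}{\gamma+1}\partial_t H_t$, yielding (\ref{salsa}) and then (\ref{salsona}) at $t\to\infty$.

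The main obstacle is the rigorous use of $\cL$ on the non--local functional $F_g^L$: one must know that $F_g^L$ is in the domain of $\cL$, so that $\bbE[F_g^L(\xi(t))]$ is differentiable with derivative $\bbE[\cL F_g^L(\xi(t))]$, and one must control the boundary contributions uniformly in $L$. This is where the precise description of $\cL$ (Theorem~\ref{amico_marco_zero}) and the universal coupling of Section~\ref{universal}, both developed later in the paper, are invoked; once the generator identity is in place, everything else is direct algebra and ODE integration.
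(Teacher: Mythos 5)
Your proposal takes a genuinely different route from the paper. The paper's Proposition 4.1 starts from $\text{Ren}(\delta_0,\mu)$ — which exists for \emph{every} interval law $\mu$ on $[d_{\min},\infty)$ — and recovers $\mu_t(f)$ as the conditional expectation $N_t/D_t$ with $N_t=\bbE_{\cQ}[f(x_1(\xi(t)));\,0\in\xi(t)]$ and $D_t=\bbP_\cQ(0\in\xi(t))$; differentiating the quotient requires exhibiting explicit observables $\Phi,\Psi\in\bbD$ (Lemma 4.2, a mollified indicator and a mollified version of $f(x_1(\xi))\mathds{1}_{0\in\xi}$) so that Corollary 2.8 applies, and yields the ODE \eqref{basilea}. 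You instead use the stationary picture, Palm/intensity calculus, and the domain--sum observable $F_g^L$; the derivation of \eqref{davide}--\eqref{orecchiettebis} that results is cleaner in that it avoids the quotient and makes the gain/loss bookkeeping transparent, and the identity $\bbE_{\cQ_t}[F_g^L]=L\lambda_t\int g\,d\mu_t$ together with the one--point and two--point Palm formulas does produce exactly the rates in \eqref{davide}--\eqref{orecchiettebis} once one indexes the merge terms by the surviving left endpoint. The ODE integrations in cases (i) and (ii) are the same as the paper's and are correct.

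There is, however, a genuine gap in the very first reduction. A stationary or $\bbZ$--stationary renewal SPP with interval law $\mu$ exists \emph{only when $\mu$ has finite mean} (the paper states this explicitly, citing \cite{DV} Prop.~4.2.I). But Theorem \ref{differisco} must hold for arbitrary $\mu$ on $[d_{\min},\infty)$: it is invoked in Theorem \ref{teo2} for laws with $\mu((x,\infty))=x^{-\alpha}L(x)$, $\alpha\in[0,1)$, which have infinite mean. For such $\mu$ there is no stationary process to start from, so ``it is enough to prove it starting from a stationary renewal SPP'' is not available. You would have to either work directly with $\text{Ren}(\delta_0,\mu)$ — where the Palm/intensity identity you rely on no longer applies and one is pushed back to the paper's $N_t/D_t$ device — or prove the ODE for truncated, finite--mean approximants and establish a continuity/stability result for $\mu\mapsto\mu_t$ allowing a passage to the limit, which you neither state nor prove. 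Separately, the technical issues you flag in your last paragraph are real and not automatically absorbed by Theorem \ref{amico_marco_zero}: $F_g^L$ is \emph{not} continuous in the vague topology (it jumps when a point crosses $0$ or $L$), so it is not in $\bbB$, let alone in $\bbD$; and for the $H_t$ equation the integrand $g(d)=e^{-sd}\mathds{1}_{[d_{\min},d_{\max})}(d)$ is itself discontinuous, which introduces a second source of discontinuity. Both require a mollification step analogous to the paper's constructions of $\Phi,\Psi$ and $f_\e$ before Corollary \ref{corri} can be applied.
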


In the above theorem, as in the rest of the paper, differentiability at
$t=0$ for a~function on $[0,\infty)$ means differentiability from the
right.
%
\begin{remark}\label{pasquetta}
The restriction to the above cases \textup{(i)} and \textup{(ii)} is
technical and motivated by the following. Set
\[
a_t(s):= \int\mu_t(dx) (\lambda_\ell+
\lambda_r) (x) e^{-sx} \quad \mbox{and} \quad
b_t(s):= \int\mu_t(dx) \lambda_a(x)
e^{-sx}.
\]
Thanks to \eqref{davide}, equation \eqref{orecchiettebis} can be
rewritten as
%
\begin{equation}
\label{serioso} \partial_t G_t (s)= -a_t(s) -
b_t(s) + a_t (s) G_t (s)+ b_t(s)
G_t (s)^2.
\end{equation}
Fixing functions $A_t (s)$ and $B_t(s)$ such that $\partial_t
A_t(s)=a_t(s)$ and $\partial_t B_t(s)=b_t(s)$, \eqref{serioso} leads to
%
\begin{equation}
\label{fulmini} \frac{ e^{A_t(s)+2 B_t(s)} }{1-G_t(s)}= \frac{ e^{A_0(s)+2B_0(s) }
}{1-G_0(s)} +\int_0
^t b_u(s) e^{A_u(s)+2 B_u(s) }\,du.
\end{equation}
In order to have a~recursive identity between $(G_0,H_0)$ and
$G_\infty$, one needs to find an explicit expression of the integral on
the right-hand side of \eqref{fulmini}. This can be achieved in cases
\textup{(i)} and \textup{(ii)} of Theorem \ref{differisco} by taking
$B_t(s)=b_t(s)=0$ and $A_t(s)=-H_t(s)$ in case \textup{(i)}, and by
taking $A_t(s)=\gamma B_t(s)$ and $A_t(s)+B_t(s)=-H_t (s)$ in
case~\textup{(ii)}.

Finally we point out that, since $\operatorname{arctanh}(x)=
\frac{1}{2} \ln \frac{1+x}{1-x}$ for $x \in(-1,1)$, \eqref{salsa} with
$\gamma=0$ can be rewritten in the more compact form
\[
-H_t + \operatorname{arctanh}G_t(s)= -H_0 +
\operatorname{arctanh} G_0(s).
\]
\end{remark}

The next result is concerned with the evolution of the first point law
$\nu_t$ when starting with a~SPP having law $\operatorname{Ren}(\nu,
\mu)$ (recall Lemma \ref{vecchioteo1}). First, we observe that if $\xi$
is a~SPP with law $\operatorname{Ren}( \delta_0, \mu)$, and $V$ is
a~random variable with law $\nu$ independent\vadjust{\goodbreak} from $\xi$, then the
translated random subset $\{x +V\dvtx  x \in\xi\}\subset\mathbb{R}$ is
a~SPP with law $\operatorname{Ren} (\nu, \mu)$. This simple observation
and the definition of the OCP, whose dynamics depends only on the
sequence of the domain lengths and not on the specific location of the
domains, allow us to conclude that $\nu_t$ is the convolution
%
\begin{equation}
\label{convoluto} \nu_t= \bar\nu_t \ast\nu,
\end{equation}
where $ \bar \nu_t$ denotes the evolution at time $t$ of the first
point law when starting from a~SPP having law
$\operatorname{Ren}(\delta_0, \mu)$. Hence, without loss we can
restrict our analysis to this case.

\begin{Theorem}[(Recursive identities for the first point law)]\label{differiscobis}
Assume that $\nu= \delta_0$. Then, for any $s \in\mathbb{R}_+$ the Laplace
transform
\[
[0,\infty)\ni t \mapsto L_t(s):= \int e^{-s x}
\nu_t (x) \in(0,1]
\]
is differentiable and satisfies
%
\begin{eqnarray}\label{bucatini}
\frac{\partial_t L_t(s)}{L_t(s)} &=& - \int \mu_t(dy) \bigl(
\lambda_\ell(y) + \lambda_a(y) \bigr) + \int
\mu_t(dy) \lambda_\ell(y)e^{-sy} 
\nonumber\\[-8pt]\\[-8pt]
&&{} +G_t(s) \int \mu_t(dy) \lambda_a(y)e^{-sy}.\nonumber
\end{eqnarray}
In particular, it holds:
\begin{longlist}[(ii)]
\item[(i)] If $\lambda_a\equiv0$ and $\lambda_r \equiv\gamma\lambda
    _\ell$ for some\vspace*{1pt} constant $\gamma\geq0$, then it holds $\partial_t
    L_t (s)= \frac{L_t(s)}{1+\gamma}  ( \partial_t H_t (0) -
\partial_t H_t(s)  )$.
Hence
%
\begin{eqnarray}
\qquad L_t (s)&=& L_0(s)\exp \biggl\{\frac{-H_t(s) + H_t(0) + H_0(s) -
H_0(0)}{1+\gamma}
\biggr\},\qquad t\in\mathbb{R}_+, \label
{monte1}
\\
L_\infty(s)&=& L_0(s) \exp \biggl\{\frac{H_0(s)-H_0(0)}{1+\gamma
}
\biggr\}. \label{monte2}
\end{eqnarray}
If $\lambda_a \equiv0$ and $ \lambda_\ell\equiv0$, then trivially
$L_t(s)=L_0(s)$ for any $t \geq0$. 

\item[(ii)] If $\lambda_\ell\equiv0$ and $\lambda_r \equiv0$, then
    $\partial_t L_t (s)=L_t(s) (\partial_t H_t(0) - G_t(s) \partial_t
    H_t(s))$.
Hence for $s>0$ it holds
%
\begin{eqnarray}
L_t(s)&=& L_0(s) \sqrt{ \frac{1-G_t^2(s)}{1-G_0^2(s)} }
e^{H_t(0)-H_0(0)},\qquad t\in\mathbb{R}_+,\label{gransasso1}
\\
L_\infty(s)&=& L_0(s)\sqrt{\frac{1-G_\infty^2(s)}{1-G_0^2(s)}}
e^{-H_0(0)}. \label{gransasso2}
\end{eqnarray}
\end{longlist}
\end{Theorem}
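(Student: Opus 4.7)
The plan is to derive \eqref{bucatini} by applying the infinitesimal generator to the family of bounded observables $g_s(\xi):=e^{-sx_0(\xi)}$, where $x_0(\xi)$ is the leftmost point of $\xi$ (well defined $\cQ_t$--a.s.\ by Lemma \ref{vecchioteo1}). The first step is to compute $\cL g_s$ explicitly. Since the leftmost domain $[-\infty,x_0]$ has infinite length it is inactive (it lies outside $[d_{\rm min},d_{\rm max})$), so the only terms in \eqref{santi} that can change $x_0$ are those attached to the first finite domain $[x_0,x_1]$ of length $d_1$: the $\l_\ell(d_1)$ move replaces $x_0$ by $x_1$, the $\l_a(d_1)$ move replaces $x_0$ by $x_2$, and the $\l_r(d_1)$ move leaves $x_0$ untouched. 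All other domain events involve only points $x_k$ with $k\geq 1$ and thus do not alter $g_s$. This gives
\[
\cL g_s(\xi)=\l_\ell(d_1)\bigl(e^{-sx_1}-e^{-sx_0}\bigr)+\l_a(d_1)\bigl(e^{-sx_2}-e^{-sx_0}\bigr),
\]
which is uniformly bounded in $\xi$ for $s\geq 0$. Appealing to the characterization of the generator (Theorem \ref{amico_marco_zero}) and Dynkin's formula, $L_t(s)=L_0(s)+\int_0^t\bbE_{\cQ_u}[\cL g_s]\,du$, with continuous integrand; in particular $t\mapsto L_t(s)$ is differentiable.

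Next I would invoke the structural result of Lemma \ref{vecchioteo1}: starting from $\text{Ren}(\d_0,\mu)$, the SPP $\xi(t)$ has law $\text{Ren}(\bar\nu_t,\mu_t)$, so that $x_0$, $d_1=x_1-x_0$ and $d_2=x_2-x_1$ are independent with $x_0\sim\bar\nu_t$ and $d_1,d_2$ i.i.d.\ with law $\mu_t$. Factoring the expectation of $\cL g_s$ using this independence and recognizing $G_t(s)=\int e^{-sy}\mu_t(dy)$ yields directly
\[
\partial_t L_t(s)=L_t(s)\Bigl\{-\!\!\int\!\!\mu_t(dy)(\l_\ell+\l_a)(y)+\!\!\int\!\!\mu_t(dy)\l_\ell(y)e^{-sy}+G_t(s)\!\!\int\!\!\mu_t(dy)\l_a(y)e^{-sy}\Bigr\},
\]
which is exactly \eqref{bucatini}.

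For case (i), substitute $\l_a\equiv 0$ and $\l_\ell=\l/(1+\g)$; using \eqref{davide} evaluated at $s$ and at $0$, the bracket collapses to $\frac{1}{1+\g}\bigl(\partial_t H_t(0)-\partial_t H_t(s)\bigr)$, and integrating gives \eqref{monte1}. Passing to $t=\infty$ we invoke Lemma \ref{volarelontano}: since $\xi(\infty)\in\cN(d_{\rm max})$, $\mu_\infty$ is supported on $[d_{\rm max},\infty)$, hence $H_\infty\equiv 0$, which yields \eqref{monte2}. The trivial case $\l_a\equiv\l_\ell\equiv 0$ is immediate from \eqref{bucatini}.

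For case (ii), $\l_\ell=\l_r\equiv 0$ gives $\partial_t L_t/L_t=\partial_t H_t(0)-G_t(s)\partial_t H_t(s)$. The key step, and the main obstacle, is integrating the non-linear term $G_t\partial_t H_t$; here I would use Theorem \ref{differisco}(ii) with $\g=0$, equivalently the compact identity $-H_t(s)+\operatorname{arctanh} G_t(s)=\text{const}$ of Remark \ref{pasquetta}. Differentiating in $t$ yields $\partial_t H_t(s)=\partial_t G_t(s)/(1-G_t(s)^2)$, and therefore
\[
-G_t(s)\partial_t H_t(s)=-\frac{G_t(s)\partial_t G_t(s)}{1-G_t(s)^2}=\tfrac12\partial_t\ln\bigl(1-G_t(s)^2\bigr).
\]
Substituting and integrating in $t$ yields \eqref{gransasso1}, and then $t\to\infty$ combined with $H_\infty(0)=0$ gives \eqref{gransasso2}. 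Beyond the algebraic rearrangements, the only genuinely delicate point is justifying the Dynkin step for $g_s$; this is taken care of by the rigorous description of $\cL$ provided by Theorem \ref{amico_marco_zero} together with the uniform boundedness of $g_s$ and $\cL g_s$.
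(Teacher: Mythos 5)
Your algebraic manipulations in cases (i) and (ii) are correct and match the paper exactly, and the identification of which terms in the generator act on $x_0$ (only the events attached to the first finite domain $[x_0,x_1]$, via $\l_\ell,\l_a$) is the right intuition. The gap is at the very foundation of the argument: the observable $g_s(\xi)=e^{-s x_0(\xi)}$ is \emph{not} continuous on $\cN(d_{\rm min})$ with the vague topology, hence not in $\bbB$, let alone in $\bbD$, so Theorem \ref{amico_marco_zero} and Dynkin's formula do not apply to it. Concretely, take $\xi_n=\{-n\}\cup\{1,2,3,\dots\}\cdot d_{\rm min}$; then $\xi_n\to\xi:=\{d_{\rm min},2d_{\rm min},\dots\}$ vaguely, but $x_0(\xi_n)=-n\to-\infty$ while $x_0(\xi)=d_{\rm min}$, so $g_s(\xi_n)\not\to g_s(\xi)$. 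Uniform boundedness of $g_s$ and of the formal series for $\cL g_s$ does not substitute for continuity: the entire semigroup/core machinery of Section~8 is built on the Feller framework in $\bbB$.

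This is precisely the obstacle the paper's Proposition~\ref{vongole} is designed to remove. It constructs a genuine member $\Theta$ of $\bbD$ (by averaging, over a small translation window, the cut-off observables $f_s(\xi)=f\bigl(\text{first point of }\xi\cap(s,\infty)\bigr)$, exactly as in Lemma~\ref{tosse}) which is continuous by construction and agrees with $f(x_0(\cdot))$ on the set $\cN_*$ carrying the full mass of the trajectory. One then differentiates $\bbE_\cQ[\Theta(\xi(t))]$ via Corollary~\ref{corri}, and only afterwards evaluates $\cL\Theta$ on $\cN_*$, where it collapses to the expression you wrote. Your proposal skips this step and, as stated, is not rigorous. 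Once Proposition~\ref{vongole} (or an equivalent continuity-repair device) is in place, the remainder of your argument — the factorization using $\text{Ren}(\bar\nu_t,\mu_t)$ from Lemma~\ref{vecchioteo1}, the substitutions via \eqref{davide}, the use of the $\operatorname{arctanh}$ identity from Remark~\ref{pasquetta} to integrate the nonlinear term, and the vanishing $H_\infty\equiv 0$ — is correct and coincides with the paper's proof.
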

We point out that cases \textup{(i)} and \textup{(ii)} of Theorem
\ref{differiscobis} are included into (but not equal to) cases
\textup{(i)} and \textup{(ii)} of Theorem \ref{differisco}.\vadjust{\goodbreak}


The previous results are based on our analysis of the Markov generator
$\mathcal{L}$ of the OCP. In general, the expected value at time $t$ of
a regular observable (i.e., a~test function in the domain of
$\mathcal{L}$) evolves according to an ordinary differential equation
that we describe below. We first fix some notation. Given $k
\in\mathbb{Z}$ we set
%
\begin{equation}
I_k:= \cases{ \bigl[k d_{\min}, (k+1) d_{\min} \bigr),
&\quad if $k \geq1$,
\vspace*{3pt}\cr
\bigl(k d_{\min}, (k+1) d_{\min}
\bigr), &\quad if $k=0$,
\vspace*{3pt}\cr
\bigl(k d_{\min}, (k+1) d_{\min} \bigr],
&\quad if $k\leq-1$.}
\end{equation}
%
Given $\xi\in\mathcal{N}( d_{\min} )$, we set for $k \in \mathbb{Z}$
and $k<k'$ in $\mathbb{Z}$,
\begin{eqnarray*}
 \xi^k&:=& \xi\setminus I_k,\qquad \nabla_k f (\xi):= f\bigl( \xi^k\bigr)-f(\xi),
\\
\xi^{k, k'}&:=& \xi\setminus(I_k \cup I_{k'}),
\qquad \nabla_{k,k'} f(\xi):= f\bigl(\xi^{k,k'} \bigr) - f (\xi).
\end{eqnarray*}
We define
\[
\mathcal{R}:= \mathbb{Z}\cup\bigl\{ \bigl(k,k'\bigr)\dvtx
k' \in \bigl\{k+1,\ldots,k+\lceil d_{\max}/d_{\min}
\rceil\bigr\}, k,k'\mbox{ in } \mathbb{Z}\bigr\},
\]
$\lceil a~\rceil$ being the smallest integer $n\geq a$. We consider the
space $\mathcal{N}( d_{\min})$ endowed of the vague topology (see
Section~\ref{dugundi}), making it a~compact space. We write
$\mathbb{B}$ for the Banach space of all bounded continuous functions
$f\dvtx\break  \mathcal {N}(d_{\min}) \mapsto\mathbb{R}$ endowed with the
uniform norm that we denote by $\| \cdot\|$. Also, and for later
purpose, we let $\mathbb{B}_{\mathrm{loc}}$ be the set of functions $f
\in\mathbb{B}$ that are local, that is, such that there exists
a~bounded interval $I \subset\mathbb{R}$ with $f(\xi)=f(\xi\cap I)$ for
all $\xi\in\mathcal{N}(d_{\min} )$. Then, similar to the analysis of
interacting particle systems \cite{L}, we define
\[
\Delta_f (r):= \sup_{ \xi\in\mathcal{N}(d_{\min}) } \bigl|
\nabla_r f (\xi) \bigr|, \qquad f \in\mathbb{B}, r \in\mathcal{R}
\]
%
and we introduce the subset $\mathbb{D}$ of $ \mathbb{B}$ as
%
\begin{equation}
\label{ddd} \mathbb{D}:= \biggl\{ f \in\mathbb{B}\dvtx  |\!|\!|f|\!|\!|:= \sum
_{r \in\mathcal{R}} \Delta_f (r) < \infty \biggr\}.
\end{equation}
Observe that $\mathbb{B}_{\mathrm{loc}} \subset\mathbb{D}$. The
following result characterizes completely the Markov generator of the
OCP:

\begin{Theorem}\label{amico_marco_zero}
The subspaces $\mathbb{B}_{\mathrm{loc}}$ and $\mathbb{D}$ are a~core
of the Markov generator~$\mathcal{L}$, that is, $\mathcal{L}$ is the
closure of the operator obtained by restriction to
$\mathbb{B}_{\mathrm{loc}}$ or to $\mathbb{D}$. Moreover, if $f \in
\mathbb{D}$, $\mathcal{L}f(\xi)$ equals the absolutely convergent
series on the right-hand side of \eqref{santi}.
\end{Theorem}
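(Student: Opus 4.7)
The strategy is to follow Liggett's construction of Markov pregenerators for interacting particle systems \cite{L}, adapted to the point--process setting via the lattice of cells $\{I_k\}_{k\in\bbZ}$. I first establish absolute convergence of the series in \eqref{santi} on $\bbD$. Because every $\xi \in \cN(d_{\rm min})$ has points separated by at least $d_{\rm min}$, each cell $I_k$ contains at most one point of $\xi$; hence for every active domain $[x,x+d]$ in $\xi$ with $x\in I_k$ and $x+d\in I_{k'}$, assumption (A1) forces $0<k'-k\leq \lceil d_{\rm max}/d_{\rm min}\rceil$, so $(k,k')\in\cR$, and the three differences $f(\xi\setminus\{x\})-f(\xi)$, $f(\xi\setminus\{x+d\})-f(\xi)$, $f(\xi\setminus\{x,x+d\})-f(\xi)$ coincide with $\nabla_k f(\xi)$, $\nabla_{k'}f(\xi)$, $\nabla_{k,k'}f(\xi)$ respectively. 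Since each index $k\in\bbZ$ appears as endpoint in at most two active domains, I obtain the uniform bound $|\cL f(\xi)|\leq 3\|\lambda\|_\infty |||f|||$, which yields absolute convergence and shows that the operator $\cL_0:\bbD\to\bbB$ defined by the series is well--posed. Continuity of $\cL_0 f$ on $\cN(d_{\rm min})$ in the vague topology follows from continuity of each summand in $\xi$ (a small vague perturbation preserves the indexing cells and moves each retained point only slightly) together with the uniform summability just established.

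I then verify the hypotheses of the Hille--Yosida theorem for $\cL_0$. Density of $\bbB_{\rm loc}$ in $\bbB$ follows from compactness of $\cN(d_{\rm min})$ in the vague topology (cf.\ Section \ref{dugundi}) and a Stone--Weierstrass argument applied to the algebra of local continuous functions; the inclusion $\bbB_{\rm loc}\subset\bbD$ is immediate, since only finitely many $r\in\cR$ contribute to $|||f|||$ for a local $f$. The positive maximum principle is clear: if $f\in\bbD$ attains its supremum at $\xi^\star$, every $\nabla_r f(\xi^\star)\leq 0$, hence $\cL_0 f(\xi^\star)\leq 0$. The crucial point is the range condition, namely density of $(\lambda-\cL_0)\bbD$ in $\bbB$ for some (equivalently every) $\lambda>0$. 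I would proceed by truncation, letting $\cL^{(n)}$ be the operator obtained by restricting the sum in \eqref{santi} to domains whose left endpoint lies in $I_k$ with $|k|\leq n$. Each $\cL^{(n)}$ is a bounded Markov pregenerator on $\bbB$, hence generates a Markov semigroup $S^{(n)}_t$; the estimate of the previous paragraph gives $\cL^{(n)}f\to\cL_0 f$ uniformly for $f\in\bbD$, while the universal coupling of Section \ref{universal} (which realizes all OCPs on one probability space) gives $S^{(n)}_t f\to S_t f$, where $S_t$ is the OCP semigroup afforded by Lemmas \ref{volarelontano}--\ref{vecchioteo1}. The Trotter--Kurtz approximation theorem \cite[Chapter I]{L} then identifies the closure of $\cL_0$ with the generator of $S_t$, namely $\cL$, and the final assertion $\cL f = \cL_0 f$ for $f\in\bbD$ follows by graph--norm continuity. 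Finally, $\bbB_{\rm loc}$ is graph--dense in $\bbD$ by truncation of the cells that contribute to $|||f|||$, so $\bbB_{\rm loc}$ is a core as well.

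The hardest step will be the range density, or equivalently the semigroup convergence $S^{(n)}_t \to S_t$ in a form strong enough for Trotter--Kurtz. The vague topology is coarse and one must exclude that rearrangements of points far from the origin produce $O(1)$ sup--norm fluctuations of $S^{(n)}_t f - S_t f$. Here the lattice structure $\{I_k\}$ is decisive: coding $\xi$ through its restriction to the cells turns the OCP into a jump process on the product space $\prod_{k\in\bbZ}(I_k\cup\{\ast\})$ with bounded, finite--range single--site variation, at which point the estimates of \cite[Chapter I]{L} become directly available once the condition $|||f|||<\infty$ is used. This is precisely the reason why $\bbD$, rather than $\bbB_{\rm loc}$ alone, is the natural enlargement in which to carry out the closure argument.
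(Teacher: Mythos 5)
Your proposal correctly identifies the skeleton of the argument (pregenerator, range condition, core, graph--density of $\bbB_{\rm loc}$ in $\bbD$) and the truncation scheme $\cL^{(n)}$, but the way you propose to close the range condition has a genuine logical gap. You want to deduce that the closure of $\cL_0|_{\bbD}$ equals the OCP generator by combining the generator convergence $\cL^{(n)}f \to \cL_0 f$ with the semigroup convergence $S^{(n)}_t f \to S_t f$ and invoking Trotter--Kurtz. But the Trotter--Kurtz/approximation theorems (Liggett~\cite[Ch.~I, Thm.~2.12]{L}, or Ethier--Kurtz Ch.~1, Thm.~6.1) treat the range condition as an \emph{input}: they require knowing in advance either that $\bbD$ is a core, or that $(\lambda-\cL_0)\bbD$ is dense, before drawing any conclusion about semigroup limits. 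Pointwise semigroup convergence plus pointwise generator convergence on a dense subspace do get you $\bbD \subset \cD(\cL)$ and $\cL|_{\bbD} = \cL_0$ (by passing to resolvents), but they do not by themselves yield that $\bbD$ is a core: you would also need, for instance, that the truncated resolvents $(\lambda - \cL^{(n)})^{-1}$ map $\bbD$ into $\bbD$ with uniform control, which is precisely the quantitative estimate you have not supplied.

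What actually closes the range condition is the Liggett--type a~priori estimate: if $f - \lambda\bbL f = g$ with $f\in\bbD$, then $\Delta_f(r) \leq \Delta_g(r) + \lambda\sum_{r'\ne r}\gamma(r,r')\Delta_f(r')$, where $\gamma(r,r') = \sup_{\xi}|c_{r'}(\xi^r) - c_{r'}(\xi)|$ is the influence matrix. The crucial structural facts making this usable are (a)~$\gamma$ is uniformly bounded and has finite range in $\cR$ (because $\lambda_\ell,\lambda_r,\lambda_a$ vanish for $d\ge d_{\rm max}$), so the associated operator $\Gamma$ is bounded on $\ell^1(\cR)$, and (b)~for small $\lambda$ one can invert $\mathds{1}-\lambda\Gamma$. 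Applied to the solutions $f_n$ of the \emph{truncated} equations $f_n - \lambda\bbL_n f_n = g$ (which exist because $\bbL_n$ is a bounded pregenerator), this gives a uniform control $\Delta_{f_n} \le (\mathds{1}-\lambda\Gamma)^{-1}\Delta_g$ and then $\|g - (f_n - \lambda\bbL f_n)\| \to 0$, hence range density. Your last paragraph gestures at these estimates but never states them, and your Trotter--Kurtz framing cannot substitute for them. A secondary, smaller issue: your claim that ``a small vague perturbation preserves the indexing cells'' is incorrect (a point may cross a cell boundary $kd_{\rm min}$ under an arbitrarily small perturbation); the paper avoids this by averaging the truncation over the cut location, defining $\bbL_n f(\xi) = d_{\rm min}^{-1}\int_{nd_{\rm min}}^{(n+1)d_{\rm min}} L_s f(\xi)\,ds$, which is what makes the approximants genuinely continuous on $\cN(d_{\rm min})$.
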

The proof is given in Section~\ref{infinitesimi}. 
Although this analysis represents our starting point, we prefer to
postpone it to the end since rather technical. As a~consequence of the
above theorem and standard theory of Markov generators, we get the
following characterization of the time evolution of expected
observables:
%
\begin{Corollary}\label{corri}
Given $f \in\mathbb{D}$, the map $f(t,\xi):=\mathbb{E}_\xi [
f(\xi_t)  ]$
(the expectation of $f$ for the OCP at time $t$ starting from $\xi$) is
differentiable in $t$ as function in~$\mathbb{B}$ and moreover $ \frac{d}{dt}
f(t, \cdot) = \mathcal{L}f$.
\end{Corollary}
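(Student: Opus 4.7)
The plan is simply to invoke standard Feller semigroup theory, with Theorem~\ref{amico_marco_zero} doing all the real work. The OCP is Markovian with c\`adl\`ag paths in the Skorohod space associated to the (vaguely) compact space $\cN(d_{\rm min})$, so the semigroup $S(t):\bbB\to\bbB$ given by $(S(t)f)(\xi):=\bbE_\xi[f(\xi_t)]=f(t,\xi)$ is a Markov semigroup of contractions. Theorem~\ref{amico_marco_zero} asserts precisely that the operator $\cL$ (obtained as the closure of the expression in \eqref{santi} acting on the core $\bbB_{\rm loc}$, or equivalently on $\bbD$) is the closed infinitesimal generator of $\{S(t)\}_{t\geq 0}$. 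In particular $\bbD\subset\mathrm{Dom}(\cL)$.

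Once this is in hand, the conclusion is an off-the-shelf semigroup identity: for every $g$ in the domain of the generator of a strongly continuous contraction semigroup, the map $t\mapsto S(t)g$ lies in $C^1([0,\infty);\bbB)$ with
\[
\frac{d}{dt}S(t)g=\cL S(t)g=S(t)\cL g\in\bbB,
\]
see for instance Liggett~\cite{L}, Chapter~I. Applied to $g=f\in\bbD$, this yields the asserted differentiability in $t$ as a function in $\bbB$ and the identity $\tfrac{d}{dt}f(t,\cdot)=\cL f$.

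The only nontrivial content hides in the word ``generator'' above: one needs $\{S(t)\}$ to be strongly continuous on $\bbB$ in the uniform norm (so that Hille--Yosida theory applies) and one needs the closure of \eqref{santi} on $\bbB_{\rm loc}$ to coincide with this generator rather than being merely a Markov pregenerator. Both facts are part of the content of Theorem~\ref{amico_marco_zero}, whose proof is postponed to Section~\ref{infinitesimi}; the key ingredients there are the compactness of $\cN(d_{\rm min})$ in the vague topology together with assumption (A1), which forces only finitely many domains of $\xi\cap I$ to have positive total rate and yields a uniform-in-$\xi$ bound on the probability of an early jump affecting a bounded window. Given Theorem~\ref{amico_marco_zero} as input, no further obstacle remains and the corollary follows from a single standard semigroup identity.
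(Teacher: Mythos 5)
Your proposal is correct and matches the paper's own (one-line) argument: the paper also derives the corollary by declaring it "a consequence of the above theorem and standard theory of Markov generators," which is exactly the Hille--Yosida reasoning you spell out (core $\bbD\subset\mathrm{Dom}(\cL)$ by Theorem~\ref{amico_marco_zero}, strong continuity from Proposition~\ref{note}, then $\tfrac{d}{dt}P_tf=\cL P_tf=P_t\cL f$). One small remark: the displayed identity you obtain, $\tfrac{d}{dt}S(t)f=S(t)\cL f$, is the correct (and later used) form; the corollary's phrasing ``$\tfrac{d}{dt}f(t,\cdot)=\cL f$'' is slightly loose and should be read as that identity, so you have in fact cleaned up the statement rather than departed from the paper's proof.
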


\subsection{The hierarchical coalescence process}\label{sec_HCP}

We can now introduce the \emph{hierarchical coalescence process} (in
short HCP). The dynamics depends on a~strictly increasing sequence of
positive numbers $\{d^{(n)}\}_{n\ge1}$ and a~family of bounded
continuous functions $\lambda_\ell^{(n)}$, $\lambda_r^{(n)}$,
$\lambda_a^{(n)}\dvtx [d^{(n)}, \infty]\rightarrow [0,A_n]$, $n \geq1$.
Without loss of generality, at cost of a~length rescaling, we may
assume
%
\begin{equation}
d^{(1)}=1.
\end{equation}
We set
$\lambda^{(n)}:= \lambda^{(n)}_\ell+ \lambda^{(n)}_r+ \lambda
^{(n)}_a$ and we assume:
\begin{longlist}[A3]
\item[(A1)] for any $n \in\mathbb{N}_+$, $\lambda^{(n)} (d) >0$ if and
    only if $d \in[d^{(n)}, d^{(n+1)})$;

\item[(A2)] for any $n \in\mathbb{N}_+$, if $d,d'\geq d^{(n)}$,
    then $d+d'\geq d^{(n+1)}$ (i.e., $2d^{(n)} \geq d^{(n+1)}$);

\item[(A3)] $\lim_{n\to\infty}\,d^{(n)}=\infty$.
\end{longlist}
For example, one could take $d^{(n)}=n$ or $d^{(n)} = a^{n-1}$ with
$a \in(1,2]$.

The HCP is then given by a~sequence of one-epoch coalescence processes,
suitably linked. More precisely, at the beginning of the first epoch
one starts with a~SPP with support on $\mathcal{N}( d^{(1)})=
\mathcal{N}(1) $. Then the stochastic evolution of the HCP is described
by the sequence of random paths $\{ \xi^{(n) } (\cdot)\}_{n\ge1}$,
where each $\xi^{(n)}$ is the random trajectory of the OCP with rates
$\lambda^{(n)}_\ell$, $\lambda^{(n)}_r$, $\lambda^{(n)}_a$, active
domain lengths $d^{(n)}_{\min}=d^{(n)},  d^{(n)}_{\max}=d^{(n+1)}$ and
initial condition $\xi^{(n)}(0)=\xi^{(n-1)} (\infty), n\geq2$.
Informally we refer to $\xi^{(n)}$ as describing the evolution in the
$n$th-epoch. Note that, by Lemma \ref{volarelontano}, one can prove
recursively that at the end of the $n$th-epoch the random configuration
$\xi^{(n)}(\infty)$ belongs to $\mathcal{N}( d^{(n+1)} )$, and hence it
is an admissible starting configuration for the OCP associated to the
$(n+1)$th-epoch.

Lemma \ref{vecchioteo1} gives us information on the evolution and its
asymptotics inside each epoch when the initial condition is a~SPP of
the renewal type. If, for example, the initial distribution $\mathcal
{Q}$ for the first epoch is $\operatorname{Ren}(\nu,\mu)$, where $\mu$
has support on $[d^{(1)}, \infty)=[1, \infty)$, we can use Lemma
\ref{vecchioteo1} together with the link $\xi^{(n+1)}(0)=\xi^{(n)}
(\infty)$ between two consecutive epochs to recursively define the
measures $\mu^{(n)},\nu^{(n)}$ by
%
\begin{eqnarray}
\mu^{(n+1)}&:=&\mu_\infty^{(n)},\qquad
\mu^{(1)}:=\mu,
\nonumber\\[-8pt]\label{eq:1}  \\[-8pt]
\nu^{(n+1)}&:=&\nu_\infty^{(n)},\qquad\nu^{(1)}:=\nu.\nonumber
\end{eqnarray}
%
With this position it is then natural to ask if, in some suitable
sense, the measures $\mu^{(n)}$, $\nu^{(n)}$ have a~well-defined
limiting behavior as $n\to \infty$. The affirmative answer is contained
in the following theorem, which is the core of the paper, for some
specific choice of transition rates. Before stating it we recall
a~useful result on the Laplace transform of probability measures on
$[1,\infty)$.

\begin{Lemma}[(\cite{FMRT0})] \label{prelim:mainth}
Let $\mu$ be a~probability measure on $[1, \infty)$, and let $g(s)$ be
its Laplace transform, that is, $g(s)= \int e^{-s x} \mu(dx), s
\in\mathbb{R}_+$.
\begin{longlist}[(ii)]
\item[(i)] If
%
\begin{equation}
\label{lim_arrosto} \lim_{s \downarrow0} -\frac{s
g'(s)}{1-g(s)}=c_0,
\end{equation}
then necessarily $0\le c_0\le1$.

\item[(ii)] The existence of limit \eqref{lim_arrosto} holds if:
\begin{longlist}[(a)]
\item[(a)] $\mu$ has finite mean and then $c_0=1$ or

\item[(b)] for some
    $\alpha\in(0,1)$ $\mu$ belongs to the domain of attraction of an
    $\alpha$-stable law or, more generally, $\mu ((x,\infty)
    )=x^{-\alpha} L(x)$ where $L(x)$ is a~slowly varying\footnote{A
    function $L$ is said to be slowly varying at infinity if for all
    $c>0$, $\lim_{x \to\infty} L(cx)/L(x)=1$.} function at
    $+\infty$, $\alpha\in[0,1]$ and in this case $c_0=\alpha$.
\end{longlist}
\end{longlist}
\end{Lemma}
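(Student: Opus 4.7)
The plan is to prove part (i) by an elementary pointwise inequality, and part (ii) via Karamata's Tauberian theorem after rewriting $1-g(s)$ and $-g'(s)$ as Laplace integrals of the tail $\bar F(x):=\mu((x,\infty))$.

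For part (i), both $-sg'(s) = s\int x e^{-sx}\mu(dx)$ and $1-g(s)=\int(1-e^{-sx})\mu(dx)$ are strictly positive for $s>0$ (since $\mu$ is supported on $[1,\infty)$), so $c_0\geq 0$. The upper bound $c_0\leq 1$ follows from the pointwise inequality $1-e^{-u}\geq u e^{-u}$, which holds because $1-e^{-u}=\int_0^u e^{-v}\,dv\geq u e^{-u}$; applied with $u=sx$ and integrated against $\mu$ it gives $1-g(s)\geq -sg'(s)$. For part (ii)a), when $m=\int x\,\mu(dx)<\infty$, dominated convergence (both $(1-e^{-sx})/s$ and $xe^{-sx}$ are dominated by the $\mu$-integrable function $x$) yields $(1-g(s))/s\to m$ and $-g'(s)\to m$ as $s\downarrow 0$, so the ratio tends to $1$.

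For part (ii)b) in the typical case $\alpha\in(0,1)$, integration by parts using $\mu(dx)=-d\bar F(x)$ gives
\begin{align*}
1-g(s) &= s\int_0^\infty e^{-sx}\bar F(x)\,dx, \\
-g'(s) &= \int_0^\infty e^{-sx}\bar F(x)\,dx - s\int_0^\infty xe^{-sx}\bar F(x)\,dx.
\end{align*}
Since $\bar F(x)\sim x^{-\alpha}L(x)$ is regularly varying of index $-\alpha$, Karamata's Tauberian theorem (e.g.\ Bingham--Goldie--Teugels, Thm.~1.7.1) gives $\int_0^\infty e^{-sx}\bar F(x)\,dx\sim \Gamma(1-\alpha)\,s^{\alpha-1}L(1/s)$ as $s\downarrow 0$, and applied to the regularly varying function $x\bar F(x)\sim x^{1-\alpha}L(x)$ it gives $\int_0^\infty xe^{-sx}\bar F(x)\,dx\sim\Gamma(2-\alpha)\,s^{\alpha-2}L(1/s)$. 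Combined with $\Gamma(2-\alpha)=(1-\alpha)\Gamma(1-\alpha)$, this produces
\[
1-g(s)\sim \Gamma(1-\alpha)\, s^\alpha L(1/s), \qquad -sg'(s)\sim \alpha\,\Gamma(1-\alpha)\,s^\alpha L(1/s),
\]
so the ratio tends to $\alpha$.

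The hard part will be the boundary case $\alpha=1$, where $\Gamma(1-\alpha)$ diverges and Karamata must be applied in its de Haan form. Under the remaining nontrivial hypothesis that the mean is infinite (otherwise case (a) applies), one introduces $\tilde L(y):=\int_1^y t^{-1}L(t)\,dt$, which is slowly varying with $\tilde L(y)\to\infty$, and shows $1-g(s)\sim s\tilde L(1/s)$ and (using the standard fact $L(y)=o(\tilde L(y))$ for slowly varying $L$ with divergent $\tilde L$) $-g'(s)\sim \tilde L(1/s)$, so the ratio still tends to $1=\alpha$. The case $\alpha=0$ is subsumed by the main formulas above: the factor $\alpha\,\Gamma(1-\alpha)=0$ merely signals that the leading-order terms in $-g'(s)$ cancel, which is already the desired conclusion $-sg'(s)=o(1-g(s))$, giving $c_0=0=\alpha$.
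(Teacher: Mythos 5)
The paper does not actually contain a proof of this lemma: the statement carries the attribution \texttt{[\textbackslash cite\{FMRT0\}]} and the text right after it explicitly sends the reader to Appendix~A of \cite{FMRT0}. So there is nothing in \emph{this} paper to compare your argument against, and the proposal has to stand on its own.

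On its own merits your proof is correct. Part (i) via the pointwise inequality $1-e^{-u}\ge u e^{-u}$ (and strict positivity of both $-sg'(s)$ and $1-g(s)$) is exactly right, and part (ii)a) by dominated convergence is standard. For (ii)b) with $\a\in(0,1)$ the crucial point in your subtraction argument is that the constants $\G(1-\a)$ and $\G(2-\a)$ really are distinct, so writing $A(s)=(\G(1-\a)+o(1))s^{\a-1}L(1/s)$ and $B(s)=(\G(2-\a)+o(1))s^{\a-1}L(1/s)$ legitimately gives $A(s)-B(s)\sim\a\G(1-\a)s^{\a-1}L(1/s)$; you are implicitly using this and it works. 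For $\a=1$ with infinite mean the de Haan argument you sketch (slowly varying $\tilde L$ with $L=o(\tilde L)$; this is BGT Prop.~1.5.9a) is the right tool and the conclusion $1-g(s)\sim -sg'(s)\sim s\tilde L(1/s)$ follows. For $\a=0$, your phrasing ``the leading-order terms cancel, which is already the desired conclusion'' is a bit loose as stated, but the underlying argument is sound: from $A(s)\sim B(s)\sim s^{-1}L(1/s)$ one gets $A(s)-B(s)=o\bigl(s^{-1}L(1/s)\bigr)$, hence $-sg'(s)=s\bigl(A(s)-B(s)\bigr)=o\bigl(L(1/s)\bigr)=o\bigl(1-g(s)\bigr)$; it would be worth spelling this last step out explicitly rather than leaving it as ``signals''.
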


The reader may find the proof in \cite{FMRT0}, Appendix A, together
with an example for which the limit \eqref{lim_arrosto} does not exist.

\begin{Theorem}\label{teo2}
Let $\nu, \mu$ be probability measures on $\mathbb{R}$ and $[1,
\infty)$, respectively.
Suppose that:
\begin{itemize}
\item the law $\mathcal{Q}$ of $\xi^{(1)}(0)$ is either
    $\mathcal{Q}=\operatorname{Ren} (\nu, \mu)$ or $\mathcal{Q}=
    \operatorname{Ren} (\mu)$ or
    $\mathcal{Q}=\operatorname{Ren}_\mathbb{Z}(\mu)$;

\item it holds \textup{(i)} $\lambda_a^{(n)} \equiv0$ for all
    $n\geq1$, or \textup{(ii)} $\lambda_\ell^{(n)} +
    \lambda_r^{(n)} \equiv\gamma\lambda_a^{(n)}$ for all $n\geq1 $
    and for some $\gamma\geq0$ independent from $n$;

\item the Laplace transform $g(s)$ of $\mu$
    satisfies~\eqref{lim_arrosto}.
\end{itemize}

For any $n\geq1$ let $X^{(n)}$ be a~random variable with law $\mu^{(n)
}$ defined in \eqref{eq:1}
so that $g(s):=\mathbb{E} [ e^{-s X^{(1)}} ]$. 

Then the following holds:
\begin{itemize}
\item If $c_0=0$, then the rescaled variable
$Z^{(n)}=X^{(n)}/d^{(n)}$ weakly converges to the random
variable $Z^{(\infty)}_{0}= \infty$.

\item If $c_0 \in(0,1]$, then the rescaled variable
    $Z^{(n)}=X^{(n)}/d^{(n)}$ weakly converges to the random
    variable $ Z^{(\infty)}_{\kappa}$ with values in $[1, \infty)$,
    whose Laplace transform is given by
%
\begin{equation}
\label{macedonia} g^{(\infty)}_{\kappa}(s)= \mathcal{R} \biggl(\kappa\int
_1^\infty \frac{e^{-sx}}{x} \,dx \biggr), \qquad s
>0,
\end{equation}
where
%
\begin{equation}
\label{galattico} %
\cases{ \kappa:=c_0\quad\mbox{and}\quad
\mathcal{R}(x):= 1 - e^{-x},
\cr
\qquad \mbox{in case \textup{(i)}},
\vspace*{6pt}\cr
\kappa:=
\displaystyle\frac{\gamma+1}{\gamma+2} c_0\quad\mbox{and}\quad \mathcal{R}(x):=
\frac{\exp \{((\gamma+2)/(\gamma+1))x \} -1} {
\exp \{((\gamma+2)/(\gamma+1))x \}+ 1/(\gamma+1)},
\vspace*{3pt}\cr
\qquad \mbox{in case \textup{(ii)}.}}\hspace*{-20pt}
\end{equation}
\end{itemize}
%
\end{Theorem}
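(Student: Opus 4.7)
The plan is to reduce the problem to the study of a single auxiliary Radon measure $\Theta$ on $[1,\infty)$ via a transformation that linearizes the recursions of Theorem \ref{differisco}, and then to apply Karamata's Tauberian theorem. Assumptions $(i)$ and $(ii)$ on the rates are exactly what allows this linearization, since in each case the function $\cR$ of \eqref{galattico} conjugates the nonlinear identities \eqref{uovone}/\eqref{salsona} into an additive recursion.

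First, I apply $\cR^{-1}$ to both sides of \eqref{uovone}/\eqref{salsona}. Writing $g^{(n)}$ for the Laplace transform of $\mu^{(n)}$ and setting $h^{(n)}(s):=\int_{[d^{(n)},d^{(n+1)})}e^{-sx}\mu^{(n)}(dx)$, a short direct computation in each case yields the linear recursion
\[
\cR^{-1}(g^{(n+1)}(s))=\cR^{-1}(g^{(n)}(s))-h^{(n)}(s).
\]
For fixed $s>0$, $g^{(n)}(s)\le e^{-s d^{(n)}}\to 0$ since $\mu^{(n)}$ is supported in $[d^{(n)},\infty)$, so $\cR^{-1}(g^{(n)}(s))\to 0$. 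Iterating the recursion in $n$ then gives
\[
\cR^{-1}(g^{(n)}(s))=\sum_{k\ge n}h^{(k)}(s)=\int_{[d^{(n)},\infty)}e^{-sx}\,d\Theta(x),
\]
where $\Theta:=\sum_k \mu^{(k)}|_{[d^{(k)},d^{(k+1)})}$ is a Radon measure on $[1,\infty)$. At $n=1$ this reads $\cR^{-1}(g(s))=\int_1^\infty e^{-sx}d\Theta(x)$, so $\Theta$ is entirely determined by the initial interval law $\mu$.

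Next, I translate the hypothesis \eqref{lim_arrosto} into information about $\Theta$. Differentiating $\cR^{-1}(g(s))$ by the chain rule and using $g(s)\to 1$ together with $-sg'(s)/(1-g(s))\to c_0$, a short computation (done separately in the two cases) gives
\[
-s\,\tfrac{d}{ds}\cR^{-1}(g(s))\longrightarrow \kappa\qquad\text{as }s\downarrow 0,
\]
with $\kappa$ as in \eqref{galattico}; the prefactor $\frac{\g+1}{\g+2}$ in case $(ii)$ arises from the specific form of $(\cR^{-1})'$. Rewriting the left-hand side as $s\int x e^{-sx}d\Theta(x)$ and applying Karamata's Tauberian theorem to the non-decreasing function $\tilde M(T):=\int_{[1,T]}x\,d\Theta(x)$ yields $\tilde M(T)\sim \kappa T$ as $T\to\infty$.

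Finally, I rescale. Set $\phi_n(s):=\cR^{-1}(g^{(n)}(s/d^{(n)}))$ and let $\Theta_n$ be the pushforward of $\Theta|_{[d^{(n)},\infty)}$ under $x\mapsto x/d^{(n)}$. The substitution $y=x/d^{(n)}$ gives $\phi_n(s)=\int_1^\infty e^{-sy}d\Theta_n(y)$ and $-\phi_n'(s)=\int_1^\infty e^{-su}dF_n(u)$ with $F_n(u):=\tilde M(ud^{(n)})/d^{(n)}$. From $\tilde M(T)/T\to\kappa$ we have $F_n(u)\to \kappa u$ pointwise with $F_n(u)\le Cu$ uniformly in $n$; integration by parts and dominated convergence then yield $-\phi_n'(s)\to \kappa e^{-s}/s$ for every $s>0$. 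A second dominated-convergence argument on $[s,\infty)$, combined with $\phi_n(\infty)=0$, gives $\phi_n(s)\to \kappa\int_1^\infty e^{-s\tau}/\tau\,d\tau$; applying $\cR$ produces the claimed convergence of $g^{(n)}(s/d^{(n)})$ when $c_0>0$. When $c_0=0$, the same chain yields $\kappa=0$, whence $g^{(n)}(s/d^{(n)})\to 0$ for all $s>0$, i.e.\ $Z^{(n)}\to\infty$ in distribution. I expect the main difficulty to lie in this last step: justifying the uniform bound on $F_n$ and the two successive exchanges of limit and integral in the presence of the infinite total mass of $\Theta_n$. The renewal type of $\cQ$ among the three cases plays no role, since the OCP dynamics and the recursions of Theorem \ref{differisco} depend only on $\mu$.
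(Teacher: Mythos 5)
Your argument is correct, and it takes a genuinely different route from the paper's. The paper builds the linearization through the abstract machinery of Section~\ref{trasformo}: Lemma~\ref{gola} (complete monotonicity of $-e^s\cF'(g)g'$, giving existence of a Radon measure $t^{(1)}$ on $[0,\infty)$) and Lemma~\ref{limonata} (a series-expansion/convolution argument identifying $h^{(n)}$ with the restriction of that measure to $[0,a_n-1)$), then tracks the self-similar scaling $t^{(n)}=(1/d^{(n)})\,t^{(1)}\circ\psi_{n-1}$. You bypass both lemmas entirely: iterating the \emph{unrescaled} recursion of Theorem~\ref{differisco} forward and using $g^{(m)}(s)\le e^{-sd^{(m)}}\to0$ together with $\cF(0)=0$ gives directly $\cF(g^{(n)}(s))=\sum_{k\ge n}h^{(k)}(s)=\int_{[d^{(n)},\infty)}e^{-sx}d\Theta(x)$ with $\Theta=\sum_k\mu^{(k)}|_{[d^{(k)},d^{(k+1)})}$. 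This makes the linearizing measure explicit and manifestly non-negative with no complete-monotonicity or power-series argument, and it exhibits a nice identity ($\Theta$ is recoverable from $\mu$ alone via $\cF(g)$) that the paper leaves implicit. The Tauberian step and the final dominated-convergence argument are then the same in both proofs (your $\tilde M(T)$ is the paper's $U^{(1)}(T-1)$ up to bookkeeping). What the paper's heavier route buys is reuse: Lemma~\ref{limonata} and Theorem~\ref{astratto} are needed again in Section~\ref{provateo3} for the first-point law, whereas your shortcut is tailored to Theorem~\ref{teo2}. One small point worth making explicit in a written-up version: passing from $\cF(g^{(n)}(s/d^{(n)}))\to\kappa\,{\rm Ei}(s)$ (or $\to 0$ when $\kappa=0$) to convergence of $g^{(n)}(s/d^{(n)})$ itself uses that $\cF$ is continuous, strictly increasing on $[0,1)$ and tends to $+\infty$ at $1^-$, exactly as in the paper's compactness argument.
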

The proof of Theorem \ref{teo2} is given in Section~\ref{fegatograsso}.
Case \textup{(i)} has already been proved in \cite{FMRT0} with a~more
combinatorial method, not suited for extensions.
%
\begin{remark}
In the above result the only reminiscence of the initial distribution
is through the constant $c_0$ which is ``universal'' for a~large class
of initial interval laws $\mu$; see Lemma \ref{prelim:mainth}.
In particular, starting with a~stationary or $\mathbb{Z}$-stationary
renewal SPP (which necessarily corresponds to a~law $\mu$ with finite
mean), the weak limit of $Z^{(n)}$ always exists and is universal
($c_0=1$), depending on the rates only through the fulfilment of case
\textup{(i)} or case \textup{(ii)}, and not depending on
the sequence $\{d^{(n)}\}_{n\ge1}$ which defines the active
intervals.

We also underline that our results cover a~slightly more general class
of HCP. Indeed, following exactly the same lines of our proofs, we can
also treat more general triple merging allowing, for example, an active
domain to incorporate either its two neighbors to the left and/or its
two neighbors to the right and/or its left and right neighbors (this
last case is the only one considered in this paper). For this more
general class of HCP, both the above asymptotic result as well as the
one in Theorem \ref {teo3} are unchanged [instead the single epoch
evolution expressed by the differential equation \eqref{basilea} has to
be properly changed by adding to $\lambda_a$ the rates of these new
triple mergence events].
\end{remark}
%
%
\begin{remark} \label{referee}
The asymptotic Laplace distribution $g^{(\infty)}_{\kappa}$ can be
written also~as
\[
g^{(\infty)}_{\kappa}(s)= \mathcal{R} \biggl( \kappa\int
_s^\infty \frac{e^{-x}}{x} \,dx \biggr) =
\mathcal{R} \bigl( \kappa \mathrm{Ei}(s) \bigr),
\]
where $\mathrm{Ei}(\cdot)$ denotes the exponential integral
function.\footnote{Note that the function that we denote by
$\mathrm{Ei}(s)$ (following the notation of \cite{SE} and our paper
\cite{FMRT0}) is instead more frequently denoted in mathematics
literature by $\mathrm{E}_1(s)$.} This is indeed the form appearing in
\cite{DBG} and \cite{SE}. Moreover, in case \textup{(ii)} with
$\gamma=0$ in the above theorem (as in \cite{DBG}), one simply has
$\kappa= c_0/2$
and $ g^{(\infty)}_{c_0/2}= \tanh ( \frac{c_0}{2} \mathrm{Ei}(s) )$.

In case \textup{(i)} an expression as formal series of the probability
density with Laplace transform $g^{(\infty)}_{\kappa}$ can be found in
\cite {SE,FMRT0}.
\end{remark}

Next we concentrate on the asymptotic behavior of the first point law
when starting with a~right renewal SPP. 
%
\begin{Theorem}\label{teo3}
Let $\nu, \mu$ be probability measures on $\mathbb{R}$ and $[1,
\infty)$, respectively.
Suppose that:
\begin{itemize}
\item the law $\mathcal{Q}$ of $\xi^{(1)}(0)$ is
    $\operatorname{Ren} (\nu, \mu)$;

\item it holds \textup{(i)} $\lambda_a^{(n)} \equiv0$ and
    $\lambda^{(n)}_r \equiv\gamma\lambda_{\ell}^{(n)}$ for all
    $n\geq1$ and for some $\gamma\geq0$ independent from $n$, or
    \textup{(ii)} $\lambda_\ell ^{(n)}\equiv 0$ and $
    \lambda_r^{(n)} \equiv0$ for all $n\geq1$;

\item the Laplace transform $g(s)$ of $\mu$ satisfies
\eqref{lim_arrosto}.
\end{itemize}
For any $n\geq1$ let $X_0^{(n)}$ be
the position of the first
point of the HCP at the beginning of the $n$th epoch, and let
$Y^{(n)}$ be the rescaled random variable $Y^{(n)}:=X_0^{(n)} /
d^{(n)}$.

Then the following holds:
\begin{itemize}
\item\cite{FMRT0} In case \textup{(i)}, as $n\to\infty$, $Y^{(n)}$
    weakly converges to the positive random variable
    $Y^{(\infty)}_{c_0}$ with Laplace transform given by
%
\begin{equation}
\label{thenero} \mathbb{E} \bigl( e^{-s Y^{(\infty)}_{c_0}} \bigr)=\exp \biggl\{-
\frac{c_0}{1+\gamma} \int_{(0,1)} \frac{1-e^{-sy} }{y} \,dy \biggr\},
\qquad s\in\mathbb{R}_+.
\end{equation}
%
%
\item In case \textup{(ii)}, supposing that $\int z \mu(dz)<
    \infty$ and that
%
\begin{equation}
\label{pato} \lim_{z \to\infty} \frac{1}{z} \int
_{[1,z]} x^2 \mu( dx) = 0
\end{equation}
as $n\to\infty$ the variable $Y^{(n)}$ weakly
converges
to the random variable $Y^{(\infty)}$ with values in $(0,\infty)$
and Laplace transform given by
%
\begin{equation}
\label{duro} \mathbb{E} \bigl( e^{-s Y^{(\infty)}} \bigr)= \frac{e^{-\bar\gamma/2}}{2}
\sqrt{ \frac{1-\tanh^2 ( \mathrm{Ei}(s)/2)}{s}}, \qquad s \in\mathbb{R}_+, 
\end{equation}
where we let $\bar\gamma= -\!\int_0^\infty e^{-t}(\log t) \,dt
\simeq 0,577$ be the Euler--Mascheroni constant. Condition
\eqref{pato} is satisfied if $\int x^{1+\varepsilon} \mu(dx)<\infty
$ for some $\varepsilon>0$.
\end{itemize}
\end{Theorem}
The proof is given in Section~\ref{provateo3}. Case \textup{(i)} in the
above theorem has been stated only for completeness. It has already
been obtained in \cite{FMRT0}; see Theorem 2.24 there. Finally, we
point out that, due to Lemma \ref{prelim:mainth}, under condition
\eqref{pato} it must be $c_0=1 $ in limit \eqref{lim_arrosto}.

\begin{remark}
Extensions of the results presented in this section to OCPs and HCPs
starting from an exchangeable SPP can be easily achieved following the
arguments reported in \cite{FMRT0}, Appendix D. The key tool is given
by De Finetti's theorem, which gives a~characterization of exchangeable
SPPs as convex combinations of renewal SPPs. As final result, one gets
limit theorems as the ones obtained here but with more general limit
points.
\end{remark}

\subsection{Examples of HCPs}\label{physics}
We conclude this section by discussing some HCPs coming from the
physics literature.

\subsubsection{\texorpdfstring{The HCP associated to the East model at low temperature \cite{SE,FMRT1}}
{The HCP associated to the East model at low temperature  [10, 19]}}
An interesting and highly nontrivial example of
HCP has been devised in the physics literature \cite{SE} to model the
high density (or low temperature) nonequilibrium dynamics of the
\textit{East model} when a~deep quench from a~normal density state is
performed.
The East model \cite{SE,EJ} is a~well-known example of kinetically
constrained stochastic particle system with site exclusion which
evolves according to a~Glauber dynamics submitted to the following
constraint: the $0/1$ occupancy variable at a~given site
$x\in\mathbb{Z}$ can change only if the site $x+1$ is empty (i.e., the
corresponding occupation variable equals zero). The change of the
occupation variable, when allowed by this constraint, occurs at rate
$q$ (resp., $1-q$) if it corresponds to a~change toward an empty
(resp., occupied) site. Note that each configuration can also be
represented by a~sequence of domains on $\mathbb{Z}$, where a~domain
represents a~maximal sequence of consecutive occupied sites delimited
by two empty sites. If the equilibrium vacancy density is very low
(i.e., in the limit $q\to 0$) and the initial distribution has a~normal
density (e.g., $q=1/2$), most of the nonequilibrium evolution will try
to remove the excess of vacancies of the initial state and will thus be
dominated by the coalescence of domains. In this setting, under
a~proper rescaling~\cite{FMRT1}, the East process can be well described
by an HCP with the following parameters: $d^{(1)}=1$,
$d^{(n)}=2^{n-2}+1$ for $n\geq2$,
$\lambda^{(n)}_r(d)=\lambda^{(n)}_a(d)=0$ for any value of the domain
length $d$, thus $\lambda^{(n)}=\lambda^{(n)}_\ell$ where
$\lambda^{(n)}_\ell$ is a~function expressed via a~proper large
deviation probability; see \cite{FMRT1} for the precise form of this
function.
We provide here only a~very short explanation to justify the above
choices of the parameters and refer the reader to \cite{SE} for an
heuristic explanation of the connection of this HCP with East and to
Section~3 of \cite{FMRT1} for a~rigorous description. The choice
$\lambda^{(n)}_a=0$ is due to the fact that the relevant event for East
corresponds to the disappearance of one zero at a~time, namely to the
coalescence of two domains (triple domain merging is not allowed). The
asymmetry between the right and left coalescence is due to the oriented
character of the East constraints, which implies that only the left
domains can be incorporated. Finally, the apparently weird choice of
the active ranges $d^{(n)}$ is due to the fact that in order to remove
the vacancy sitting at the left border of a~domain of size
$\ell\in[2^{n-1}+1,2^n]$, one needs to create at least $n$ additional
vacancies inside the domain; again, see \cite{SE,FMRT1} for details of
the combinatorial argument leading to this result. Thus energy barrier
considerations imply that this event requires a~typical time of order
$1/q^n$ which in turn means that in the regime $q\to0$ domains of sizes
$\ell,\ell'$ with $\ell\in[2^{n-1}+1,2^n]$, $\ell'\in[2^{m-1}+1,2^m]$
and $n\neq m$ are active (namely their left border can disappear) on
very well separated time scales.

\subsubsection{\texorpdfstring{The paste-all model \cite{DGY2}}{The paste-all model [7]}}
Another interesting HCP has been ``introduced'' in \cite{DGY2} and
named \textit{Paste-all model}. The model was intended to describe
breath figures, namely the patterns formed by growing and coalescing
droplets when vapor condenses on a~nonwetting surface. A common feature
of breath figure experiments is the occurrence of a~scale-invariant
regime with a~stable distribution of the drop sizes. In \cite{DGY2}
several simplified one-dimensional models were proposed to understand
this phenomenon, including the HCP named Paste-all model. In this case
all the domains are subintervals of the integer lattice, a~single
length is active in each epoch and domains merge with their left/right
neighbor with rate one, namely $d^{(n)}=n$,
$\lambda^{(n)}_\ell(n)=\lambda^{(n)}_r(n)=1$ and $\lambda
^{(n)}_a(n)=0$ (drops can coalesce either with their right or left
neighbor and the smaller droplets are the first that disappear).

\subsubsection{\texorpdfstring{The HCP associated to the 1d Ising model \cite{BDG}}
{The HCP associated to the 1d Ising model [2]}}\label{ising}
Finally, we recall the HCP which has been ``introduced'' in \cite{BDG}
to model the zero temperature Glauber dynamics of the one-dimensional
Ising model evolved from a~random initial condition. In this case the
domains correspond to the ordered spin regions, namely the maximal
sequence of consecutive sites with the same value of the spin, either
up or down. At late stages of the dynamics a~scale-invariant morphology
develops: the structure at different times is statistically similar
apart from an overall change of scale, that is, the system is described
by a~single, time-dependent length scale. Instead of considering the
stochastic Glauber dynamics the authors of \cite{BDG} start from the
well-known simpler deterministic model which is expected to mimic this
dynamics, namely the time-dependent Ginzburg--Landau equation for
a~scalar field in $d=1$, $\partial_t\phi=\partial_x^2\phi-dV/d\phi$
with $V(\phi)$ a~symmetric double well potential with minima at
$\phi=\pm1$ corresponding to the up and down phases for the Ising
model. If the model starts with a~$\phi$ profile corresponding to
a~random initial condition for the Ising model, then it evolves rapidly
to a~phase of subsequent regions were $\phi$~is close to $\pm1$
(corresponding to the ordered domains), and the dynamics is dominated
by the events that bring together and annihilate the closest pair of
domain walls. This in turn corresponds to the fact that the smaller
domains merge with the two neighboring domains.
Consequently, the HCP which has been introduced in~\cite{BDG} to mimic
this domain dynamics has parameters: $d^{(n)}:=n$ (only the smallest
length is active at each epoch), $\lambda^{(n)}_\ell(n)=\lambda
^{(n)}_r(n)=0$ and $\lambda^{(n)}_a(n)=1$ (only triple merging occurs).


\section{Metric structure of $\mathcal{N}(d_{\min})$}\label{dugundi}


Let us write $\mathcal{M}$ for the space of Radon measures on $\mathbb
{R}$, that is, locally finite Borel nonnegative measures. We consider
this space endowed of the vague topology, such that $\mu_n \to\mu$ in
$\mathcal {M}$ if and only if $\mu_n(f) \to\mu(f)$ for all continuous
functions $f$ on $\mathbb{R}$ with compact support (shortly, $f \in
C_0$). Then $\mathcal{M}$ can be metrized by a~suitable metric $m$
making it a~Polish space; see \cite{DV}, Section~A2.6, and observe that
since the Euclidean space $\mathbb{R}$ is Polish and locally compact,
the vague topology coincides with the $\hat w$-topology as discussed
before \cite{DV}, Corollary A2.6.V. We recall the definition of $m$
since it will be useful below,
\[
m(\mu,\nu):=\int_0^\infty e^{-r}
\frac{ d_r ( \mu^{(r)},\nu^{(r)}
)}{1+ d_r ( \mu^{(r)},\nu^{(r)}  ) } \,dr,\qquad \mu,\nu\in\mathcal{M},
\]
where $\mu^{(r)}$, $\nu^{(r)}$ denote the restriction to $(-r,r)$ of
$\mu,\nu$, while $d_r$ stands for the Prohorov distance for measures on
$(-r,r)$; see \cite{DV}, Section A2.5.

The space $\mathcal{N}$ introduced in Section~\ref{sec_SPP} can be
thought of as a~subspace of $\mathcal{M}$, identifying the set
$\xi\in\mathcal{N}$ with the measure $\sum_{ x \in\xi} \delta _x$. Then
one gets that the \mbox{$\sigma$-}algebra of its Borel subsets
coincides with the \mbox{$\sigma$-}algebra of measurable subsets
introduced in Section~\ref{sec_SPP}; see \cite{DV}, Chapter~7, in
particular Proposition 7.1.III and Corollary 7.1.VI there. Therefore,
the same property holds for $\mathcal{N}(d_{\min})$ (i.e., Borel
subsets and measurable subsets coincide).

\begin{Lemma}\label{gnocchi}
The following holds:
\begin{longlist}[(iii)]
\item[(i)] 
$\xi_n \to\xi$ in $\mathcal{N}( d_{\min})$ if and only if $|\xi _n
\cap [a,b]| \to|\xi \cap[a,b]|$ for each interval $[a,b]$ such that
$\xi\cap \{a,b\}=\varnothing$. The same criterion holds replacing
closed intervals $[a,b]$ by open intervals $(a,b)$.

\item[(ii)] Suppose that $\xi_n \to\xi$ in $\mathcal{N}( d_{\min})$.
    Fix
    $a<b$ with $\xi\cap\{a,b\} = \varnothing$. Then $\xi_n \cap(a,b)
    \to\xi\cap(a,b)$ in $\mathcal{N}(d_{\min} )$. Moreover, for $n$
    large enough $\xi_n \cap(a,b)$ has the same cardinality as
    $\xi\cap(a,b)$ and $\lim_{n\to\infty} x_i^{(n)}= x_i$ for $1\leq i
    \leq k$, where $\xi_n\cap(a,b)= \{x_1^{(n)}<x_2^{(n)} <
    \cdots<x_k^{(n)} \}$ and $\xi\cap(a,b)= \{x_1<x_2 < \cdots<x_k\}$.

\item[(iii)] The space $\mathcal{N}( d_{\min})$ is a~closed subset of
    $\mathcal{M}$. In particular, it is a~Polish space endowed of the
    metric $m$.

\item[(iv)] The space $\mathcal{N}(d_{\min})$ is compact.
\end{longlist}
\end{Lemma}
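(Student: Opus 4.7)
The plan is to prove parts (i)--(iv) in order, using (i) as the workhorse for the others.

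For part (i), I would invoke the standard characterization of vague convergence for locally finite integer-valued measures: since $\xi$ has only countably many atoms (indeed, every point of $\xi$ is one), for any $a<b$ with $\xi\cap\{a,b\}=\emptyset$ one can sandwich $\identity_{[a,b]}$ between continuous compactly supported functions $\varphi_\pm \in C_0$ agreeing with $\identity_{[a,b]}$ outside arbitrarily small neighbourhoods of $\{a,b\}$, which yields the forward implication by testing vague convergence against $\varphi_\pm$. For the converse, any $f\in C_0$ can be uniformly approximated by step functions whose jump points lie outside the atoms of $\xi$ (a countable set), and the count-convergence on such intervals gives $\xi_n(f)\to \xi(f)$. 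The equivalence with open intervals is immediate since $\xi\cap\{a,b\}=\emptyset$ makes $[a,b]$ and $(a,b)$ indistinguishable for $\xi$ and, eventually, for $\xi_n$.

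For part (ii), fix $a<b$ with $\xi\cap\{a,b\}=\emptyset$ and let $k=|\xi\cap(a,b)|$, enumerated as $x_1<\cdots<x_k$. For any $\epsilon>0$ small enough that the intervals $(x_i-\epsilon,x_i+\epsilon)$ are disjoint, lie in $(a,b)$, and have $\xi$-boundary empty, criterion (i) gives $|\xi_n\cap(x_i-\epsilon,x_i+\epsilon)|\to 1$ and $|\xi_n\cap(a,b)|\to k$; hence for large $n$ each $(x_i-\epsilon,x_i+\epsilon)$ contains exactly one point of $\xi_n$ and there are no other points of $\xi_n$ in $(a,b)$. Letting $\epsilon\downarrow 0$ along a sequence avoiding the atoms of $\xi$ gives $x_i^{(n)}\to x_i$, and then $\xi_n\cap(a,b)\to \xi\cap(a,b)$ by (i) applied to the restrictions.

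For part (iii), suppose $\xi_n\in\cN(d_{\rm min})$ and $\xi_n\to\xi$ in $\cM$; I must show $\xi\in\cN(d_{\rm min})$. If $\xi$ contained two distinct points $x<x'$ with $x'-x<d_{\rm min}$, pick $\epsilon>0$ with $x'-x+2\epsilon<d_{\rm min}$ and with $\xi\cap\{x-\epsilon,x+\epsilon,x'-\epsilon,x'+\epsilon\}=\emptyset$. By (ii) there is eventually exactly one point of $\xi_n$ in each of $(x-\epsilon,x+\epsilon)$ and $(x'-\epsilon,x'+\epsilon)$, at mutual distance less than $d_{\rm min}$, contradicting $\xi_n\in\cN(d_{\rm min})$. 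Thus $\xi\in\cN(d_{\rm min})$ and the set is closed in $\cM$; being closed in the Polish space $(\cM,m)$, it is itself Polish.

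For part (iv), the key input is the uniform local bound $|\xi\cap[-R,R]|\le 2R/d_{\rm min}+1$ valid for every $\xi\in\cN(d_{\rm min})$. This means that for any sequence $(\xi_n)$ in $\cN(d_{\rm min})$ the truncated configurations $\xi_n\cap[-R,R]$ live in the finite-dimensional compact space of subsets of $[-R,R]$ of cardinality $\le 2R/d_{\rm min}+1$, so a diagonal extraction along $R\in\bbN$ yields a subsequence whose count $|\xi_{n_k}\cap[a,b]|$ converges for all $a,b$ outside a countable exceptional set; by (i) this subsequence converges vaguely to some $\xi\in\cM$, and by (iii) the limit lies in $\cN(d_{\rm min})$. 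Hence $\cN(d_{\rm min})$ is sequentially compact, and being a metric space it is compact.

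The main obstacle is part (ii): turning convergence of integer counts on intervals into convergence of the individual points with matching cardinalities requires carefully exploiting the absence of $\xi$-atoms at the chosen endpoints and propagating this to the nearby $\xi_n$; once (ii) is available, (iii) and (iv) follow by soft arguments.
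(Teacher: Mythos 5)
Your overall strategy for (i), (ii), (iv) matches the paper's intent, and parts (i) and (ii) are essentially what the paper does (the paper simply cites \cite[Prop.\ A2.6.II]{DV} for part (i), where you spell out the sandwiching argument — fine, and arguably more self-contained). For (iv), the paper simply applies the compactness criterion \cite[Cor.\ A2.6.V]{DV} to the uniform local bound $|\xi\cap[-R,R]|\le 2R/d_{\rm min}+1$; your diagonal-extraction version gets there but is a bit looser, and your appeal to ``(i) gives vague convergence of the subsequence'' is technically circular since (i) is stated for sequences and limits already known to be in $\cN(d_{\rm min})$.

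The genuine gap is in part (iii). You begin with ``$\xi_n\in\cN(d_{\rm min})$, $\xi_n\to\xi$ in $\cM$'' and immediately argue by contradiction assuming $\xi$ has two distinct points $x<x'$ at distance $<d_{\rm min}$. But a priori $\xi\in\cM$ is merely a Radon measure: it could carry a nonatomic (say Lebesgue) component, or it could be a counting measure with a point of multiplicity $\ge 2$, and neither of these failure modes is of the form ``two distinct points too close together.'' Before any argument about ``points of $\xi$'' can get off the ground, you must first establish that $\xi$ is an integer-valued counting measure. The paper does this by citing that the set $\bar\cN$ of locally finite counting measures is closed in $\cM$ (\cite[Prop.\ 7.1.III]{DV}); alternatively you could argue directly that $\xi(I)=\lim_n\xi_n(I)\in\bbN$ for a rich class of bounded intervals $I$, which forces $\xi$ to be purely atomic with $\bbN$-valued weights. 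Once $\xi\in\bar\cN$ is in hand, the contradiction you want is cleanly stated: if $\xi\notin\cN(d_{\rm min})$ then there is a compact interval $I$ with $|I|<d_{\rm min}$, $\partial I$ not charged by $\xi$, and $\xi(I)\ge 2$ (this covers both the multiplicity case $\xi(\{x\})\ge 2$ and the proximity case $x<x'$, $x'-x<d_{\rm min}$); since $\xi_n(I)\to\xi(I)$ but $\xi_n(I)\le 1$, contradiction. Your write-up only treats the proximity case and invokes part (ii) (whose hypothesis is $\xi\in\cN(d_{\rm min})$, which is exactly what you are trying to prove), so it needs both the counting-measure step inserted and the use of (ii) replaced by the forward count-convergence fact valid for general counting-measure limits.
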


\begin{pf} Part \textup{(i)} with closed intervals follows from \cite{DV}, Proposition A2.6.II; see also \cite{FKAS}, Theorem 1.1.16, with
$P_n:=\delta_{\xi_n}$ and $P:= \delta_\xi$. The same criterion with
open interval is a~simple derivation from the one with closed interval.

Let us consider part~\textup{(ii)}. Applying the criterion in
part~\textup{(i)} it is trivial to check that $\xi_n \cap(a,b)
\to\xi\cap(a,b)$. Take now $\varepsilon>0$ small enough that all the
intervals $J_i=[x_i-\varepsilon,x_i+\varepsilon]$, $1\leq i \leq k$,
are disjoint and intersect $\xi$ only at $x_i$. Then, by item
\textup{(i)} for $n$ large $\xi_n$ has exactly one point in each $J_i$.
Similarly, $\xi_n$ has exactly $k$ points in $(a,b)$ for $n$ large. By
the arbitrariness of $\epsilon$ we can conclude.

To prove part \textup{(iii)} call $\bar\mathcal{N}$ the family of
counting measures in $\mathbb{R}$, that is, $\xi\in\bar\mathcal{N}$ if
and only if $\xi= \sum_i k_i \delta_{x_i}$ with $k_i \in\mathbb{N}_+$
and $\{x_i\}$ being a~locally finite countable subset of $\mathbb{R}$.
By \cite{DV}, Proposition 7.1.III, $\bar \mathcal{N}$ is a~closed
subset of $\mathcal{M}$. Hence, if $\xi _n\in\mathcal{N}(d_{\min}) $
and $\xi_n \to\xi$ with $\xi$ in $\mathcal{M}$, then $\xi\in
\bar\mathcal{N}$. We only need to show that $\xi\in\mathcal{N}(
d_{\min})$. Suppose by contradiction that $\xi(\{x\})\geq2$ for some $x
\in\mathbb{R}$. Take $I= [x-\epsilon, x+\epsilon]$ such that $\xi(
\{x-\epsilon, x+\epsilon\})=0$ and $2 \varepsilon< d_{\min}$ (the
existence of $\epsilon$~is guaranteed by the fact that
$\xi\in\bar\mathcal{N}$). By part \textup{(i)} it must be $\xi_n(I)
\geq2$ for $n$ large enough, in contradiction with the fact that
$\xi_n$ can have at most one point in $I$.

Due to part \textup{(iii)}, part \textup{(iv)} is a~simple consequence
of the compactness criterion given in \cite{DV}, Corollary A2.6.V.
\end{pf}

Recall that $\mathbb{B}$ denotes the Banach space of all bounded
continuous functions $f\dvtx \mathcal {N}(d_{\min}) \to\mathbb{R}$
endowed with the uniform norm $\| \cdot\|$ and that
$\mathbb{B}_{\mathrm{loc}}$ denotes the set of local functions $f \in
\mathbb{B}$.

\begin{Lemma}\label{lemino}
The set $\mathbb{B}_{\mathrm{loc}}$ is dense in $\mathbb{B}$. In
particular, given $f \in\mathbb{B}$ and defining $f_N(\xi):=
\int_{N}^{N+1} f(\xi \cap(-r,r))\,dr$, it holds $f_N
\in\mathbb{B}_{\mathrm{loc}}$ and $f_N \to f$ in~$\mathbb{B}$.
\end{Lemma}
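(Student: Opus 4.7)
The plan is to verify that the explicit approximation $f_N$ given in the statement belongs to $\bbB_{\rm loc}$ and satisfies $f_N\to f$ in $\bbB$; density of $\bbB_{\rm loc}$ in $\bbB$ then follows at once. I would proceed in three steps.

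First, locality and well--definedness of $f_N$. For every $\xi\in\cN(d_{\rm min})$ and every $r\in[N,N+1]$, the set $\xi\cap(-r,r)$ depends only on $\xi\cap(-(N+1),N+1)$, so $f_N$ depends only on this bounded restriction and is therefore local. Moreover, with $\xi$ fixed, the map $r\mapsto \xi\cap(-r,r)$ is piecewise constant on $[N,N+1]$ (it can change only when $\pm r$ meets a point of $\xi$), so $r\mapsto f(\xi\cap(-r,r))$ is Borel measurable and bounded by $\|f\|$, whence the integral defining $f_N(\xi)$ makes sense.

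Next, continuity of $f_N$. If $\xi_n\to\xi$ in $\cN(d_{\rm min})$, then Lemma \ref{gnocchi}$(ii)$ applied with $a=-r$ and $b=r$ yields $\xi_n\cap(-r,r)\to\xi\cap(-r,r)$ for every $r\in[N,N+1]$ with $\pm r\notin\xi$, hence for all but countably many $r$. Continuity of $f$ gives pointwise a.e.\ convergence of the integrand on $[N,N+1]$, and bounded convergence yields $f_N(\xi_n)\to f_N(\xi)$. Thus $f_N\in\bbB_{\rm loc}$.

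The only non--routine step is the uniform convergence $f_N\to f$. The key ingredient is the metric estimate
\[
m\bigl(\xi\cap(-r,r),\,\xi\bigr)\ \leq\ e^{-r}\qquad \text{for every }\xi\in\cN(d_{\rm min})\text{ and every }r>0,
\]
which is immediate from the definition of $m$: for $s\leq r$ the truncations $(\xi\cap(-r,r))^{(s)}$ and $\xi^{(s)}$ coincide (since $(-s,s)\subset(-r,r)$), so the integrand defining $m$ vanishes on $[0,r]$, while on $(r,\infty)$ it is bounded by $1$. By Lemma \ref{gnocchi}$(iv)$, $\cN(d_{\rm min})$ is compact, so $f$ is uniformly continuous for $m$; given $\e>0$, choosing $\d>0$ such that $m(\xi,\xi')<\d$ implies $|f(\xi)-f(\xi')|<\e$ and then $N$ with $e^{-N}<\d$, the above bound gives $|f(\xi\cap(-r,r))-f(\xi)|<\e$ uniformly in $\xi$ for every $r\geq N$, and integrating over $[N,N+1]$ yields $\|f_N-f\|\leq\e$. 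The main potential obstacle is precisely this metric estimate, which is what converts a pointwise truncation into a uniformly small perturbation in the vague metric; once it is in hand, everything else is bookkeeping via uniform continuity on a compact space.
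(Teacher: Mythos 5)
Your proposal is correct and follows essentially the same line as the paper's proof: compactness of $\cN(d_{\rm min})$ giving uniform continuity of $f$, the metric estimate $m(\xi,\xi\cap(-r,r))\leq e^{-r}$ yielding the uniform bound $\|f_N-f\|\leq\e$ for $N$ large, and Lemma~\ref{gnocchi}$(ii)$ together with dominated convergence for the continuity of $f_N$. Nothing is missing.
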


Note that the map $\mathbb{R}_+ \ni r \mapsto f(\xi\cap(-r,r))\in
\mathbb{R}$ is stepwise, with a~finite number of jumps in any finite
interval. Hence, the above function $f_N$ is well defined.

\begin{pf*}{Proof of Lemma \ref{lemino}}
Take $ f\in\mathbb{B}$. Since $\mathcal{N}(d_{\min})$ is compact, $f$
is uniformly continuous. Hence, given $\epsilon>0$, there exists
$\delta_0>0$ such that $m(\xi,\xi')<\delta_0$ implies
$|f(\xi)-f(\xi')|<\epsilon$. Take $N_0\in\mathbb{N}$ large enough that
$e^{-N_0} \leq\delta_0$. By the definition of $m$ we have $m (\xi,
\xi\cap(-N,N))\leq \int_{N}^\infty e^{-a}\,da \leq\delta_0$ for any $N
\geq N_0$. This implies that $|f(\xi)-f(\xi\cap(-r,r)) |\leq\epsilon$
for all $r \geq N_0$ and therefore $\|f- f_N\|\leq\epsilon$. Trivially
$f_N$ is a~local function, and it remains to prove that $f_N$ is
continuous. To this aim, fix $\xi\in\mathcal{N}(d_{\min})$. Then the
set $R=\{r\in [N,N+1]\dvtx  \xi\cap\{-r,r\}\neq \varnothing\}$ is
finite. In particular, by Lemma \ref{gnocchi}\textup{(ii)}, if $\xi_n
\to\xi$, then $\xi_n \cap(-r,r)\to\xi\cap(-r,r)$ for all $r
\in[N,N+1]\setminus R$. Since $f$ is continuous, we get that
\[
f\bigl(\xi_n\cap(-r,r)\bigr) \to f\bigl(\xi\cap(-r,r)\bigr)\qquad
\forall r\in [N, N+1]\setminus R.
\]
We conclude applying now the dominated convergence theorem.
\end{pf*}

\section{\texorpdfstring{OCP process: Proof of Theorems \protect\ref{differisco} and~\protect\ref{differiscobis}}
{OCP process: Proof of Theorems 2.6 and 2.8}}\label{prova_differisco}

In this section we prove Theorems \ref{differisco}
and~\ref{differiscobis}, applying our analysis of the Markov generator
of the OCP; recall Corollary \ref{corri}.

\subsection{\texorpdfstring{Differential equation for $\mu_t$ and proof of Theorem~\protect\ref{differisco}}
{Differential equation for mu t and proof of Theorem 2.6}}
As application of Corollary \ref{corri}
we
can prove the following result:
%
\begin{Proposition}\label{diuretico}
Let $f\dvtx [0,\infty)\to\mathbb{R}$ be a~continuous function such that
%
\begin{equation}
\label{neve} \sum_{k =0}^\infty\sup
_{x \geq k} \bigl|f(x)\bigr|<\infty.\vadjust{\goodbreak}
\end{equation}
Let $\mu$ be a~probability measure on $[d_{\min},\infty)$ and $\mu_t$
be as in Lemma \ref{vecchioteo1} with the choice $Q= \operatorname{Ren}
(\mu)$. Then the function $[0,\infty)\ni t\mapsto\mu_t(f)\in\mathbb{R}$
is differentiable and
%
\begin{eqnarray}\label{basilea}
\frac{d}{dt} \mu_t(f) & =& -\int\mu_t
(dx) \lambda(x) f(x)\nonumber
\\
&&{} +\int\mu_t (dx) \int\mu_t (dy)
\bigl( \lambda_r(x)+\lambda_\ell(y) \bigr)f(x+y)
\\
&&{}+ \int\mu_t (dx) \int\mu_t (dy) \int
\mu_t (dz) \lambda_a (y) f(x+y+z).\nonumber
\end{eqnarray}
\end{Proposition}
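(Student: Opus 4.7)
The plan is to apply Corollary \ref{corri} to a translation--invariant functional of $\xi_t$ whose expectation equals $\lambda_t \mu_t(f)$ (with $\lambda_t$ the intensity of $\cQ_t$), and then to extract \eqref{basilea} after eliminating $\lambda_t$ via the same computation applied to $f\equiv 1$. Fix $\phi \in C_0(\bbR)$ with $\phi\ge 0$ and $\int\phi=1$, and set $F_\phi(\xi):=\sum_{x\in\xi}\phi(x)f(d_x^r)$. By Lemma \ref{vecchioteo1}$(ii)$ the law $\cQ_t$ equals $\text{Ren}(\mu_t)$, so Campbell's formula applied at a typical point gives $\bbE_\cQ[F_\phi(\xi_t)] = \lambda_t \mu_t(f)$, with $\lambda_t>0$ well--defined since $\mu_t$ has finite mean for every $t<\infty$.

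Since $F_\phi$ is not local (it depends on positions of arbitrarily distant points through $d_x^r$), I would replace it with the truncation $F_\phi^{(N)}(\xi):=\sum_{x\in\xi}\phi(x)(f\chi_N)(d_x^r)$ where $\chi_N$ is a continuous $[0,1]$--valued cut--off, supported on $[0,N+1]$ and equal to $1$ on $[0,N]$. By compactness of $\supp\phi$ and the choice of $\chi_N$, $F_\phi^{(N)}$ depends on $\xi$ only through $\xi\cap(\supp\phi+[0,N+1])$, and Lemma \ref{gnocchi}$(ii)$ shows it is continuous on $\cN(d_{\rm min})$; hence $F_\phi^{(N)}\in\bbB_{\rm loc}\subset\bbD$. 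The deterministic bound $|\xi\cap\supp\phi|\leq|\supp\phi|/d_{\rm min}+1$ combined with condition \eqref{neve} yields $\|F_\phi-F_\phi^{(N)}\|\to 0$ as $N\to\infty$.

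Next I compute $\cL F_\phi^{(N)}$ directly from \eqref{santi}. For each domain $[x_{k-1},x_k]$ the left merge removes $x_{k-1}$ and modifies only the two summands of $F_\phi^{(N)}$ labelled by $x_{k-2}$ and $x_{k-1}$; the right merge removes $x_k$ and modifies those labelled by $x_{k-1}$ and $x_k$; the triple merge removes $x_{k-1}$ and $x_k$ and modifies the three labelled by $x_{k-2},x_{k-1},x_k$. Taking expectations under $\cQ_s$ and applying Campbell's formula at each of the relative Palm points --- under which, since $\cQ_s=\text{Ren}(\mu_s)$, the whole gap sequence is i.i.d.\ with law $\mu_s$ --- after reindexing the nine resulting sums collapse into
\begin{align*}
\bbE_{\cQ_s}[\cL F_\phi^{(N)}] = \lambda_s\Big(& \mu_s^{\otimes 2}[(\lambda_\ell(y)+\lambda_r(x))f_N(x+y)] + \mu_s^{\otimes 3}[\lambda_a(y)f_N(x+y+z)]\\
& -\mu_s[\lambda f_N] - [\mu_s(\lambda)+\mu_s(\lambda_a)]\mu_s(f_N)\Big),
\end{align*}
with $f_N:=f\chi_N$. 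Corollary \ref{corri} then yields the integrated identity, and \eqref{neve} together with uniform boundedness of the rates lets dominated convergence pass to the limit $N\to\infty$, producing
\[\lambda_t\mu_t(f) = \lambda_0\mu_0(f) + \int_0^t \lambda_s R_s(f)\,ds,\]
where $R_s(f)$ is the expression in the large parentheses above, now with $f$ in place of $f_N$.

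Specialising to $f\equiv 1$ gives $\lambda_t' = -\lambda_t[\mu_t(\lambda)+\mu_t(\lambda_a)]$, so $t\mapsto\lambda_t$ is $C^1$ and strictly positive; dividing the identity for general $f$ by $\lambda_t$, differentiating in $t$, and using this ODE to cancel the last term of $R_t(f)$ yields exactly \eqref{basilea}. The principal obstacle is the Palm--Campbell bookkeeping: for each merge type and each of its affected summands one must identify the relative Palm shift of the marking point $x_k$, reindex consistently so that only the stationary law $\mu_s^{\otimes k}$ ($k=1,2,3$) appears, and exploit the renewal i.i.d.\ structure to collapse each expectation into an integral against a product measure; condition \eqref{neve} enters precisely to ensure absolute convergence of these sums uniformly in $N$ and in $s$, which is what is needed both for the limit $N\to\infty$ and for the final differentiation in $t$.
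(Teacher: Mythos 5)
Your proof is correct but follows a genuinely different route from the paper's. The paper works with the right renewal initial law $\text{Ren}(\delta_0,\mu)$, so that $\xi(t)$ lives a.s.\ in the set $\cN_*$ of configurations contained in $[0,\infty)$, and represents $\mu_t(f)$ as the ratio $N_t/D_t$ with $N_t=\bbE_\cQ[\Psi(\xi(t))]$ and $D_t=\bbP_\cQ(0\in\xi(t))$, where $\Psi\in\bbD$ is built explicitly in a technical lemma (Lemma \ref{tosse}); the survival probability $D_t$ plays there the role your intensity $\lambda_t$ plays here. You instead stay with the stationary law $\text{Ren}(\mu_t)$ and use Campbell's formula to encode $\lambda_t\,\mu_t(f)$ as $\bbE_{\cQ_t}[F_\phi]$ for a spatially mollified sum $F_\phi$, and then eliminate $\lambda_t$ via the $f\equiv 1$ specialisation. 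Both routes ultimately reduce to computing $\cL$ of a suitable observable and exploiting the renewal i.i.d.\ gap structure --- in your case through Palm calculus, in the paper's through conditioning on $0\in\xi(t)$. I checked your Palm bookkeeping: the nine sums do collapse as claimed, including the two contributions where the active domain is the gap $d_0^\ell$ to the \emph{left} of the marked point (under Palm for a stationary renewal this gap is independent and $\mu_s$-distributed, so those terms factor into $\mu_s(\lambda_r)\mu_s(f)$ and $\mu_s(\lambda_a)\mu_s(f)$), giving exactly the formula you wrote. One step deserves to be made explicit: after passing $N\to\infty$ you have $\lambda_t\mu_t(f)=\lambda_0\mu_0(f)+\int_0^t\lambda_s R_s(f)\,ds$, and to conclude everywhere-differentiability (not merely a.e.) you need $s\mapsto\lambda_s R_s(f)$ continuous. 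This holds because $|R_s(f)-R_s(f\chi_N)|\leq C\sup_{u\geq N}|f(u)|$ uniformly in $s$ (the difference $f-f\chi_N$ vanishes on $[0,N]$ and each term of $R_s$ involves bounded rates and probability measures), so $\lambda_s R_s(f)$ is the uniform limit of the continuous functions $\lambda_s R_s(f\chi_N)=\tfrac{d}{ds}\bbE_{\cQ_0}[F_\phi^{(N)}(\xi_s)]$. The trade-off between the two approaches: the paper's avoids the truncation limit entirely by working from the start with one fixed observable $\Psi\in\bbD$, at the cost of constructing $\Psi$ by hand and verifying $|||\Psi|||<\infty$; yours keeps every observable in $\bbB_{\rm loc}$ but must control the limit $N\to\infty$ uniformly to recover $C^1$-ness.
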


\begin{pf}
Set $\mathcal{Q}= \operatorname{Ren} (\delta_0, \mu)$. Note that
$\mathbb{P}_\mathcal{Q}$-a.s. $\xi(t)$ belongs to the set
$\mathcal{N}_*$ of configurations $\xi\in\mathcal{N}(d_{\min})$ such
that $\xi\subset[0, \infty)$, $\xi\cap(0,d_{\min}/2]=\varnothing$, and
$\xi$~is given by an increasing sequence of points diverging to
$\infty$. 
Points in $\xi\in\mathcal{N}_*$ are labeled as $x_0(\xi),x_1(\xi ),x_2
(\xi), \ldots$ in increasing order. Then, by Lemma \ref{vecchioteo1},
$\mu_t$~equals the law of $x_1(\xi(t))$ under $\mathbb{P}_\mathcal {Q}(
\cdot| 0\in\xi(t))$. Hence we can write $ \mu_t(f) = N_t/D_t $ where
\begin{eqnarray*}
N_t&=& \mathbb{E}_\mathcal{Q} \bigl[ f \bigl(x_1
\bigl(\xi(t) \bigr) \bigr); 0\in\xi(t) \bigr], \qquad D_t =
\mathbb{P}_\mathcal{Q}\bigl( 0\in\xi(t) \bigr).
\end{eqnarray*}
Let $\rho\dvtx  \mathbb{R}\to[0,1]$ be a~continuous function such that
$\rho(x)=0$ for $x \notin(-\frac{d_{\min}}{2},\break  \frac{d_{\min}}{2})$ and
$\rho(0)=1$. By definition of vague convergence (see
Section~\ref{dugundi}), the function $\Phi\dvtx
\mathcal{N}(d_{\min})\mapsto \mathbb{R}$ defined as $\Phi(\xi):=
\sum_{x \in\xi} \rho(x)$ is a~continuous map. Since $\Phi$ is local, it
belongs to $\mathbb{B}_{\mathrm{loc}}$, and moreover it satisfies
$\Phi(\xi)= \mathbh{1}_{ 0 \in\xi}$ for all $\xi\in\mathcal{N}_*$. In
Lemma \ref{tosse} below we exhibit a~function $\Psi\in\mathbb{D}$ that
satisfies $\Psi(\xi)=f (x_1 (\xi ) )\mathbh{1}_{ 0 \in\xi}$ for all
$\xi\in\mathcal{N}_*$. Hence we can write
\[
N_t =\mathbb{E}_\mathcal{Q} \bigl[\Psi\bigl(\xi(t) \bigr)
\bigr], \qquad D_t=\mathbb{E}_\mathcal{Q} \bigl[ \Phi\bigl(
\xi(t) \bigr) \bigr].
\]
By standard properties of Markov generators, we conclude that the maps
$N_t, D_t$ are differentiable and that
\[
N_t'=\mathbb{E}_\mathcal{Q} \bigl[ \mathcal{L}
\Psi\bigl(\xi(t) \bigr) \bigr], \qquad D'_t =
\mathbb{E}_\mathcal{Q} \bigl[\mathcal{L}\Phi\bigl(\xi(t) \bigr) \bigr].
\]
Since $\Psi, \Phi\in\mathbb{D}$, we can use equation \eqref{santi}
to compute $\mathcal{L}\Psi$ and $\mathcal{L}\Phi$. We need their
value only on
$\mathcal{N}_*$.
Suppose that $\zeta, \xi\in\mathcal{N}_*$ are such that $\zeta
\subset\xi$ and $0\in\xi$. 
Writing $x_i$ and $d_i$ instead of $x_i(\zeta)$ and $d_i (\zeta)= x_i
(\zeta)-x_{i-1}(\zeta)$, we get
\[
\cases{ \mathcal{L}\Psi(\zeta) = \mathbh{1}(0\in\zeta) G (\zeta),
\cr
\mathcal{L}\Phi(\zeta)= \mathbh{1}(0 \in \zeta) H(\zeta),}
\]
where $ H(\zeta):= - \lambda_\ell(d_1 ) -\lambda_a( d_1 )$ and
\begin{eqnarray*}
G(\zeta)&:=& - \bigl[\lambda_\ell(d_1)+
\lambda_a (d_1) \bigr] f(x_1) + \bigl[
\lambda_r (d_1)+ \lambda_\ell(d_2)
\bigr] \bigl[f(x_2)-f(x_1) \bigr]
\\
&&{}+ \lambda_a (d_2) \bigl[ f(x_3)-f(x_1)
\bigr].
\end{eqnarray*}
Since $N_t$ and $D_t$ are differentiable, we get that $N_t/D_t$ is
differentiable and that
\[
\frac{d}{dt} \mu_t (f) = \frac{d}{dt} \frac{N_t}{D_t}
= \frac
{N_t'}{D_t}- \frac{N_t}{D_t} \frac{D_t '}{D_t}.
\]
Writing $F(\xi)= f( x_1 (\xi) )$, the above identities imply that
%
\begin{eqnarray}\label{malattia}
\frac{d}{dt} \mu_t (f)&=& 
\mathbb{E}_\mathcal{Q} \bigl( G\bigl(\xi(t) \bigr) | 0\in \xi(t) \bigr)
\nonumber\\[-8pt]\\[-8pt]
&&{}-\mathbb{E}_\mathcal{Q} \bigl( F \bigl(\xi(t) \bigr) | 0\in\xi (t) \bigr)
\mathbb{E}_\mathcal{Q} \bigl( H\bigl(\xi(t) \bigr) | 0\in\xi (t) \bigr).\nonumber
\end{eqnarray}
By Lemma \ref{vecchioteo1}\textup{(iv)}, we can write (brackets should
help to follow the computations)
%
\begin{eqnarray}
\label{infettiva}
&& \mathbb{E}_\mathcal{Q} \bigl( G\bigl(\xi(t) \bigr) | 0\in
\xi(t) \bigr)\nonumber
\\
&&\qquad = - \bigl\{ \mu_t ( \lambda_\ell f) +
\mu_t (\lambda_a f ) \bigr\}
\nonumber
\\
&&\quad\qquad{} + \biggl\{ \int\mu_t(dx)\int\mu_t (dy) \bigl[
\lambda_r(x)+ \lambda_\ell(y) \bigr]f(x+y)
\nonumber\\[-12pt]\\[-8pt]
&&\hspace*{130pt}{} -\mu_t (\lambda_r f) -\mu_t (
\lambda_\ell) \mu_t (f) \biggr\}\nonumber
\\[-2pt]
&&\quad\qquad{} + \biggl\{ \int\mu_t(dx) \int\mu_t (dy)\int
\mu_t(dz) \lambda_a(y) f(x+y+z) - \mu_t (
\lambda_a) \mu_t (f) \biggr\}\hspace*{-14pt}\nonumber
\end{eqnarray}
and
%
\begin{eqnarray}\label{preghiamo}
&& \mathbb{E}_\mathcal{Q} \bigl( F \bigl(\xi(t) \bigr) | 0\in
\xi(t) \bigr) \mathbb{E}_\mathcal{Q} \bigl( H\bigl(\xi(t) \bigr) | 0\in
\xi(t) \bigr)
\nonumber\\[-8pt]\\[-8pt]
&&\qquad  = - \mu_t (f)\mu_t (\lambda_\ell)
-\mu_t (f) \mu_t (\lambda_a).\nonumber
\end{eqnarray}
Combining the above identities \eqref{malattia}, \eqref{infettiva} and
\eqref{preghiamo} we get the thesis.
\end{pf}

In the proof of Proposition \ref{diuretico} above, we used the
following technical lemma.

\begin{Lemma}\label{tosse}
Let $f$ be a~real continuous function on $[0,\infty)$ satisfying
\eqref{neve} and extend it to a~continuous function on $\mathbb{R}$
constant on $(-\infty,0]$. Given $s \in\mathbb{R}$ define
\[
f_s (\xi) = \cases{ 0, &\quad if $ \bigl|\xi\cap(s, \infty) \bigr|\leq1$,
\cr
f \bigl( z \bigl(\xi\cap(s,\infty) \bigr) \bigr), &\quad otherwise,}
\]
where $ z (\xi\cap(s,\infty) )$ denotes the second point from
the left of $\xi\cap(s, \infty)$. Then the function
\[
F\dvtx \mathcal{N}(d_{\min} ) \ni\xi\mapsto\frac{1}{d_{\min}
}\int
_{-d_{\min}}^{0} f_s(\xi) \,ds \in\mathbb{R}
\]
belongs to $\mathbb{B}$. Moreover, the function $\Psi(\xi)= \Phi(\xi)
F(\xi)$ belongs to $\mathbb{D}$ and $\Psi(\xi)=f (x_1 (\xi)
)\mathbh{1}_{ 0 \in\xi}$ for all $\xi\in\mathcal{N}_*$; for the
definition of $\Phi$ and $\mathcal{N}_*$, see the proof of Proposition
\ref{diuretico}.
\end{Lemma}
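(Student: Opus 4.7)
The plan is to break the verification into three pieces: the pointwise identification of $\Psi$ on $\cN_*$, the continuity of $F$ (which yields $\Psi=\Phi F\in\bbB$), and the finiteness of $|||\Psi|||$.

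\emph{Identification on $\cN_*$.} I would argue by cases on whether $0\in\xi$. If $0\notin\xi$, then since $\xi\subset[0,\infty)$ and $\xi\cap(0,d_{\rm min}/2]=\emptyset$, no point of $\xi$ belongs to the support $(-d_{\rm min}/2,d_{\rm min}/2)$ of $\rho$, so $\Phi(\xi)=0$ and $\Psi(\xi)=0=f(x_1(\xi))\mathds{1}_{0\in\xi}$. If $0\in\xi$, the same vanishing gives $\Phi(\xi)=\rho(0)=1$, and for every $s\in(-d_{\rm min},0)$ one has $\xi\cap(s,\infty)=\xi$ with ordered points $0<x_1(\xi)<x_2(\xi)<\cdots$, so $f_s(\xi)=f(x_1(\xi))$ and hence $F(\xi)=f(x_1(\xi))$.

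\emph{Continuity of $F$.} Condition \eqref{neve} implies both $\|f\|_\infty<\infty$ and $\sup_{x\ge k}|f(x)|\to 0$ as $k\to\infty$. Since $|f_s|\le\|f\|_\infty$, continuity of $F$ will follow from dominated convergence once I show that $f_s(\xi_n)\to f_s(\xi)$ whenever $\xi_n\to\xi$ in $\cN(d_{\rm min})$ and $s$ lies outside the countable set $\xi$. For such $s$, choosing auxiliary thresholds $M\notin\xi$ with $M\to\infty$ and applying Lemma \ref{gnocchi}$(ii)$ to $\xi_n\cap(s,M)$, the points of $\xi_n\cap(s,M)$ converge to those of $\xi\cap(s,M)$ with matching cardinalities. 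If $|\xi\cap(s,\infty)|\ge 2$, this transports the first two points and hence the value of $f_s$. If $|\xi\cap(s,\infty)|\le 1$, then any extra points of $\xi_n\cap(s,\infty)$ must escape every bounded window $(s,M)$, hence tend to $+\infty$; the decay of $f$ then forces $f_s(\xi_n)\to 0 = f_s(\xi)$.

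\emph{Finiteness of $|||\Psi|||$.} The product rule gives
\[
|\Psi(\xi^r)-\Psi(\xi)|\le \|f\|_\infty\,|\Phi(\xi^r)-\Phi(\xi)| + |F(\xi^r)-F(\xi)|
\]
for $r\in\bbZ$, and analogously for $r=(k,k')$. The $\Phi$-sum is finite because $\Phi$ is local and supported near the origin. For the $F$-sum the key observation is that $\xi\cap I_k$ contains at most one element $p_k$, and its removal alters $f_s(\xi)$ only when $p_k$ is the first or the second point of $\xi\cap(s,\infty)$; in either subcase both the original and the new second point lie in $[k d_{\rm min},\infty)$ for $k\ge 0$. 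This yields $\Delta_F(k)\le 2\sup_{x\ge k d_{\rm min}}|f(x)|$ for $k\ge 0$, $\Delta_F(k)\le 2\|f\|_\infty$ for $k=-1$, and $\Delta_F(k)=0$ for $k\le -2$ (since then $I_k\subset(-\infty,-d_{\rm min}]\subset(-\infty,s]$ for every $s\in(-d_{\rm min},0)$). The same analysis bounds $\Delta_F(k,k')$, and the constraint in $\cR$ gives only at most $\lceil d_{\rm max}/d_{\rm min}\rceil$ admissible partners $k'$ for each $k$, so the pair-sum differs from the single-index one by a finite multiplicative factor. Summability of $\sum_k\sup_{x\ge k d_{\rm min}}|f(x)|$ follows from \eqref{neve} after a harmless comparison absorbing the scaling by $d_{\rm min}$.

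The step I anticipate as the main obstacle is the continuity argument of the second part: unlike for a genuinely local function, $F$ senses points arbitrarily far to the right, and a priori nothing prevents $\xi_n$ from carrying extra mass out to infinity that would shift the ``second point'' entering $f$. Ruling this out requires combining Lemma \ref{gnocchi}$(ii)$ (to force such extra points to escape) with the decay $f(x)\to 0$ derived from \eqref{neve}.
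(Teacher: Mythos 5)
Your proposal is correct and follows essentially the same route as the paper's: the continuity of $F$ is handled by the same two-case split ($|\xi\cap(s,\infty)|\geq 2$ vs. $\leq 1$) combined with Lemma \ref{gnocchi} and the decay of $f$ from \eqref{neve}, and your estimate of $\D_\Psi(r)$ via the product decomposition $\Psi=\Phi F$ yields the same bound $2\sup_{x\geq kd_{\rm min}}|f(x)|$ that the paper obtains by bounding $\nabla_r\Psi$ directly. The only cosmetic difference is that you spell out the pointwise identification of $\Psi$ on $\cN_*$, which the paper leaves as ``simple to check.''
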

The integrand in the definition of $F$ is a~stepwise function with
a~finite number of jumps; hence it is integrable.

\begin{pf*}{Proof of Lemma \ref{tosse}}
Let us prove the continuity of $F$. Take $\xi_n \to\xi$ in
$\mathcal{N}(d_{\min})$ and set $R:=\{s \in(-d_{\min},0)\dvtx   s
\notin \xi\}$. We claim that, fixed $s \in R$, it holds $f_s (\xi_n)
\to f(\xi)$. Let us first suppose that $|\xi\cap (s,\infty)|\geq2$. Let
$a<b$ be the first two points of $\xi \cap(s,\infty)$, and take $c$
larger than $b$ such that $\xi$ has no point in $(b,c]$. Then by Lemma
\ref{gnocchi}\textup{(ii)} $\xi_n \cap(s,c)$ has exactly two points
$a^{(n)}<b^{(n)}$ eventually in~$n$, moreover $a^{(n)} \to a$ and
$b^{(n)} \to b$. By the continuity of $f$, we have
\[
f_s(\xi_n)= f\bigl(b^{(n)}\bigr) \to f(b) =
f_s (\xi).
\]
Let us now suppose that $|\xi\cap(s,\infty)|\leq1$. Suppose first that
$\xi\cap(s, \infty)$ has only one point, denoted by $x_*$. Given
$\varepsilon>0$ fix $L>x_*$ such that $L \notin\xi$ and $|f(x)|
\leq\varepsilon$ for $x \geq L$ ($L$ exists due to \eqref{neve}). By
Lemma \ref{gnocchi}\textup{(i)} for $n$ large $\xi_n$ has exactly one
point in $(s, L)$. This assures that $|f_s(\xi_n)|\leq\varepsilon$ for
$n$ large and therefore that $\lim_{n\to\infty} f_s(\xi_n)=0=f_s
(\xi)$. A similar argument can be applied when $\xi$ has no point in
$(s, \infty)$. This completes the proof of our claim.

Combining our claim with the dominated convergence theorem and with the
fact that $R$ is a~finite set, we get that $F(\xi_n) \to F(\xi)$, thus
proving the continuity of~$F$.

If $\xi\in\mathcal{N}_*$ it is simple to check that $\Psi(\xi )=f (x_1
(\xi) )\mathbh{1}_{ 0 \in\xi}$. It remains to prove that
$|\!|\!|\Psi|\!|\!|<\infty$. Suppose that $k \in\mathbb{Z}$ and
$\nabla_k \Psi (\xi)\neq 0$. Then $k \geq-1$ and $\xi$ has at least two
points in $(-d_{\min}, \infty)$, the first or the second one (from the
left) must lie in $I_k$. In particular, it must be $|\nabla_k \Psi(\xi)
|\leq2 \sup_{x \geq k d_{\min} } |f(x) |$.
Take now 
$k<k'\leq k+\lceil d_{\max}/d_{\min}\rceil$ in $\mathbb{Z}$. Suppose
that $\nabla_{(k,k')} \Psi(\xi)\neq0$. If $k<-1$, then $\nabla_{(k,k')}
\Psi(\xi)=\nabla_{k'} \Psi(\xi)$ which can be bounded as above. If
$k,k'\geq-1$, then we conclude that $\xi$ has at least two points in
$(-d_{\min}, \infty)$; the first or the second one (from the left) must
lie in $I_k\cup I_{k'}$. Hence $|\nabla_{(k,k')} \Psi(\xi) |\leq2
\sup_{x \geq k d_{\min} } |f(x) |$. The above bounds and condition
\eqref{neve} allow us to conclude.
\end{pf*}


We have now all the tools to prove Theorem \ref{differisco}.

\begin{pf*}{Proof of Theorem \ref{differisco}}
To prove \eqref{davide} and \eqref{orecchiettebis} we can restrict
to $s>0$. Indeed, writing these differential equations as integral
identities one can take the limit $s\downarrow0$ and recover the case
$s=0$.

We can apply Proposition \ref{diuretico} to the function $f(x)= e^{-s
x}$, $x \geq0$, getting that $G_t(s) = \mu_t (f)$ is
$t$-differentiable, with derivative given by \eqref{basilea}.

We can write $H_t(s) = \mu_t (\tilde f)$ where $\tilde f (x):=e^{-s
x}\mathbh{1}(x<d_{\max} )$. Obviously $\tilde f$~is not suited to
Proposition \ref{diuretico} since it is not continuous. If $\mu$ had
support on a~lattice, for example, $\mathbb{N}$, trivially $\tilde f $
could be replaced by a~nice function. In the general case we need more
care. For $\varepsilon>0 $ small enough, we fix a~continuous function
$f_\varepsilon$ on $[0, \infty)$ with values in
$[0,1]$ such that $f_\varepsilon(x)=\tilde f (x) $ if $x \notin
(d_{\max} -\varepsilon,d_{\max} )$. Applying Proposition
\ref{diuretico} we get that the function $[0,\infty) \ni t \mapsto\mu_t
(f_\varepsilon)$ is differentiable with derivative given by
\eqref{basilea} (with $f$ replaced by $f_\varepsilon$). Since $\mu _t$
has support in $[d_{\min}, \infty)$ and since $f_\varepsilon(x+y)=0$,
$f_\varepsilon(x+y+z)=0$ if $x,y, z \geq d_{\min}$ [recall assumption
\textup{(A2)} in Section~\ref{sec_OCP}], from \eqref{basilea} we
conclude that $ \mu_t' (f_\varepsilon)= - \mu_t ( \lambda
f_\varepsilon)$.

Let $\alpha_t(\varepsilon):= \mu_t  ( (d_{\max}-\varepsilon, d_{\max})
)$. Trivially, $\lim_{\varepsilon\downarrow0} \alpha_t (\varepsilon
)=0$ and $0\leq \alpha_t(\varepsilon)\leq1$. Since $ |\mu_t (\tilde f)
- \mu_t (f_\varepsilon)| \leq\alpha_t( \varepsilon) $ and
\[
\bigl| \mu_t ' (f_\varepsilon)+ \mu_t (
\lambda\tilde{f}) \bigr| = \bigl| \mu_t (\lambda f_\varepsilon)-
\mu_t (\lambda\tilde{f}) \bigr| \leq \alpha_t( \varepsilon)\|
\lambda\|_\infty,
\]
applying the dominated convergence theorem we
get
\begin{eqnarray*}
H_t (s)&=&\mu_t (\tilde f)= \lim
_{\varepsilon\downarrow0} \mu_t (f_\varepsilon) = \lim
_{\varepsilon\downarrow0} \biggl[ \mu_0(f_\varepsilon) - \int
_0 ^t\mu_u( \lambda
f_\varepsilon) \,du \biggr]
\\
&=&\mu_0(\tilde f) - \int
_0 ^t\mu_u( \lambda \tilde f) \,du.
\end{eqnarray*}
Since $\lambda$ is a~continuous function [extendable on $[0,\infty)$]
and is zero on $[ d_{\max}, \infty)$, we can apply Proposition
\ref{diuretico} to the function $\lambda\tilde f$ concluding that the
map $[0,\infty) \ni t \mapsto\mu_t (\lambda\tilde f)$ is differentiable
and therefore continuous. This observation together with the above
identity allows us to conclude that $H_t(s)$ is $t$-differentiable, and
its derivative satisfies \eqref{davide} [note that $\lambda\tilde{f}=
\lambda f $ by assumption \textup{(A1)} in Section~\ref{sec_OCP}].
Knowing that $\partial_t G_t(s)$ is given by \eqref{basilea} and using
\eqref{davide} we get \eqref{orecchiettebis}.

We observe that in case \textup{(i)} it holds $ \lambda_\ell+ \lambda
_r=\lambda$, while in case \textup{(ii)} it holds
$\lambda_\ell+\lambda_r= \lambda\gamma/ (1+\gamma)$ and $\lambda_a =
\lambda /(1+\gamma)$. These identities allow us to derive from
\eqref{davide} and \eqref{orecchiettebis} that $\partial_t G_t (s)=
\partial_t H_t(s) (1- G_t(s))$ in\vspace*{1pt} case \textup{(i)} and that
$\partial_t G_t (s)=
\partial_t H_t(s)  ( 1 - \frac{G_t(s)(\gamma+
G_t(s))}{1+\gamma} )$ in case \textup{(ii)}. The rest of the proof
follows by the computations outlined in Remark \ref{pasquetta} using
\eqref{davide}, \eqref{orecchiettebis} and the fact that $\lim_{t\to
\infty} G_t(s)= G_\infty(s)$, $\lim_{t \to\infty} H_t(s) =H_\infty
(s)=0$ [which is due to Lemmas~\ref{volarelontano}\textup{(iii)}
and~\ref{vecchioteo1}\textup{(v)}]. We only point out that with the
definition of $A_t(s), B_t(s)$ given in Remark \ref{pasquetta} one gets
$ b_u(s) e^{A_u(s) + 2 B_u(s) } = \frac{1}{\gamma+2}\partial_u e^{- H_u
((\gamma+2)/(\gamma+1)) } $ in case \textup{(ii)}.
\end{pf*}


\subsection{\texorpdfstring{Differential equation for $\nu_t$ and proof of Theorem \protect\ref{differiscobis}}
{Differential equation for nu t and proof of Theorem 2.8}}
As in the case of the interval law
$\mu_t$, in order to prove Theorem \ref{differiscobis}, we need first
to establish a~differential equation for the expectation $\nu_t(f)$ for
nice functions~$f$.

\begin{Proposition}\label{vongole}
Let $f\dvtx  [0,\infty) \to\mathbb{R}$ be as in Proposition \ref
{diuretico}. Let $\mu$ be a~probability measure on $[d_{\min},\infty)$
and $\nu_t$ be as in Lemma \ref{vecchioteo1} with the choice $Q=
\operatorname{Ren} (\delta_0,\mu)$. Then the map $[0,\infty)\ni
t\mapsto\nu_t(f) \in\mathbb{R}$ is differentiable and
%
\begin{eqnarray}
\label{carbonara}
\frac{d}{dt} \nu_t(f) & =& - \int
\nu_t(dx) \int \mu_t(dy) \bigl(\lambda_\ell(y)
+ \lambda_a(y) \bigr) f(x)\nonumber
\\
&&{} + \int \nu_t (dx) \int
\mu_t (dy) \lambda_\ell(y) f(x+y)
\\
&&{} + \int\nu_t (dx) \int\mu_t (dy) \int
\mu_t (dz) \lambda_a (y) f(x+y+z).\nonumber
\end{eqnarray}
\end{Proposition}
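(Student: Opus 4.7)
The strategy mirrors that of Proposition \ref{diuretico}, but is simpler because no conditioning on $\{0\in\xi(t)\}$ is involved. We take $\cQ = \text{Ren}(\d_0, \mu)$, so that by definition $\nu_t$ is the law of the leftmost point $x_0(\xi(t))$, and recall that $\bbP_\cQ$-a.s.\ $\xi(t) \in \cN_*$, the set of admissible configurations introduced in the proof of Proposition \ref{diuretico}. It therefore suffices to exhibit a function $\Psi \in \bbD$ satisfying $\Psi(\xi) = f(x_0(\xi))$ for every $\xi \in \cN_*$; Corollary \ref{corri} then yields $\nu_t(f) = \bbE_\cQ[\Psi(\xi(t))]$ together with
\[
\frac{d}{dt}\nu_t(f) = \bbE_\cQ\bigl[\cL\Psi(\xi(t))\bigr].
\]

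The construction of $\Psi$ adapts Lemma \ref{tosse}. For $s\in\bbR$ and $\xi\in\cN(d_{\rm min})$ set $g_s(\xi):=f(\min(\xi\cap(s,\infty)))$ if $\xi\cap(s,\infty)\ne\emptyset$ and $g_s(\xi):=0$ otherwise, and define
\[
\Psi(\xi) := \frac{2}{d_{\rm min}}\int_{-d_{\rm min}/2}^{0} g_s(\xi)\,ds .
\]
For $\xi\in\cN_*$ one has $x_0(\xi)\ge 0>s$ for every $s\in(-d_{\rm min}/2,0)$, hence $g_s(\xi)=f(x_0(\xi))$ and so $\Psi(\xi)=f(x_0(\xi))$. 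Continuity of $\Psi$ on $\cN(d_{\rm min})$ and the bound $|||\Psi|||<\infty$ are obtained by repeating the arguments of Lemma \ref{tosse} with ``first point'' in place of ``second point'': continuity uses Lemma \ref{gnocchi}(ii) and the decay \eqref{neve} via dominated convergence, while the estimates $|\nabla_k\Psi|,\,|\nabla_{(k,k')}\Psi|\le 2\sup_{x\ge kd_{\rm min}}|f(x)|$ (nontrivial only for $k\ge 0$, since an $I_k$ with $k\le -1$ lies to the left of the integration window and cannot affect $\min(\xi\cap(s,\infty))$ for any $s\in(-d_{\rm min}/2,0)$), combined again with \eqref{neve}, give the required summability.

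The computation of $\cL\Psi$ on $\cN_*$ is then direct. For $\xi=\{x_0<x_1<x_2<\cdots\}\in\cN_*$ with $d_i:=x_i-x_{i-1}$, the only moves appearing in \eqref{santi} that change the leftmost point come from the first domain $[x_0,x_1]$: at rate $\l_\ell(d_1)$ the point $x_0$ is erased (new leftmost $x_1$), at rate $\l_a(d_1)$ both $x_0,x_1$ are erased (new leftmost $x_2$), whereas the $\l_r(d_1)$ move only removes $x_1$ and every move involving a subsequent domain fixes $x_0$. Hence
\[
\cL\Psi(\xi) = \l_\ell(d_1)\bigl[f(x_1)-f(x_0)\bigr] + \l_a(d_1)\bigl[f(x_2)-f(x_0)\bigr],\qquad \xi\in\cN_*.
\]
By Lemma \ref{vecchioteo1}(i) with $\nu=\d_0$, $\cQ_t=\text{Ren}(\nu_t,\mu_t)$, so under $\bbP_\cQ$ the variables $x_0,d_1,d_2$ at time $t$ are independent with marginals $\nu_t,\mu_t,\mu_t$. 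Substituting $x_1=x_0+d_1$ and $x_2=x_0+d_1+d_2$ and taking expectations in the displayed identity produces the four terms on the right-hand side of \eqref{carbonara}, completing the proof.

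The only step that is not a routine generator computation is the verification that $\Psi\in\bbD$; this is the main obstacle and is dealt with, as indicated above, by a close imitation of Lemma \ref{tosse} in the slightly different setting where one tracks the first (rather than second) point of $\xi$ to the right of a varying reference point $s$.
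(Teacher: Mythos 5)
Your proposal is essentially identical to the paper's proof: both construct an auxiliary function (the paper calls it $\Theta$, with averaging window $(-d_{\rm min},0)$ rather than your $(-d_{\rm min}/2,0)$) that agrees with $f(x_0(\xi))$ on $\cN_*$ and belongs to $\bbD$, invoke Lemma~\ref{tosse}'s arguments for the membership in $\bbD$, apply \eqref{santi} to compute $\cL\Theta$ on $\cN_*$, and then exploit the independence of $x_0$, $d_1=x_1-x_0$, $d_2=x_2-x_1$ under $\cQ_t=\text{Ren}(\nu_t,\mu_t)$. One small slip: your claim that $\nabla_k\Psi=0$ for all $k\le -1$ is not quite right, because $I_{-1}=(-d_{\rm min},0]$ overlaps the integration window $(-d_{\rm min}/2,0)$, so a point of $\xi$ in $I_{-1}\cap(s,0]$ can be $\min(\xi\cap(s,\infty))$ for some $s$ in the window; the correct statement (as in Lemma~\ref{tosse}) is that $\nabla_k\Psi\ne 0$ forces $k\ge -1$. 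This is harmless, since $\D_\Psi(-1)\le 2\|f\|_\infty<\infty$ and summability over $k\ge -1$ still follows from \eqref{neve}.
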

%

\begin{pf}
We extend $f$ as continuous function to all $\mathbb{R}$, constant on
$(-\infty,0]$. Given $s \in\mathbb{R}$ define
\[
f_s (\xi) = \cases{ 0, &\quad if $\xi\cap(s, \infty)=0$,
\cr
f
\bigl( z \bigl(\xi\cap(s,\infty) \bigr) \bigr), &\quad otherwise,}
\]
where $ z (\xi\cap(s,\infty) )$ denotes the first point from
the left of $\xi\cap(s, \infty)$. Then the function
\[
\Theta\dvtx \mathcal{N}(d_{\min} ) \ni\xi\mapsto\frac{1}{d_{\min} }\int
_{-d_{\min}}^{0} f_s(\xi) \,ds \in\mathbb{R}
\]
belongs to $\mathbb{D}$ and
$\Theta(\xi) = f(x_0(\xi)) $ if $\xi\in\mathcal{N}_*$. The proof
is similar
to the one of Lemma \ref{tosse}, and we omit the details.

Set $\mathcal{Q}= \operatorname{Ren} (\delta_0, \mu)$. Note that
$\mathbb{P}_\mathcal{Q}$-a.s. $\xi(t)$ belongs to the set
$\mathcal{N}_*$ of configurations $\xi\in\mathcal{N}(d_{\min})$ such
that $\xi\subset[0, \infty)$, $\xi\cap(0, d_{\min}/2]=\varnothing$, and
$\xi$ is given by an increasing sequence of points diverging to
$\infty$. Points in $\xi\in\mathcal{N}_*$ are labeled as
$x_0(\xi),x_1(\xi),x_2 (\xi), \ldots$ in increasing order. Hence, we
can write
\[
\nu_t(f) = \mathbb{E}_{\mathcal{Q}} \bigl[ f \bigl(x_0
\bigl(\xi (t)\bigr) \bigr) \bigr] = \mathbb{E}_{\mathcal{Q}} \bigl[ \Theta
\bigl(\xi(t)\bigr) \bigr].
\]
Using that $\Theta\in\mathbb{D}$ and therefore \eqref{santi},
one concludes that the map $t \mapsto\nu_t(f)$
is differentiable and that
\begin{eqnarray*}
\frac{d}{dt} \nu_t(f) & =& \mathbb{E}_\mathcal{Q} \bigl[
\mathcal{L} \Theta\bigl(\xi(t)\bigr) \bigr]
\\
& =& \mathbb{E}_{\mathcal{Q}} \bigl[ \lambda_\ell
(x_1-x_0)\bigl[f(x_1)-f(x_0)
\bigr] +\lambda_a(x_1-x_0)
\bigl[f(x_2)-f(x_0)\bigr] \bigr]
\\
& =& \int\nu_t(dx) \int\mu_t(dy)\lambda_\ell(y)
\bigl[f(x+y)-f(x)\bigr]
\\
&&{} + \int\nu_t(dx) \int
\mu_t(dy) \int\mu_t(dz) \lambda_a(y)
\bigl[f(x+y+z)-f (x)\bigr],
\end{eqnarray*}
where we used, for simplicity of notation, $x_0=x_0(\xi(t))$,
$x_1=x_1(\xi(t))$ and $x_2=x_2(\xi(t))$ and the fact that $x_0$ has law
$ \nu_t$, while $x_1-x_0$ and $x_2-x_1$ have law $\mu_t$.
\end{pf}

\begin{pf*}{Proof of Theorem \ref{differiscobis}}
As in the proof of Theorem \ref{differisco} we can take $s>0$. Using
Proposition \ref{vongole} with the function $f\dvtx  x \mapsto
e^{-sx}$, we get that $t \mapsto L_t(s)$ is differentiable and that
\begin{eqnarray*}
\frac{d}{dt} L_t(s) & =& - \int \nu_t(dx) \int
\mu_t(dy) \bigl(\lambda_\ell(y) + \lambda_a(y)
\bigr) e^{-sx}
\\
&&{} +\int \nu_t (dx) \int \mu_t (dy)
\lambda_\ell(y) e^{-sx-sy}
\nonumber
\\
&&{} + \int\nu_t (dx) \int\mu_t (dy) \int
\mu_t (dz) \lambda_a (y) e^{-sx-sy-sz}.
\end{eqnarray*}
The above equation corresponds to \eqref{bucatini}.

Consider the case $\lambda_a\equiv0$. Then if $\lambda_\ell\equiv0$,
trivially $L_t=L_0$ for any $t \geq0$. While for
$\lambda_r\equiv\gamma\lambda_\ell$, one has
$\lambda_\ell\equiv\frac{1}{1+\gamma} \lambda$ so that the differential
equation \eqref{bucatini} satisfied by $L_t$ reads
\begin{eqnarray*}
\partial_t L_t(s) &=& \frac{L_t(s)}{1+\gamma} \biggl[ -\int
\mu_t(dy) \lambda(y) + \int \mu_t(dy)
\lambda(y)e^{-sy} \biggr]
\\
&=& \frac{L_t(s)}{1+\gamma} \bigl(
\partial_t H_t(0) - \partial_t
H_t(s) \bigr),
\end{eqnarray*}
where we used \eqref{davide}. Integrating and using that $\lim_{t \to
\infty} H_t (s) = 0$ leads to \eqref{monte1} and~\eqref{monte2}.

Now consider the case $\lambda_\ell\equiv0$$,  \lambda_r \equiv0$.
Noticing that $\lambda_a \equiv\lambda$ and using \eqref{davide},
from~\eqref{bucatini} we obtain that $
\partial_t \ln L_t(s) = \partial_t H_t(0)- G_t (s) \partial_t
H_t(s) $. At this point we apply point~\textup{(ii)} in Theorem
\ref{differisco} with $\gamma=0$ getting for $s>0$
\[
\partial_t \ln L_t(s)= \partial_t
H_t(0) - \frac{G_t(s)\partial_t
G_t(s) }{1-
G_t(s)^2} = \partial_t \biggl\{
H_t (0) +\frac{1}{2} \ln \bigl(1-G_t(s)^2
\bigr) \biggr\}.
\]
This leads to \eqref{gransasso1}, which implies \eqref{gransasso2}
after taking the limit $t\to\infty$.
\end{pf*}


\section{\texorpdfstring{Abstract generalization of the transformation introduced in \cite{FMRT0}}
{Abstract generalization of the transformation introduced in [9]}}\label{trasformo}


We extend here a~transformation developed in \cite{FMRT0}, Section~5,
allowing us to re\-phrase the nonlinear identities on the Laplace
transforms appearing in Theorems \ref{differisco}
and~\ref{differiscobis} into linear identities involving Radon
measures. This transformation will be crucial in our analysis of the
limiting behavior of the HCP process; see Sections~\ref{fegatograsso}
and~\ref{provateo3}.

Consider the OCP starting from a~renewal SPP with interval law $\mu$
having support on $[d_{\min}, \infty)$ [i.e., $\xi(0)$ has law
$\operatorname{Ren}(\nu, \mu) $ or $\operatorname{Ren} (\mu)$ or
$\operatorname{Ren}_\mathbb{Z}(\mu )$]. We recall that $\mu_\infty$
denotes the interval law at the end of the epoch (see Lemma~\ref{vecchioteo1}),
and we call $X_0, X_\infty$ some generic random
variables with law $\mu, \mu_\infty$, respectively. Then we define the
rescaled random variables
\[
Z_0=X_0/d_{\min}\quad\mbox{and}\quad
Z_\infty=X_\infty/d_{\max}
\]
and we set, for $s \geq0$,
\begin{eqnarray*}
g_0(s) &=& \mathbb{E} \bigl( e^{-s Z_0} \bigr),\qquad g_\infty
(s)=\mathbb{E} \bigl(e^{- s Z_\infty} \bigr),
\\
h_0 (s) &=& \mathbb{E} \bigl( e^{-s Z_0 }; Z_0< a \bigr),\qquad a:=
\frac{d_{\max}}{d_{\min}}.
\end{eqnarray*}
By definition and because of assumption \textup{(A2)} and Lemma
\ref{volarelontano}\textup{(iii)}, we have that $Z_0\ge1$, $
Z_\infty\ge1$ and $ a \in[1,2]$. In particular, $g_0(s),g_\infty(s)
\in(0,1)$ for $s>0$.

We observe that equations \eqref{uovone} and \eqref{salsona} have the
following common structure:
\[
\mathcal{F} \bigl( g_\infty(as ) \bigr) = \mathcal{F} \bigl(
g_0(s) \bigr) -h_0(s), \qquad s>0,
\]
where
%
\begin{equation}
\label{richiamo2} \mathcal{F}(x):= \cases{\displaystyle -\ln(1-x), &\quad for equation
\eqref{uovone},
\vspace*{3pt}\cr
\displaystyle\frac{\gamma+1}{\gamma+2}\ln\frac{1+x/(\gamma+1)}{1-x}, &\quad for equation
\eqref{salsona}.}
\end{equation}
With these examples in mind, we introduce the following
definition.

\begin{definition}[{[Hypothesis (H)]}] \label{cavolo}
We say that a~real function $\mathcal{F}$ satisfies hypothesis
\textup{(H)} if there exists $\varepsilon>0$ such that $\mathcal{F}$ is
defined on $(-\varepsilon,1)$ and:
\begin{longlist}[(H1)]
\item[(H1)] $\mathcal{F}$ is $C^1$;
\item[(H2)] the derivative $\mathcal{F}'$ admits an analytic expansion
    on $(0,1)$ of the form $\mathcal{F}'(x)=\sum_{n=0}^\infty c_n x^n $
    with $c_n \geq0$ for all $n \geq0$;
\item[(H3)] $\mathcal{F}(0)=0$, $\mathcal{F}$ is bijective from
    $(-\varepsilon, \varepsilon)$ to an open interval $U$
    containing $0$ such that $\mathcal{R}:= \mathcal {F}^{-1}\dvtx
    U \to(-\varepsilon, \varepsilon)$ is an analytic function and
    $\mathcal{R}'(0)=1$ [i.e., $\mathcal{F}'(0)=1$].
\end{longlist}
\end{definition}
By analytic expansion in \textup{(H3)} we mean that $\mathcal{R}(x)=
\sum_{k=1}^\infty r_k x^k$ for all $x \in U$, where the series on the
RHS is absolutely convergent.

One easily verifies that both functions $\mathcal{F}$ defined in
\eqref{richiamo2}
satisfy hypothesis \textup{(H)} since for $|x|<1$ we have the analytic
expansions $ -\ln(1-x)=x+x^2/2+ x^3/3+\cdots,$ while
%
\begin{equation}
\label{zittizitti} \ln\frac{ 1+x/(\gamma+1)}{1-x} = 
\sum
_{n=1}^\infty\frac{x^{n}}{n} \bigl[1+(-1)^{n+1}
(\gamma+1)^{-n} \bigr].
\end{equation}
%
Moreover, for $|x| <1$, it holds
%
\begin{equation}
\label{spaziale} \mathcal{R}(x)= \cases{\displaystyle 1- e^{-x},
\cr
\qquad \mbox{if }\mathcal{F}(x)= -\ln(1-x),
\vspace*{6pt}\cr
\displaystyle\frac{\exp \{((\gamma+2)/(\gamma+1))x \} -1}{ \exp
\{((\gamma+2)/(\gamma+1))x \}+ 1/(\gamma+1)},
\vspace*{3pt}\cr
\qquad \mbox{if } \displaystyle\mathcal{F}(x)=\frac{\gamma+1}{\gamma+2}\ln\frac{ 1+x/(\gamma+1)}{1-x}.}
\end{equation}

Finally, we introduce the following notation. Given a~strictly
increasing function $\phi\dvtx [0,\infty) \rightarrow[0,\infty)$ and
a~Radon measure $\mathfrak{m}$ on $[0,\infty)$, we denote by
$\mathfrak{m}\circ\phi$ the new Radon measure on $[0,\infty)$ defined
by
\[
\mathfrak{m} \circ\phi(A)= \mathfrak{m}\bigl( \phi(A)\bigr), \qquad
A\subset\mathbb{R}\mbox{ Borel}.
\]
Note that $\mathfrak{m}\circ\phi$ is indeed a~measure, due to the
injectivity of $\phi$. Moreover, it holds
%
\begin{equation}
\label{regoletta} \int_0^\infty f(x) \mathfrak{m}
\circ\phi(dx)= \int_{[\phi(0),\phi(\infty)]} f\bigl( \phi^{-1} (x)
\bigr) \mathfrak{m}(dx).
\end{equation}
Above, and in what follows, we use the short notation $\int_0^\infty$
for $\int_{[0, \infty)}$.

\begin{Theorem}\label{astratto}
Let $\mathcal{F}$ be a~function satisfying hypothesis \textup{(H)}.
Then there exist unique Radon nonnegative measures $t_0 (dx)$ and
$t_\infty(dx)$ on $[0,\infty)$ such that for all $s >0$ it holds
%
\begin{eqnarray}
 \mathcal{F} \bigl( g_0(s) \bigr)&=& \int_0^\infty
\frac{e^{-s(1+x) }}{1+x} t_0 (dx),\label{sorriso1}
\\
\mathcal{F} \bigl(g_\infty(s) \bigr)&=& \int_0^\infty
\frac{e^{-s(1+x)
}}{1+x} t_\infty (dx),\label{sorriso2}
\\
 h_0(s)&=&\int_{[0,a-1) } \frac{e^{-s(1+x)}}{1+x}
t_0(dx).\label{sorriso3}
\end{eqnarray}
Moreover, the equation
%
\begin{equation}
\label{fidanzato} \mathcal{F} \bigl( g_\infty(a s ) \bigr) = \mathcal{F}
\bigl( g_0(s) \bigr) -h_0(s), \qquad s>0
\end{equation}
is equivalent to the
relation
%
\begin{equation}
\label{bianchini} t_\infty= (1/a) t_0 \circ\phi,
\end{equation}
where the linear function $\phi\dvtx [0,\infty) \rightarrow[0,\infty)$
is defined as $\phi(x)= a(1+x)-1$.
\end{Theorem}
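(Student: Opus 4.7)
The plan is to exploit (H2) and (H3) to Taylor-expand $\cF$ and then recognize each power $g_0(s)^{n+1}$ as the Laplace transform of the $(n{+}1)$-fold convolution of the law of $Z_0$. By (H3), $c_0 = \cF'(0) = 1$, and integrating $\cF'(x) = \sum_{n \geq 0} c_n x^n$ from $0$ to $y \in [0,1)$ gives $\cF(y) = \sum_{n \geq 0} \frac{c_n}{n+1} y^{n+1}$. Since $g_0(s) \in (0,1)$ for $s > 0$, and $g_0(s)^{n+1} = \bbE(e^{-s Y_{n+1}})$ with $Y_{n+1} = Z_0^{(1)} + \cdots + Z_0^{(n+1)}$ i.i.d., whose law $\rho_n$ is supported on $[n+1,\infty)$ (because $Z_0 \geq 1$), we obtain
\[
\cF(g_0(s)) = \sum_{n \geq 0} \frac{c_n}{n+1} \int_{[n+1,\infty)} e^{-sy} \rho_n(dy).
\]
I would then match this with \eqref{sorriso1} by defining $t_0$ as the push-forward under $y \mapsto y-1$ of the positive Radon measure $\sigma_0(dy) := y \sum_{n \geq 0} \frac{c_n}{n+1} \rho_n(dy)$ on $[1,\infty)$, and analogously construct $t_\infty$ starting from $Z_\infty$ to obtain \eqref{sorriso2}.

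I would next verify that $t_0$ is Radon: on any bounded subinterval $[0,M]$ only the indices $n \leq M$ contribute (since $\supp(\rho_n) \subset [n+1,\infty)$), and the total mass is bounded by $(1+M) \sum_{n \leq M} c_n/(n+1) < \infty$. Uniqueness follows from the classical uniqueness of the Laplace transform for Radon measures: if two such measures give equal right-hand sides in \eqref{sorriso1} for all $s > 0$, their push-forwards under $x \mapsto 1+x$, weighted by $1/y$, are Radon measures on $[1,\infty)$ with identical Laplace transforms, hence equal. Identity \eqref{sorriso3} is then immediate from the construction: by (A2), $a \in [1,2]$, so $[0,a-1) \subset [0,1)$; only the $n=0$ piece of $t_0$ supplies mass there (for $n \geq 1$, $\rho_n$ sits in $[n+1,\infty) \subset [2,\infty)$), and since $c_0 = 1$ this contribution is exactly $\bbE(e^{-sZ_0}; Z_0 < a) = h_0(s)$.

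For the equivalence between \eqref{fidanzato} and \eqref{bianchini}, substituting \eqref{sorriso1}--\eqref{sorriso3} into \eqref{fidanzato} cancels the $[0, a-1)$ piece of $t_0$ and leaves
\[
\int_0^\infty \frac{e^{-as(1+u)}}{1+u} t_\infty(du) = \int_{[a-1,\infty)} \frac{e^{-s(1+x)}}{1+x} t_0(dx).
\]
Applying the change of variable $x = \phi(u) = a(1+u) - 1$ on the right (a bijection of $[0,\infty)$ onto $[a-1,\infty)$) and using \eqref{regoletta} recasts the right-hand side as $\int_0^\infty \frac{e^{-as(1+u)}}{a(1+u)} (t_0 \circ \phi)(du)$. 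Both sides are now Laplace-type integrals of the form appearing in \eqref{sorriso1}, so the uniqueness established above yields $t_\infty = (1/a)\, t_0 \circ \phi$; the converse direction is just the reverse substitution. The main technical obstacle throughout is to justify the term-by-term manipulation of the infinite series defining $t_0$ and its Laplace integrals, but this rests precisely on the non-negativity of the $c_n$ (guaranteed by (H2)) and the localization $\supp(\rho_n) \subset [n+1,\infty)$, which together make Tonelli the single work-horse; once these are in place, the change of variable and the Laplace uniqueness arguments are standard.
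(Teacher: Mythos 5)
Your proof is correct and takes a genuinely different route from the paper. The paper establishes existence of $t_0$ by showing (Lemma \ref{gola}) that $w(s):=-e^s\cF'(g_0(s))g_0'(s)$ is completely monotone and invoking Bernstein's theorem, then proves the localization identity \eqref{sorriso3} (Lemma \ref{limonata}) by an inversion argument: it expands $\cR=\cF^{-1}$ around $0$, writes the law of $Z_0$ as a signed series $\sum_k r_k m^{(k)}$ of convolution powers of the measure $m$ associated to $t_0$, shows a priori that the negative part must vanish, and finally uses $r_1=\cR'(0)=1$ together with $\supp(m^{(k)})\subset[k,\infty)$ to identify $m_*$ with $m$ on $[1,2)\supset[1,a)$. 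You go the other way: you expand $\cF$ itself (rather than its inverse), build $t_0$ explicitly as a positive series of convolution powers of the law of $Z_0$, and read off \eqref{sorriso3} directly from the support localization $\supp(\rho_n)\subset[n+1,\infty)$ together with $c_0=\cF'(0)=1$. This is a cleaner construction: you never invoke complete monotonicity or Bernstein's theorem, and you avoid entirely the delicate signed-measure decomposition and the argument that the negative part vanishes. The one thing the paper's route buys is the representation \eqref{pompiere1} of $t_0$ as a Laplace inverse of $w$, which is reused later (e.g.\ in the proof of Theorem \ref{astrattobis}, where \eqref{baseperarrosto} is fed into the Tauberian theorem via $w^{(1)}$); but for Theorem \ref{astratto} alone your construction suffices and is arguably the more transparent one.
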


\begin{remark}\label{codroipo}
Combining \textup{(H2)} and \textup{(H3)} in Definition \ref{cavolo},
it follows that the map $\mathcal{F}$ is strictly increasing on
$[0,1)$. In particular, equation \eqref{fidanzato} univocally
determines $g_\infty$ knowing $g_0$ and $h_0$ on $(0,\infty)$, and
similarly equations \eqref{sorriso1} and~\eqref{sorriso2} univocally
determine $g_\infty$ and $g_0$ knowing $t_\infty$ and $t_0$,
respectively.
\end{remark}

We divide the proof of the above theorem in different steps.
%
\begin{Lemma}\label{gola}
Let $Z$ be a~random variable such that $Z \geq1$ and define
$g(s)=\mathbb{E} [ e^{-sZ} ]$, $s \geq0$. Let $w\dvtx (0, \infty)
\rightarrow\mathbb{R}$ be the unique function such that
%
\begin{equation}
\label{pompiere0} \mathcal{F} \bigl( g(s) \bigr) = \int_s
^\infty\,du e^{-u} w(u),\qquad s>0,
\end{equation}
that is,
%
\begin{equation}
\label{pompiere1} w (s):= -e^s \mathcal{F}' \bigl(g(s)
\bigr) g'(s), \qquad s >0.
\end{equation}
Then the function $w$ is completely monotone.\footnote{Recall that
a~function $f\dvtx  (0,\infty ) \to\mathbb{R}$ is said to be completely
monotone if it is $\mathcal {C}^\infty$ and if for any integer $k$,
$(-1)^k f^{(k)} \geq0$.} In particular, there exists a~unique Radon
measure $t(dx)$ on $[0, \infty)$ (not necessarily of finite total mass)
such that
%
\begin{equation}
\label{cotoletta} w(s)= \int_0 ^\infty
e^{-sx} t(dx), \qquad s>0
\end{equation}
and therefore
%
\begin{equation}
\label{animainpena} \mathcal{F} \bigl(g(s) \bigr)= \int_0^\infty
\frac{e^{-s(1+x)
}}{1+x} t (dx), \qquad s>0.
\end{equation}
%
Moreover, the above identity \eqref{animainpena} univocally
determines $t(dx)$.
\end{Lemma}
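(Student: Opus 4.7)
The plan is to start from the series expansion of $\mathcal{F}'$ in (H2) and rewrite $w(s)$ as an infinite sum of Laplace transforms of non-negative Radon measures, then invoke Bernstein-type arguments to conclude.

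First, using (H2), I would write
\begin{equation*}
w(s) \;=\; \sum_{n=0}^\infty c_n\, e^{s}\, g(s)^n\, \bigl(-g'(s)\bigr),
\qquad s>0,
\end{equation*}
where the series converges pointwise on $(0,\infty)$ because $g(s)\in(0,1)$ there, so $\sum_n c_n g(s)^n=\mathcal{F}'(g(s))<\infty$. The next step is to identify each summand as a Laplace transform of a non-negative Radon measure. Denote by $\mu_Z$ the law of $Z$. Since $Z\ge 1$, the measures $\mu_Z^{*n}$ are supported in $[n,\infty)$, and $-g'(s)=\int e^{-sx}\,x\mu_Z(dx)$. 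Hence $g(s)^n(-g'(s))$ is the Laplace transform of the convolution $\mu_Z^{*n}\!*\!(x\mu_Z(dx))$, which is concentrated on $[n+1,\infty)$. Multiplying by $e^s$ shifts the supporting measure by $-1$; calling $\nu_n$ this shifted measure (supported on $[n,\infty)\subset[0,\infty)$) gives
\begin{equation*}
e^{s}g(s)^n(-g'(s))=\int_0^\infty e^{-sx}\,\nu_n(dx).
\end{equation*}

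Next, I would show that $t:=\sum_{n\ge 0} c_n\nu_n$ is a genuine Radon measure. For any $K>0$, since $\nu_n$ lives in $[n,\infty)$, only finitely many $n$ contribute to $t([0,K])$; and for these, the elementary bound $\nu_n([0,K])\le e^{K}\int_0^\infty e^{-x}\nu_n(dx)=e^{K} e g(1)^n(-g'(1))$ yields $t([0,K])\le e^{K}\cdot w(1)<\infty$. By monotone convergence,
\begin{equation*}
w(s)=\sum_{n\ge 0}c_n\int_0^\infty e^{-sx}\nu_n(dx)=\int_0^\infty e^{-sx}t(dx),
\end{equation*}
which proves both complete monotonicity of $w$ and the Laplace representation \eqref{cotoletta}.

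To obtain \eqref{pompiere0} and \eqref{animainpena}, observe that by the definition of $w$,
\begin{equation*}
e^{-u}w(u)=-\mathcal{F}'(g(u))g'(u)=-\tfrac{d}{du}\mathcal{F}(g(u)),
\end{equation*}
so integrating from $s$ to $\infty$ and using $g(\infty)=0$ (because $Z\ge 1$) together with $\mathcal{F}(0)=0$ from (H3) yields $\mathcal{F}(g(s))=\int_s^\infty e^{-u}w(u)du$. Substituting the Laplace representation of $w$ and applying Fubini to the non-negative integrand gives
\begin{equation*}
\mathcal{F}(g(s))=\int_0^\infty t(dx)\int_s^\infty e^{-u(1+x)}du=\int_0^\infty\frac{e^{-s(1+x)}}{1+x}\,t(dx).
\end{equation*}
Uniqueness of $t$ follows from the injectivity of the Laplace transform on Radon measures on $[0,\infty)$: two measures yielding the same $\mathcal{F}(g(s))$ would, after differentiating, give Laplace transforms for $w$ that agree on $(0,\infty)$, forcing the measures to coincide.

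The main technical point I anticipate is establishing that the non-negative measure $t=\sum_n c_n\nu_n$ is Radon, which rests crucially on the support property $\nu_n\subset[n,\infty)$—and this in turn rests on the assumption $Z\ge 1$, which is what makes the factor $e^s$ harmless. The positivity of the $c_n$ from (H2) then guarantees the series has no cancellation issues, and the whole argument goes through by monotone convergence without needing any quantitative control on the growth of the $c_n$.
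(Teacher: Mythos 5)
Your proof is correct and in the same spirit as the paper's, but it is more constructive. The paper's argument (invoking the proof of Lemma 5.1 of \cite{FMRT0}) proceeds by first establishing that $w$ is completely monotone: it writes $w = \bigl(-e^s g'(s)\bigr)\sum_{k} c_k g(s)^k$, uses the $c_k\geq 0$ from (H2) and the closure of completely monotone functions under sums and products to see that $\sum_k c_k g^k$ is completely monotone, checks that $-e^s g'(s)$ is completely monotone (which is exactly where $Z\geq 1$ enters, since $-g'$ is then the Laplace transform of a measure on $[1,\infty)$ and the factor $e^s$ only shifts its support back to $[0,\infty)$), and finally appeals to the Hausdorff--Bernstein--Widder theorem to obtain the representing Radon measure $t$. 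You instead bypass Bernstein's theorem: you identify the measure representing each term $e^s g(s)^n\bigl(-g'(s)\bigr)$ explicitly as the $(-1)$-shift of $\mu_Z^{*n}*\bigl(x\mu_Z(dx)\bigr)$, observe its support is $[n,\infty)$, and define $t=\sum_n c_n\nu_n$ directly, proving it is Radon by a local-finiteness estimate and then passing to the limit with Tonelli. This is a more self-contained and more informative route --- it produces an explicit series formula for $t$ --- but it relies on the same two structural facts as the paper: the non-negativity of the $c_n$ from (H2) and the crucial $Z\geq 1$ support condition, which you correctly flag as the point that makes the factor $e^s$ harmless. The derivation of \eqref{pompiere0} and \eqref{animainpena} via integration of $-\frac{d}{du}\cF(g(u))$, the use of $g(\infty)=0$ and $\cF(0)=0$, the Fubini step, and the Laplace-inversion uniqueness argument are all sound.
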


\begin{pf} 
The last statement follows from the inversion formula of the Laplace
transform. For the rest, the proof is similar to the proof of Lemma 5.1
in \cite{FMRT0}. The only slight difference is in the following
argument. By condition \textup{(H2)} and since $g(s) \in(0,1)$ for
$s>0$, we can write $w=f \sum_{k=0}^\infty c_k g^k$, $f =-e^s g'(s)$.
Since $c_k \geq0$ for all $k \geq0$ and since the product and the sum
of completely monotone functions is again completely monotone (cf.
\cite{F}) we get that $ \sum_{k=0}^\infty c_k g^k$ is completely
monotone. The rest of the proof is as in \cite{FMRT0}.
\end{pf}

\begin{Lemma}\label{limonata}
Let $Z$ be a~random variable such that $Z \geq1$, and let $g(s)$ be its
Laplace transform. Let $t$ be the unique Radon measure on $[0,\infty)$
satisfying \eqref{animainpena} and call $m(dx)$ the Radon measure with
support in $[1,\infty)$ such that
%
\begin{equation}
\label{bilancia} m (A)= \int_0 ^\infty
\frac{\mathbh{1}_{1+x \in A} }{1+x} t(dx).
\end{equation}
For each $k \geq1$, consider the convolution measure $ m ^{(k)}$
with support in $[k,\infty)$ defined as
%
\begin{equation}\qquad
m^{(k)} (A)= \int_1^\infty
m(dx_1)\int_1^\infty
m(dx_2)\cdots\int_1^\infty
m(dx_k) \mathbh{1}_{x_1 +x_2+ \cdots+ x_k\in A}.
\end{equation}
Then the law of $Z$ is given by the measure
$m_*:=\sum_{k=1}^\infty r_k m^{(k)}$,
where the coefficients $r_k$ are determined by the series expansion
$\mathcal{R}(x)= \sum_{k=1}^\infty r_k x^k$ of the function $\mathcal
{R}$ around $0$ [recall condition \textup{(H3)} in Definition
\ref{cavolo}]. In particular,
%
\begin{equation}
\label{brucogiallo} \mathbb{E} \bigl[ e^{-sZ}; Z< a\bigr]= \int
_{[0,a-1) } \frac
{e^{-s(1+x)}}{1+x} t(dx),\qquad s \geq0.
\end{equation}
\end{Lemma}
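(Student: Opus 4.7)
The plan is to identify $m_*$ with the law of $Z$ by matching Laplace transforms. By \eqref{animainpena}, the Radon measure $m$ defined through \eqref{bilancia} has Laplace transform $\int_1^\infty e^{-sy}m(dy) = \cF(g(s))$ for $s > 0$. Consequently $m^{(k)}$, supported on $[k,\infty)$, has Laplace transform $\bigl(\cF(g(s))\bigr)^k$. Since $\cR = \cF^{-1}$ has the analytic expansion $\cR(x)=\sum_{k\geq 1} r_k x^k$, the formal identity $\cR\bigl(\cF(g(s))\bigr) = g(s)$ suggests that $\sum_k r_k m^{(k)}$ should reproduce the law of $Z$, and the task is to make this rigorous.

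First I would check that $m_* := \sum_{k\geq 1} r_k m^{(k)}$ is well defined as a locally finite signed Radon measure on $[1,\infty)$. This is immediate: since $m^{(k)}$ is supported on $[k,\infty)$, for any bounded Borel set $A$ only finitely many $k$ contribute to $m_*(A)$. One should emphasize that signed measures really are needed here, as the $r_k$ need not be non-negative (for instance $r_k=(-1)^{k+1}/k!$ in case $(i)$ of Theorem \ref{differisco}).

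The main step is to match Laplace transforms. Hypothesis (H2) forces $\cF$ to be strictly increasing on $[0,1)$ with $\cF(0)=0$, so $\cF\bigl(g(s)\bigr)>0$ and, since $Z\geq 1$ yields $g(s)\downarrow 0$ as $s\to\infty$, also $\cF\bigl(g(s)\bigr)\downarrow 0$. Let $\rho>0$ be the radius of absolute convergence of the power series of $\cR$ at $0$ (condition (H3)). For $s$ large enough that $\cF\bigl(g(s)\bigr)<\rho$, the majorant $\sum_k |r_k|\bigl(\cF(g(s))\bigr)^k$ converges; since $|m_*|\leq \sum_k |r_k|\,m^{(k)}$, this bounds $\int e^{-sy}|m_*|(dy)<\infty$, and Fubini--Tonelli yields
\begin{equation*}
\int_1^\infty e^{-sy}m_*(dy) = \sum_{k\geq 1} r_k \bigl(\cF(g(s))\bigr)^k = \cR\bigl(\cF(g(s))\bigr) = g(s).
\end{equation*}
Thus $m_*$ and the law of $Z$ share the same Laplace transform on a right half-line; by Lerch's uniqueness theorem applied to their difference (a signed Radon measure whose total variation admits a finite Laplace transform there), one concludes that $m_*$ coincides with the law of $Z$.

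Formula \eqref{brucogiallo} then drops out: since $m^{(k)}$ is supported on $[k,\infty)$ and $a\in[1,2]$, only the term $k=1$ contributes to the integral over $[1,a)$, and $r_1=\cR'(0)=1$ from (H3), so by \eqref{bilancia}
\begin{equation*}
\bbE\bigl[e^{-sZ};\,Z<a\bigr] = \int_{[1,a)} e^{-sy}m(dy) = \int_{[0,a-1)} \frac{e^{-s(1+x)}}{1+x}\,t(dx).
\end{equation*}
The main delicate point I anticipate is the Laplace-transform uniqueness for the signed measure $m_*$: it reduces to the fact that its total variation admits a finite Laplace transform on some right half-line, which is precisely what absolute convergence of $\sum |r_k|x^k$ inside $\rho$ provides.
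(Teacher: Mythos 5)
Your proof is correct and follows essentially the same route as the paper: expand $\cR\bigl(\cF(g(s))\bigr)=g(s)$ for $s$ large using the analytic series of $\cR$, justify the resulting interchange of summation and integration by absolute convergence inside the radius of $\cR$, and then invoke Laplace-transform uniqueness to identify $m_*$ with the law of $Z$. The only cosmetic difference is in the final uniqueness step: where you cite the signed-measure (Lerch) version directly, the paper spells out the Hahn--Jordan decomposition $m_*=m_*^{(+)}-m_*^{(-)}$ and applies Feller's uniqueness theorem for non-negative measures to $p_Z+m_*^{(-)}$ and $m_*^{(+)}$, which simultaneously shows $m_*^{(-)}=0$; both arguments are equivalent.
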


We point out that, given a~bounded Borel set $A$, since $m^{(k)}$ has
support in $[k,\infty)$, the series $m_*(A)=\sum_{k=1}^\infty r_k
m^{(k)}(A) $ is a~finite sum.

\begin{pf*}{Proof of Lemma \ref{limonata}}
The proof is a~generalization of the proof of Lemma~5.2 in
\cite{FMRT0}; hence we give only a~sketch.

By definition of $m(dx)$ and
by \eqref{animainpena} we can write
%
\begin{equation}
\label{succhetto} \mathcal{F} \bigl( g(s) \bigr)= \int_0^\infty
\frac{e^{-s(1+x)
}}{1+x} t (dx) = \int_0^\infty
e^{-sx} m(dx), \qquad s>0.
\end{equation}
Since $\lim_{s \to\infty} g(s)=0$, by \textup{(H3)} we conclude that
$\mathcal{F} (g(s) )$ goes to zero as $s$ goes to $\infty$. In
particular, by \textup{(H3)}, for $s$ large enough we can invert
\eqref{succhetto} and use the analytic expansion of $\mathcal{R}$
getting
%
\begin{equation}
\label{asiagoDOP} \quad g(s) = \mathcal{R} \biggl(\int_0^\infty
e^{-sx} m(dx) \biggr)= \sum_{k=1}^\infty
r_k \int_0 ^k e^{-s x}
m^{(k)} (dx), \qquad s\mbox{ large},
\end{equation}
where, in the last equality, we used that $ ( \int_0^\infty e^{-sx}
m(dx)  )^k = \int_0^\infty e^{-sx } m^{(k)} (dx)$. Reasoning as in the
proof of Lemma 5.2 in \cite{FMRT0} and using the analytic expansion of
$\mathcal{R}(x)$ around $0$, we get that $g(s) = \int_{[0,\infty)}
e^{-sx }m_*(dx) $. As in \cite{FMRT0} we can conclude that $m_*$ is
a~nonnegative measure.

To complete the proof it remains to check \eqref{brucogiallo}. It is
enough to prove the thesis for $s >0$, since the case $s=0$ follows by
monotonicity. To this aim we observe that, since $m^{(k)}$ has support
contained in $[k, \infty)$ and since $r_1= \mathcal{R}'(0)=1$ by
\textup{(H3)}, the measure $m_*$ equals $m$ on $[1, 2)$. Since $a\leq2$
and using the definition of the measure $m$ given by \eqref{bilancia},
we obtain that
\begin{eqnarray*}
\mathbb{E} \bigl[ e^{-sZ}; Z< a\bigr]&=& \int_{ [1,a) } e^{-sx }
p_Z(dx)= \int_{ [1,a) } e^{-sx } m(dx)
\\
&=& \int_{[0,a-1)}
\frac{e^{-s(1+x)}}{1+x} t(dx).
\end{eqnarray*}
This completes the proof of \eqref{brucogiallo}.
\end{pf*}

We are now in position to prove Theorem \ref{astratto}.

\begin{pf*}{Proof of Theorem \ref{astratto}}
Observe that equations \eqref{sorriso1} and \eqref{sorriso2} follow
from Lemma \ref{gola}, and that equation \eqref{sorriso3} follows from
\eqref{brucogiallo} in Lemma \ref{limonata}.

To prove the last statement we write $\rho(dx) $ for the measure on the
RHS of \eqref{bianchini}. Using that $a[\phi^{-1}(x)+1]= 1+ x $, we
obtain for $s\geq0$ that
\begin{eqnarray*}
\int_0^\infty\frac{e^{-a s(1+x) }}{1+x}\rho(dx) &=& a^{-1}
\int_{[\phi(0), \infty)}\frac{e^{-s (1+x)}}{a^{-1}(1+x)}t_0(dx)
\\
&=& \int_{[a-1,\infty)}\frac{e^{-s (1+x)}}{1+x} t_0 (dx).
\end{eqnarray*}
Using also \eqref{sorriso1}, \eqref{sorriso2}, \eqref{sorriso3} we
conclude that
equation \eqref{fidanzato} is equivalent to
%
\begin{equation}
\label{petrolio} \int_0^\infty
\frac{e^{-as(1+x) }}{1+x}t_{\infty}(dx) = \int_0^\infty
\frac{e^{-a s(1+x)}}{1+x}\rho(dx) \qquad\forall s > 0.
\end{equation}
Thinking the above integrals as Laplace transforms of suitable
nonnegative measures in the variables $as$, by Theorem 1a in Section
XIII.1 in \cite{F} we conclude that \eqref{petrolio} is equivalent to
the identity $t_\infty= \rho$.
\end{pf*}


\section{\texorpdfstring{Asymptotic of the interval law for HCP: Proof of Theorem \protect\ref{teo2}}
{Asymptotic of the interval law for HCP: Proof of Theorem 2.12}} \label{fegatograsso}

The key result of this section is Theorem \ref{astrattobis} which in
turn allows us to prove easily Theorem \ref{teo2}. Theorem
\ref{astrattobis} is proved by using the recursive identities for the
OCP process established in Section~\ref{prova_differisco} and our
extension of the transformation of~\cite{FMRT0} derived in the previous
section.

Let us start by recalling the notation of Theorem \ref{teo2} which will
be used throughout this section and by giving a~few more definitions.
We let $\mu$ be a~probability measure on $[d^{(1)}, \infty)=[1,\infty)$
and consider the HCP such that $\xi^{(1)}(0)$ has law of the form
$\operatorname{Ren} (\nu, \mu)$, $\operatorname{Ren}(\mu)$ or
$\operatorname{Ren}_\mathbb{Z}(\mu)$ ($\nu$ being a~probability measure
on~$\mathbb{R}$). Call $\mu^{(n)}$ the interval law of $\xi^{(n)}(0) $,
that is, at the beginning of epoch $n$, and let $X^{(n)}$ be a~generic
random variable with law $\mu^{(n)}$ and $Z^{(n)}$ be the rescaled
variable $Z^{(n)} = X^{(n)} /d^{(n)}$. Finally, for any $n \geq1 $ and
$s \geq 0$ set
%
\begin{equation}
\label{sciopero} g^{(n)}(s):=\mathbb{E} \bigl(e^{-sZ^{(n)}} \bigr),
\qquad h^{(n)}(s):=\mathbb{E} \bigl(e^{-sZ^{(n)}}
\mathbh{1}_{1 \leq Z^{(n)}
<a_n} \bigr)
\end{equation}
with
%
\begin{equation}
\label{sciopero2}a_n:=d^{(n+1)}/d^{(n)}.
\end{equation}
Note that $\mu= \mu_1$ and $g^{(1)}(s)= g(s): = \int e^{-sx} \mu(dx)$.
The following holds:

\begin{Theorem}\label{astrattobis}
Let $\mathcal{F}$ be a~function satisfying hypothesis \textup{(H)} (see
Definition~\ref{cavolo}), and assume that for some number $\kappa$ it
holds
%
\begin{equation}
\label{baseperarrosto} \lim_{s\downarrow0} -s \mathcal{F}' \bigl(
g(s) \bigr) g'(s) = \kappa
\end{equation}
%
and that
%
\begin{equation}
\label{mennea} \mathcal{F}\bigl(g^{(n+1)}(a_n s)\bigr) =
\mathcal{F}\bigl(g^{(n)}( s)\bigr) - h^{(n)}(s), \qquad n \geq1, s >0.
\end{equation}
Then it must be $\kappa\geq0$. Moreover, the rescaled variable $Z^{(n)}
$ weakly
converges to the random variable $Z^{(\infty)}\equiv Z^{(\infty
)}_{\kappa}$
whose Laplace transform $g^{(\infty)}_\kappa$ satisfies
%
\begin{equation}
\label{radicchio} \mathcal{F} \bigl( g^{(\infty)}_{\kappa}(s) \bigr) =
\kappa\int_1^\infty \frac{e^{-sx}}{x} \,dx, \qquad
s >0.
\end{equation}
If $\kappa=0$, then $Z^{(\infty)}_{\kappa}= \infty$, while if
$\kappa>0$, then
$Z^{(\infty)}_\kappa$ takes value in $[1, \infty)$.
\end{Theorem}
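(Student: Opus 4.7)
The plan is to reduce the non-linear recursion \eqref{mennea} to a linear identity on Radon measures via Theorem \ref{astratto}, iterate it in closed form, and then extract the limit by a Karamata-type Tauberian argument. Applying Theorem \ref{astratto} to the OCP associated to each epoch (in which $d_{\min} = d^{(n)}$, $d_{\max} = d^{(n+1)}$, $a = a_n$) yields Radon measures $t^{(n)}$ on $[0,\infty)$ with
\[
\cF(g^{(n)}(s)) = \int_0^\infty \frac{e^{-s(1+x)}}{1+x}\, t^{(n)}(dx),
\]
related by the linear recursion $t^{(n+1)} = (1/a_n)\, t^{(n)} \circ \phi_n$ with $\phi_n(x) = a_n(1+x)-1$. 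Since each $\phi_k$ is affine and $a_1 \cdots a_n = d^{(n+1)}$ (using $d^{(1)}=1$), a direct computation gives $\phi_1 \circ \cdots \circ \phi_n(x) = d^{(n+1)}(1+x)-1 =: \Phi_n(x)$. Iterating and applying \eqref{regoletta} then produces the closed form
\[
\cF(g^{(n+1)}(s)) = \int_{[d^{(n+1)}-1,\, \infty)} \frac{e^{-s(y+1)/d^{(n+1)}}}{y+1}\, t^{(1)}(dy).
\]

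I would next exploit the hypothesis \eqref{baseperarrosto}. Differentiating the $n=1$ representation in $s$ yields
\[
-s\cF'(g(s))\, g'(s) = s\, e^{-s} \int_0^\infty e^{-sx}\, t^{(1)}(dx),
\]
which is manifestly non-negative, forcing $\kappa \geq 0$. By Karamata's Tauberian theorem applied to the Laplace--Stieltjes transform of the non-decreasing function $F(y) := t^{(1)}([0,y])$, the hypothesis \eqref{baseperarrosto} is equivalent to the asymptotic $F(y) \sim \kappa y$ as $y \to \infty$.

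The heart of the argument is passing to the limit in the closed form above. Rescaling via $u = y/d^{(n+1)}$ and setting $\tilde F_n(u) := F(d^{(n+1)} u)/d^{(n+1)}$, the integral rewrites as
\[
\int_{1 - 1/d^{(n+1)}}^\infty \frac{e^{-s u}\, e^{-s/d^{(n+1)}}}{u + 1/d^{(n+1)}}\, d\tilde F_n(u).
\]
The Tauberian asymptotic gives $\tilde F_n(u) \to \kappa u$ pointwise on $(0,\infty)$ and vague convergence $d\tilde F_n \to \kappa\, du$, which settles the contribution from any compact window $[1, B]$. The tail over $[B,\infty)$ is controlled uniformly in large $n$ via Fubini and the one-sided bound $\tilde F_n(u) \leq (\kappa+1) u$, valid for $u \geq 1$ and $n$ large, giving a remainder of order $e^{-sB}$. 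Letting first $n \to \infty$ and then $B \to \infty$ produces $\cF(g^{(n+1)}(s)) \to \kappa \int_1^\infty e^{-su}/u\, du$ for every $s>0$, which is \eqref{radicchio}.

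To conclude, if $\kappa = 0$ then $\cF(g^{(n)}(s)) \to 0$, and continuity of $\cF$ at $0$ with $\cF(0)=0$ and $\cF'(0)=1$ give $g^{(n)}(s) \to 0$ for every $s>0$, i.e.\ $Z^{(n)} \Rightarrow \infty$. If $\kappa > 0$, the fact that $\cF(g^{(n)}(s)) \leq \sup_{x \in [0,1)} \cF(x)$ while its limit diverges as $s \to 0$ forces $\lim_{x \uparrow 1}\cF(x) = +\infty$; hence $\cF$ is a homeomorphism from $[0,1)$ onto $[0,\infty)$, and $g^{(\infty)}_\kappa(s) := \cF^{-1}\bigl(\kappa \int_1^\infty e^{-su}/u\, du\bigr)$ is well-defined in $(0,1)$ for every $s>0$. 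Pointwise convergence $g^{(n)}(s) \to g^{(\infty)}_\kappa(s)$ together with $g^{(\infty)}_\kappa(s) \to 1$ as $s \to 0$ then yields weak convergence of $Z^{(n)}$ to a proper random variable supported on $[1,\infty)$. The technical crux is the Tauberian step: making the asymptotic $F(y)\sim \kappa y$ quantitative enough to simultaneously absorb the drifting lower limit $d^{(n+1)}-1 \to \infty$ and the $O(1/d^{(n+1)})$ corrections in the exponent via a uniform tail bound.
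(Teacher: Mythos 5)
Your proposal is correct and follows essentially the same route as the paper: linearize the recursion via Theorem \ref{astratto} to the measures $t^{(n)}$, solve the linear recursion in closed form as a pushforward of $t^{(1)}$ under the affine map, apply Karamata's Tauberian theorem to obtain $t^{(1)}([0,y]) \sim \kappa y$, and pass to the limit in the integral representation. The only differences are cosmetic: the paper integrates by parts once to get $U^{(n)}$ and closes with dominated convergence, while you rescale by $u = y/d^{(n+1)}$ and use vague convergence plus a uniform tail bound; and in the $\kappa > 0$ case the paper argues $g^{(\infty)}_\kappa(0^+)=1$ by contradiction, whereas you observe directly that divergence of the right-hand side as $s\downarrow 0$ forces $\lim_{x\uparrow 1}\cF(x)=\infty$, which is a slightly tidier way to justify inverting $\cF$.
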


\begin{pf}
We first apply Theorem~\ref{astratto} getting that, for each $n \geq1$,
there exists a~unique measure $t^{(n)}$ on $[0,\infty)$ such that
%
\begin{equation}
\label{asilobello} \mathcal{F} \bigl( g^{(n)}(s) \bigr) = \int
_0^\infty \frac{e^{-s(1+x)}}{1+x} t^{(n)}(dx), \qquad s>0.
\end{equation}
Due\vspace*{-1pt} to \eqref{mennea} and Theorem~\ref{astratto} again,
for $n \geq 2$ it holds $t^{(n)} = \frac{1}{a_{n-1}}t^{(n-1)}
\circ\phi_{n-1}$, with $\phi_{n-1}(x)=a_{n-1}(1+x)-1$. The recursive
identities relating the $t^{(n)}$'s can be explicitly solved, leading
to
%
\begin{equation}
\label{brumsporcobis} t^{(n)} = \frac{1}{d^{(n)}} t^{(1)} \circ
\psi_{n-1}, \qquad n \geq2
\end{equation}
with $\psi_{n-1}(x)=d^{(n)} (1+x)-1$. 
Defining $ U^{(n)}(x)= t^{(n)} ([0,x]) \mathbh{1}(x \geq0)$,
we get that 
$dU^{(n)} = t^{(n)}$ and $U^{(n)}(x)=0$ for $x<0$. By
\eqref{brumsporcobis} it holds that
%
\begin{equation}
\label{pescefritto} 
\qquad U^{(n)}(x) 
=\frac{1}{d^{(n)}} \bigl[ U^{(1)} \bigl(
d^{(n)} (1+x)-1 \bigr)- U^{(1)} \bigl(\bigl( d^{(n)}-1
\bigr)- \bigr) \bigr], \qquad n\geq1. 
\end{equation}
Moreover, for each $n \geq1$, integrating by parts and using that
$U^{(n)}(0-)=0$, we can rewrite the integral on the RHS of
\eqref{asilobello} as
%
\begin{eqnarray}\label{ananas} 
&& \int_0^\infty
\frac{e^{-s(1+x) }}{1+x} t^{(n)}(dx) 
\nonumber\\[-8pt]\\[-8pt]
&&\qquad = \lim
_{y \uparrow\infty} \frac{e^{-s(1+y) }}{1+y} U^{(n)}(y)- \int
_0^\infty \biggl( \frac{d}{dx} \biggl(
\frac{e^{-s(1+x) }}{1+x} \biggr) \biggr) U^{(n)}(x) \,dx.\nonumber
\end{eqnarray}
We now use the key additional hypothesis \eqref{baseperarrosto}. Since
$g^{(1)}(s)= g(s)$ because $d^{(1)}=1$, if $w^{(1)}$ denotes the
Laplace transform of $t^{(1)}$ [i.e., $w^{(1)}(s) = \int_0^\infty e^{-s
x} t^{(1)}(dx) $], then \eqref{baseperarrosto}
together with \eqref{pompiere1} implies 
that $ \lim_{s\downarrow0} s  w^{(1)}(s)= \kappa$. The above limit and
the Tauberian Theorem 2 in Section XIII.5 of \cite{F} allow us to
conclude that
%
\begin{equation}
\label{marina} \lim_{y\uparrow\infty} \frac{ U^{(1)}(y)}{y}= \kappa.
\end{equation}
The above limit together with \eqref{pescefritto} implies that there
exists a~suitable constant $C>0$ such that
%
\begin{equation}
\label{bound} U^{(n)}(x) \leq C (1+x), \qquad n\geq1, x \geq0.
\end{equation}
In
particular, the limit on the RHS of \eqref{ananas} is zero and
%
\begin{equation}
\label{miserve} \int_0^\infty\frac{e^{-s(1+x) }}{1+x}
t^{(n)}(dx)=- \int_0^\infty \biggl(
\frac{d}{dx} \biggl( \frac{e^{-s(1+x) }}{1+x} \biggr) \biggr) U^{(n)}(x)
\,dx,\qquad n\geq1.\hspace*{-35pt}
\end{equation}
By \eqref{pescefritto}, \eqref{marina} and the fact that $d^{(n)} \to
\infty$, we conclude that $\lim_{n\to\infty}U^{(n)}(x) = \kappa x$
for all
$x \geq0$. This limit together with \eqref{bound} allows us to apply
the dominated convergence theorem, getting that
%
\begin{eqnarray}\label{monnalisa}
&& \lim_{n\to\infty} \int_0^\infty
\frac{e^{-s(1+x) }}{1+x} t^{(n)}(dx)
\nonumber\\[-8pt]\\[-8pt]
&&\qquad = -\kappa\int_0
^\infty \biggl( \frac{d}{dx} \biggl( \frac{e^{-s(1+x) }}{1+x} \biggr)
\biggr) x \,dx = \kappa\int_0 ^\infty
\frac{e^{-s (1+x) }}{1+x} \,dx\nonumber
\end{eqnarray}
(in the last identity we have simply integrated by parts).

Let us come back to \eqref{asilobello}. We know the limit of the
RHS as $n\to\infty$ by \eqref{monnalisa}. Let us analyze the LHS.
We claim that, given $s>0$, the sequence $\{g^{(n)}(s)\}_{n\geq1}$
converges to some number in $[0,e^{-s}]$. Indeed, since $Z_n \geq1$, it
holds $ g^{(n)}(s) \in(0, e^{-s}]$. If the sequence was not convergent,
by compactness we could find two subsequence $\{n_k\}_{k\geq1}$ and
$\{n_r\}_{r\geq1}$ such that $ \lim_{k \to \infty}g^{(n_k)}(s)< \lim_{r
\to\infty}g^{(n_r)}(s)$ and both limits exist and belong to $[0,
e^{-s}]$. On the other hand, by hypothesis \textup{(H1)} and Remark
\ref{codroipo}, the function $\mathcal{F}$ is continuous and strictly
increasing on $[0,1)$. Hence
\[
\lim_{k \to\infty}\mathcal{F} \bigl(g^{(n_k)}(s) \bigr)=
\mathcal{F} \Bigl(\lim_{k \to\infty}g^{(n_k)}(s) \Bigr)<
\mathcal {F} \Bigl(\lim_{r \to\infty}g^{(n_r)}(s) \Bigr)=\lim
_{r \to
\infty}\mathcal{F} \bigl(g^{(n_r)}(s) \bigr)
\]
in contradiction with the fact
that the first member and the last member equal the RHS of
\eqref{monnalisa}, by \eqref{asilobello} and \eqref{monnalisa}.

Since we have proved that for all $s >0$ the sequence
$\{g^{(n)}(s)\}_{n \geq1}$ converges to some number
$g^{(\infty)}_{\kappa}(s)\in[0, e^{-s}]$, using the continuity of
$\mathcal{F}$ on $[0,1)$, \eqref{asilobello} and~\eqref{monnalisa}, we
conclude that $g^{(\infty)}_\kappa$ satisfies \eqref{radicchio}.


Since by hypothesis \textup{(H)} the function $\mathcal{F}'$ is
positive on $[0,1)$, the limit $\kappa$ in~\eqref{baseperarrosto} must
be nonnegative. Let us first consider the case $\kappa=0$. Then, by
\eqref{radicchio}, the fact that $\mathcal{F}$ is strictly increasing
on $[0,1)$ and $\mathcal {F}(0)=0$, we conclude that
$g^{(\infty)}(s)=0$ for all $s>0$. This implies that the law of the
random variable $Z^{(n)}$ weakly converges to $\delta_\infty$.

We now consider the case $\kappa>0$. As pointwise limit of decreasing
functions, also $g^{(\infty)}_\kappa$ is decreasing on $(0,\infty)$. In
particular the limit $\lim_{s \downarrow0} g^{(\infty)}_\kappa(s)$
exists and belongs to $[0,1]$. Let us call $z$ this limit and prove
that $z=1$. Suppose by absurd that $z \in[0,1)$. Then by the continuity
of $\mathcal{F}$ on $[0,1)$ and equation~\eqref{radicchio}, we would
have
\[
\mathcal{F}(z)= \lim_{s \downarrow0} \mathcal{F} \bigl(g^{(\infty
)}_\kappa(s)
\bigr) = \lim_{s \downarrow0} \kappa\int_1
^\infty \frac{e^{-sx}}{1+x} \,dx = \infty.
\]
Since $\mathcal{F}$ takes finite value on $[0,1)$ it cannot be
$\mathcal{F}(z)=\infty$, thus implying that $z =1$. In conclusion we
have proved that $\lim_{s\downarrow0 } g^{(\infty)}_{\kappa}(s)=1$.
Then, by Theorem~2 in Section XIII.1 of \cite{F}, we conclude that
$g^{(\infty)}_{\kappa}$ is the Laplace transform of some nonnegative
(finite) random variable $Z _{\kappa}^{(\infty)}$ and that $Z^{(n)}$
weakly converges to $Z _{\kappa}^{(\infty)}$. The fact that $Z
_{\kappa}^{(\infty)}\geq1$ a.s. follows from the fact $Z ^{(n)}\geq1$
for all $n \geq1$.
\end{pf}

\begin{pf*}{Proof of Theorem~\ref{teo2}}
Thanks to Theorem~\ref{differisco} and the discussion before Definition
\ref{cavolo}, the Laplace transforms of the rescaled variables
$Z^{(n)}$ satisfy
\[
\mathcal{F}\bigl(g^{(n+1)}(a_n s)\bigr) = \mathcal{F}
\bigl(g^{(n)}( s)\bigr) - h^{(n)}(s) \qquad\forall n \geq1,\ \forall s > 0,
\]
where
\[
\mathcal{F}(x) = \cases{\displaystyle - \ln(1-x), &\quad in case \textup{(i)},
\vspace*{5pt}\cr
\displaystyle\frac{\gamma+1}{\gamma+2} \ln \frac{1 +x/(\gamma+1)}{1-x}, &\quad in case \textup{(ii)},}
\]
respectively. We have already observed that in both cases $\mathcal{F}$
satisfies the hypothesis~\textup{(H)}. Computing $\mathcal{F}'$ we get
\begin{eqnarray*} 
&& \lim_{s \downarrow0 } -s \mathcal{F}'\bigl(g(s)
\bigr) g'(s)
\\
&&\qquad = \cases{\displaystyle -\lim_{s \downarrow0}
\frac{s g'(s)}{1-g(s)}, &\quad in case \textup{(i)},
\vspace*{5pt}\cr
\displaystyle -\lim_{s \downarrow0}
\frac{\gamma+1}{\gamma+2} \biggl( \frac{s g'(s)}{\gamma+1+g(s)}+\frac{s g'(s)}{1-g(s)} \biggr), &\quad
in case \textup{(ii)}.}
\end{eqnarray*}
Since we have assumed the limit~\eqref{lim_arrosto} and since
$1-g(s)=o(1)$ for $s$ small, it must be $\lim_{s \downarrow0}
sg'(s)=0$. This last observation allows us to conclude that
%
\begin{equation}
\label{mammina} \lim_{s \downarrow0 } -s \mathcal{F}'
\bigl(g(s) \bigr) g'(s)= %
\cases{ c_0, &
\quad in case \textup{(i)},
\vspace*{3pt}\cr
\displaystyle\frac{\gamma+1}{\gamma+2} c_0, &\quad in case
\textup{(ii)}.}
\end{equation}
At this point Theorem~\ref{teo2} is an immediate consequence of Theorem
\ref{astrattobis} and the computation of $\mathcal{R}= \mathcal
{F}^{-1}$ given in~\eqref{spaziale}.
\end{pf*}


\section{\texorpdfstring{Asymptotic of the first point law: Proof of Theorem \protect\ref{teo3}}
{Asymptotic of the first point law: Proof of Theorem 2.15}}\label{provateo3}

In this section we prove Theorem~\ref{teo3}. While in the derivation of
Theorem~\ref{teo2} we have tried to keep the discussion at a~general
and abstract level in order to catch the fundamental structure of the
transformation introduced in \cite{FMRT0} and therefore explain the
similar asymptotics of very different HCPs, we restrict here to the
special cases mentioned in Theorem~\ref{teo3}. Indeed, as the reader
will see, the proof goes through estimates which are very
model-dependent.

\begin{pf*}{Proof of Theorem~\ref{teo3}}
Case \textup{(i)} has been solved in \cite{FMRT0}, Theorem 2.24. Hence
we focus on case \textup{(ii)}. Without loss of generality we can
restrict to the case $\nu=\delta_0$, that is, when the HCP starts with
$\xi^{(1)} (0)$ having law $\operatorname{Ren} (\delta_0, \mu)$, $\mu$
being a~probability measure on $[d^{(1)}, \infty)=[1, \infty)$. Indeed,
in the general case $X^{(n)}_0$ can be expresses as $V+ \bar
X^{(n)}_0$, where $\bar X^{(n)}_0$ is the first point in $\xi^{(n)}_0$
for the above HCP starting with distribution
$\operatorname{Ren}(\delta_0, \mu)$, while $V$ is a~random variable
with law $\nu$ independent of $\bar X^{(n)}_0$. Since $d^{(n)}
\to\infty$, when taking the rescaled random variable $Y^{(n)}=
X^{(n)}_0/ d^{(n)}$, the effect of the random translation $V$
disappears as $n \to\infty$. From Lemma~\ref{vecchioteo1} we know that
the configuration $\xi^{(n)}(0)$ at the beginning of epoch $n$ has law
$\operatorname{Ren} ( \nu^{(n)}, \mu^{(n)} )$. As in the previous
section $X^{(n)}$ will be a~random variable with law $\mu^{(n)}$ and
$Z^{(n)}$ the rescaled random variable $Z^{(n)}= X^{(n)}/ d^{(n)}$.
Moreover, we write $X^{(n)}_0$ for a~generic random variable with law
$\nu^{(n)}$ and set
\[
\ell^{(n)}(s)= \mathbb{E} \bigl( e^{ -s Y^{(n)}} \bigr), \qquad s \in
\mathbb{R}_+, \qquad Y^{(n)}= X^{(n)}_0
/d^{(n)}.
\]
Recalling the definitions of $g^{(n)}(s)$, $h^{(n)}(s)$ and $a_n$ in
equations~\eqref{sciopero} and~\eqref{sciopero2}, we use formula
\eqref{convoluto} and Theorem~\ref{differiscobis}\textup{(ii)} to
obtain the recursive equations
\[
\ell^{(n+1)}(a_n s) = \ell^{(n)}(s) \sqrt{
\frac{1-g^{(n+1)}(a_n
s)^2}{1-g^{(n)}(s)^2}} e^{-h^{(n)}(0)}, \qquad n\geq1.
\]
By iteration, we get
%
\begin{eqnarray}
\label{pescedaprile}
\ell^{(n)}(s) & =& \ell^{(1)}
\bigl(s/d^{(n)}\bigr)\sqrt{ \frac{1-g^{(n)}(s)^2}{1-g^{(1)}(s/d^{(n)})^2} } \exp \Biggl\{ - \sum
_{j=1}^{n-1} h^{(j)}(0) \Biggr\}\nonumber
\\
& =& \ell^{(1)}\bigl(s/d^{(n)}\bigr)\sqrt{ \frac{1-g^{(n)}(s)^2}{s}}
\sqrt{ \frac
{s/d^{(n)}}{1-g^{(1)}(s/d^{(n)})^2}}
\\
&&{}\times \exp \Biggl\{ \frac{1}{2} \log d^{(n)} -
\sum_{j=1}^{n-1} h^{(j)}(0) \Biggr\}.\nonumber
\end{eqnarray}
%
Since $d^{(n)} \to\infty$, we have $\lim_{n \to\infty}
\ell^{(1)}(s/d^{(n)}) =1$. By assumption, $\mu$ has finite mean $\bar
\mu= -g'(0)$. Hence$,  \frac{s/d^{(n)}}{1-g^{(1)}(s/d^{(n)})^2}$
converges to $1/ 2\bar\mu$ as $n \to\infty$. Finally, invoking
Theorem~\ref{teo2} (see also Remark~\ref{referee}), from
\eqref{pescedaprile} we get
%
\begin{eqnarray}\label{balena}
\qquad && \lim_{n \to\infty} \ell^{(n)}(s)
\nonumber\\[-8pt]\\[-12pt]
&&\qquad =\frac{1}{\sqrt{2 \bar\mu}} \sqrt{ \frac{1-\tanh^2 ( \mathrm{Ei}(s)/2  ) }{s}} \lim_{n\to\infty}\exp
\Biggl\{ \frac{1}{2} \log d^{(n)} - \sum
_{j=1}^{n-1} h^{(j)}(0) \Biggr\}.\nonumber
\end{eqnarray}
It remains to study the last limit in~\eqref{balena}. To this aim we
come back to the measures~$t^{(n)}$. As already observed in the proof
of Theorems~\ref{astrattobis} and~\ref{teo2}, applying Theorem
\ref{astratto} one gets that for each $n\geq1$ there exists a~unique
measure $t^{(n)}$ on $[0,\infty)$ satisfying~\eqref{asilobello} with
$\mathcal{F}(x)=\frac{1}{2} \ln\frac{1 +x}{1-x}$
for $x \in[0,1)$. Moreover, by equation~\eqref{sorriso3} it holds
\[
h^{(n)}(s) = \int_{[0, a_n-1)} \frac{e^{-s (1+x)}}{1+x}
t^{(n)}(dx)
\]
and by formula~\eqref{brumsporcobis} it holds
$t^{(n)} =  (1/d^{(n)} ) t^{(1)} \circ\psi_{n-1}$ with
$\psi_{n-1}(x)=d^{(n)} (1+x)-1$, for all $n\geq2$. Combining the last
identities, from~\eqref{regoletta} one gets
\[
\label{nuvolettabis} h^{(n)}(s) = \int_{[d^{(n)}-1,
d^{(n+1)}-1)}
\frac{ e^{-s(1+x)/d^{(n)} }}{1+x} t^{(1)} (dx), \qquad n \geq1.
\]
The above integral representation implies
\[
\sum_{j=1}^{n-1} h^{(j)}(0) = \int
_{[0, d^{(n)}-1)} \frac{1}{1+x}t^{(1)} (dx).
\]
%
%
%
Equation~\eqref{duro} then follows from Claim~\ref{30gradi}. From this
formula one can check that $\lim_{s \to0}
\mathbb{E}(e^{-sY^{(\infty)}})=1$ and $\lim_{s \to\infty}
\mathbb{E}(e^{-sY^{(\infty)}})=0$, thus implying
$Y^{(\infty)}\in(0,\infty)$. Indeed, it is known that $\mathrm{Ei}(x) =
- \bar \gamma- \log(x) - \sum_{n=1}^\infty\frac{(-x)^n}{n \cdot n!}$
for $x
>0$, which, after some computation leads to the limit when $s \to0$,
while for $s \to\infty$, it is enough to observe that $\mathrm{Ei}(s)
\to0$ and thus $\tanh(\mathrm{Ei}(s)/2) \to0$.

Finally, we remark that condition~\eqref{pato} is satisfied if $\mu$
has finite\break  $(1+\varepsilon)$-moment. Indeed, under this hypothesis it
holds $\int_{[1,z]} x^2 \mu(dx) \leq\break  z^{1-\varepsilon} \int
x^{1+\varepsilon} \mu(dx) \leq C z^{1-\varepsilon}$ for
$\varepsilon\in(0,1)$ and $\int_{[1,\infty )} x^2 \mu(dx)<\infty$ if
$\varepsilon\geq1$.
\end{pf*}

\begin{claim}\label{30gradi}
%
\begin{equation}
\lim_{z \to
\infty} \biggl( \int_{[0, z-1)}
\frac{1}{1+x}t^{(1)} (dx)- \frac{1}{2} \ln z \biggr)=
\frac{1}{2}\log2 + \frac{\bar\gamma}{2} -\frac{1}{2}\ln(\bar\mu),
\end{equation}
where $\bar\gamma\simeq0,577$ is the Euler--Mascheroni constant.
\end{claim}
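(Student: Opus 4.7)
The plan is to derive an explicit series representation for the measure $m(dy):=\tilde t(dy)/y$ on $[1,\infty)$, where $\tilde t$ is the push-forward of $t^{(1)}$ under $x\mapsto 1+x$, so that $\int_{[0,z-1)}t^{(1)}(dx)/(1+x)=m\bigl([1,z)\bigr)=:H(z)$. The claim is then equivalent to showing $H(z)-\frac{1}{2}\log z\to C^*:=\frac{1}{2}\log 2+\bar\gamma/2-\frac{1}{2}\log\bar\mu$ as $z\to\infty$.

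Applying Theorem \ref{astratto} with $\cF(x)=\frac{1}{2}\log\frac{1+x}{1-x}$ shows that $\int_{[1,\infty)}e^{-sy}m(dy)=\tanh^{-1}(g(s))$. Expanding the Taylor series of $\tanh^{-1}$ (convergent for $|g(s)|<1$, which holds for $s>0$) gives $\sum_{k\geq 0}g(s)^{2k+1}/(2k+1)$. Since $\mu^{\ast(2k+1)}$ is supported in $[2k+1,\infty)$, the series $\sum_{k\geq 0}\mu^{\ast(2k+1)}/(2k+1)$ defines a locally finite $\sigma$-finite measure with the same Laplace transform, and by uniqueness
\begin{equation*}
H(z) \;=\; \sum_{\substack{n\geq 1\\ n\text{ odd}}}\frac{\bbP(S_n<z)}{n}, \qquad S_n:=X_1+\cdots+X_n,\ X_i\sim\mu\text{ i.i.d.}
\end{equation*}

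Fix $\epsilon>0$ and split the sum at the LLN thresholds $n_\pm:=(1\pm\epsilon)z/\bar\mu$. The goal is to show $\bbP(S_n<z)=1-o(1)$ uniformly for odd $n\leq n_-$, $\bbP(S_n<z)=o(1)$ uniformly for odd $n\geq n_+$, while the transitional range contributes $\sum_{n\text{ odd},\,n\in[n_-,n_+]}1/n=\frac{1}{2}\log\frac{1+\epsilon}{1-\epsilon}+o(1)=O(\epsilon)$. To establish the LLN bounds I truncate $X_i$ at level $z$, setting $\tilde S_n:=\sum_{i\leq n}(X_i\wedge z)$; since $X_i\geq 1$, any $X_i>z$ forces $S_n>z$, so $\bbP(S_n<z)=\bbP(\tilde S_n<z,\,\forall i\,X_i\leq z)$. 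Hypothesis \eqref{pato} gives $\sigma_z^2:=\int_{[1,z]}x^2\mu(dx)=o(z)$, hence Chebyshev applied to $\tilde S_n$ (whose variance is bounded by $n\sigma_z^2$) yields $\bbP(|\tilde S_n-\bbE\tilde S_n|\geq\epsilon z/2)=O(n\sigma_z^2/z^2)$, while $\bbP(\exists i:X_i>z)\leq n\mu((z,\infty))\leq n\sigma_z^2/z^2$; both are $o(1)$ uniformly on the range $n=O(z)$, using also $\bbE(X_1\wedge z)\to\bar\mu$ to control the drift.

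Combining with the elementary harmonic asymptotic $\sum_{n\text{ odd},\,n\leq N}1/n=\frac{1}{2}\log N+\frac{1}{2}\log 2+\bar\gamma/2+o(1)$ (obtained from $\sum_{n\leq N}1/n=\log N+\bar\gamma+o(1)$ by separating odd and even indices), and substituting $N\sim z/\bar\mu$, one gets $H(z)=\frac{1}{2}\log(z/\bar\mu)+\frac{1}{2}\log 2+\bar\gamma/2+O(\epsilon)+o(1)=\frac{1}{2}\log z+C^*+O(\epsilon)$; letting $\epsilon\downarrow 0$ concludes. The main obstacle is quantifying the LLN under only the weak moment hypothesis \eqref{pato} rather than a finite variance: this is precisely the role of \eqref{pato}, which ensures $\sigma_z^2/z\to 0$ so that the Chebyshev error becomes $o(1)$ uniformly on the dominant scale $n=O(z)$ that carries the $\log z$ contribution.
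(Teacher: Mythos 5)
Your proposal follows essentially the same route as the paper's proof: rewrite $\int_{[0,z-1)}t^{(1)}(dx)/(1+x)$ as $\sum_{n\ \mathrm{odd}}\bbP(S_n<z)/n$ by expanding $\cF^{-1}=\tanh$ in the transformation of Section \ref{trasformo} (the paper does this via the explicit coefficients $\a_k=(1+(-1)^{k+1})/(2k)$, which is the same expansion), then split at $(1\pm\e)z/\bar\mu$, truncate $X_i$ at level $z$, control the tail ranges by Markov and Chebyshev under hypothesis \eqref{pato}, and finish with the odd-index harmonic asymptotic.

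One point needs tightening in your write-up. You state that the Chebyshev/Markov errors are ``$o(1)$ uniformly on the range $n=O(z)$'' and seem to rely on that uniform bound. A uniform $o(1)$ bound is \emph{not} sufficient for the lower range $n\leq n_-$: there one must control $\sum_{n\leq n_-,\ \mathrm{odd}}\bigl(1-\bbP(S_n<z)\bigr)/n$, and since $\sum_{n\leq n_-}1/n\asymp\tfrac12\log z$ diverges, a uniform $\d(z)=o(1)$ bound only gives $\d(z)\log z$, which need not vanish (take, say, $\s_z^2/z\sim 1/\log z$). What actually closes the argument is the $n$-dependent bound you wrote down one line earlier, namely $1-\bbP(S_n<z)\lesssim n\s_z^2/z^2 + n\,\mu\bigl((z,\infty)\bigr)$: the explicit factor $n$ cancels the $1/n$ weight, and the remaining sum has $O(z)$ terms each of size $\s_z^2/z^2 + \mu\bigl((z,\infty)\bigr)$, yielding $O(\s_z^2/z) + O\bigl(z\,\mu((z,\infty))\bigr)=o(1)$ by \eqref{pato} and finite mean. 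This is exactly the computation done in the paper (via the bounds \eqref{ben110}--\eqref{ben111} and the estimate of $\cE_1,\cE_2$). So the idea is right; you should just perform the weighted summation explicitly rather than invoking uniformity.
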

%

\begin{pf*}{Proof of Claim~\ref{30gradi}}
First of all we give an explicit formula for the measure $m(dx)$ with
support in $[1, \infty)$ such that
%
\begin{equation}
\label{falchetto} \int_A m (dx)= \int_{A-1}
\frac{1}{1+x} t^{(1)}(dx), \qquad A \subset[1,\infty)\mbox{ Borel}.
\end{equation}

\begin{Lemma}\label{lemma7.2}
Let $m(dx)$ be the measure defined by~\eqref{falchetto}. Let $
\otimes_k\mu$ be the convolution of $k$ copies of the interval law
$\mu$. Then
\[
m(A)= \sum_{k=1} ^\infty
\alpha_k [\otimes_k \mu] (A), \qquad A \subset[1,\infty)\mbox{ Borel},
\]
where $\alpha_k:= (1+(-1)^{k+1}  )/(2k)$.
\end{Lemma}
Note that, since $\mu$ has support in $[1, \infty)$, the probability
measure $\otimes_k \mu$ has support in $[k,\infty)$.

\begin{pf*}{Proof of Lemma \ref{lemma7.2}}
We know that $t^{(1)}$
satisfies~\eqref{asilobello} with
$\mathcal{F}(x)=\operatorname{arctanh}(x)$. Since $g^{(1)}(s)= g(s):=
\int e^{-sx} \mu(dx)$, by~\eqref{falchetto} the identity
\eqref{asilobello} can be rewritten as
$\mathcal{F}( g(s) ) = \int_1 ^\infty e^{-sx} m(dx)$.
For $s$ large $g(s)$ goes to zero, and hence we can use the analytic
expansion of $\mathcal{F}(x)$ around zero [recall that
$\operatorname{arctanh}(x)=1/2\ln\frac{1+x}{1-x}$ and use
\eqref{zittizitti} with $\gamma=0$] getting
\[
\sum_{k=1}^\infty
\alpha_k g(s)^k = \int_1
^\infty e^{-sx} m(dx), \qquad s\mbox{ large}.
\]
Since $g(s)^k = \int e^{-sx} [\otimes_k \mu](dx)$, the above equation
can be written as
\[
\sum_{k=1}^\infty\alpha_k \int
e^{-sx} [\otimes_k \mu](dx)= \int_1
^\infty e^{-sx} m(dx), \qquad s\mbox{ large}.
\]
The thesis then follows from Theorem 1a in \cite{F}, Section XIII.1.
\end{pf*}
Let $W_1, W_2, \ldots, W_k$ be i.i.d. random variables with common law
$\mu$. Then, $\otimes_k \mu$ is the law of $W_1+W_2+ \cdots+ W_k$. Due
to the above lemma and since $W_i \geq1$ a.s., we can write
%
\begin{equation}
\label{tamburo} \int_{[0,z-1)} \frac{t^{(1)}(dx)}{1+x}= \int
_{[1,z)} m(dx)= \sum_{k=1}
^{ \lfloor
z \rfloor}\alpha_k \mathbb{P}( W_1+
\cdots+W_k \leq z),
\end{equation}
where $\lfloor z\rfloor$ denotes the integer part of $z$.

Recall that $\bar\mu:= \int x \mu(dx) = \mathbb{E}( W_i)\geq1$. If
$\bar \mu=1$, then $\mu= \delta_1$ and, as the reader can check, the
arguments below become trivial. Hence we assume that
\mbox{$\bar\mu>1$}.

Given $z >1$ we define \mbox{$ \tilde W_i:= W_i \mathbh{1} (W_i \leq z ) $}
and $\bar\mu(z):= \mathbb{E}( \tilde W_i) =\mathbb{E}(W_i;\break  W_i
\leq z)$. We can estimate the variance of $\tilde W_i$ as
%
\begin{equation}
\label{barbalalla}
\operatorname{Var}( \tilde W_i) \leq\mathbb
{E}\bigl( \tilde W_i^ 2\bigr)= \mathbb{E}\bigl(
W_i^2; W_i \leq z\bigr).
\end{equation}
Fix $\varepsilon>0$. We deal separately with the case \textup{(i)}
$k\leq\frac {z}{\bar \mu} (1-\varepsilon)$ and \textup{(ii)}
$k\geq\frac{z}{\bar\mu} (1+\varepsilon)$.
\begin{itemize}
\item Case \textup{(i)}. Since $\lim_{ z \to\infty} \bar\mu(z) =
    \bar \mu$, this implies that there exists $z(\varepsilon)$
    large enough and independent from $k$ such that for $z\geq
    z(\varepsilon)$ it holds
%
\begin{equation}
k\label{equationpalla} \bar\mu(z) <z \quad\mbox{and}\quad 
\biggl(\frac{z- k \bar\mu}{ z-k \bar\mu(z) } \biggr)^2\leq4.
\end{equation}
\end{itemize}
Therefore for $z\geq z(\varepsilon)$ thanks to~\eqref{equationpalla} we
can use the Markov inequality to obtain
%
\begin{eqnarray}\label{ben10}
&& \mathbb{P}( W_1 + \cdots+W_k >z)\nonumber
\\
&&\qquad  \leq \mathbb{P}( \exists i \leq k\dvtx  W_i \neq \tilde W_i) +
\mathbb{P}(\tilde W_1 + \cdots+\tilde W_k >z)
\nonumber\\[-2pt]\\[-18pt]
&&\qquad \leq k \mathbb{P}( W_1 >z) + \mathbb{P} \biggl(\frac{\tilde W_1 +
\cdots+\tilde W_k}{k} -\bar\mu(z) > \frac{z- k \bar\mu(z)}{k} \biggr)\nonumber
\\
&&\qquad  \leq k
\mathbb{P}(W_1>z) + \frac{k \operatorname{Var}(\tilde
W_1)}{ (z-k \bar\mu(z) )^2}.\nonumber 
\end{eqnarray}
Then, combining~\eqref{barbalalla},~\eqref{equationpalla} and
\eqref{ben10}, we get for $z\geq z(\varepsilon)$
%
\begin{eqnarray}\label{ben110}
\quad && \alpha_k \mathbb{P}( W_1 +
\cdots+W_k >z)\nonumber
\\
&&\qquad  \leq \mathbb{P}( W_1>z) +
\frac{ \mathbb{E}( W_1^2; W_1 \leq z) }{ (z-k
\bar\mu(z) )^2}
\leq \mathbb{P}( W_1>z) + 4\frac{ \mathbb{E}( W_1^2; W_1 \leq z) }{ (z-k
\bar{\mu})^2}.
\end{eqnarray}
\begin{itemize}
\item Case \textup{(ii)}. By similar arguments one can prove
    that
    there exists $\bar z(\varepsilon)$ such that for $z\geq\bar
    z(\varepsilon )$ it holds
%
\begin{equation}
\label{ben111} \alpha_k \mathbb{P}( W_1 +
\cdots+W_k \leq z) \leq \mathbb{P}( W_1>z) + 4
\frac{\mathbb{E}( W_1^2; W_1 \leq z)}{(z-k
\bar{\mu})^2}.
\end{equation}
\end{itemize}

%

At this point we get
%
\begin{eqnarray}\label{perpetuo}
\sum_{k=1}^{\lfloor (z/\bar\mu) (1-\varepsilon)\rfloor}
\alpha_k+\mathcal{E}_1 &\leq& \sum
_{k=1} ^{ \lfloor z \rfloor} \alpha_k \mathbb{P}(
W_1+ \cdots +W_k \leq z)
\nonumber\\[-8pt]\\[-8pt]
&\leq& \sum
_{k=1}^{\lfloor (z/\bar\mu) (1+\varepsilon)\rfloor} \alpha_k+
\mathcal{E}_2,\nonumber 
\end{eqnarray}
where
the error $\mathcal{E}_1$ can be bounded via~\eqref{ben110} as
\begin{eqnarray*}
|\mathcal{E}_1| &\leq& \sum_{k=1}^{\lfloor (z/\bar\mu)
(1-\varepsilon)\rfloor}
\biggl( \mathbb{P}( W_1>z) + 4\frac{ \mathbb
{E}( W_1^2; W_1
\leq z) }{ (z-k \bar\mu)^2} \biggr)
\\
&\leq& z \mathbb{P}( W_1>z) +C \mathbb{E}\bigl( W_1^2;
W_1 \leq z\bigr) \int_{\varepsilon z} ^z
\frac{1}{x^2} \,dx
\\
&\leq& z \mathbb{P}( W_1>z) + \frac{C'}{\varepsilon
z}\mathbb{E}\bigl( W_1^2; W_1 \leq z\bigr)
\end{eqnarray*}
and similarly the error $\mathcal{E}_2$ can be bounded
via~\eqref{ben111} as
\[
|\mathcal{E}_2| 
\leq z \mathbb{P}( W_1>z) + \frac{C'}{\varepsilon z}
\mathbb{E}\bigl( W_1^2; W_1 \leq z\bigr).
\]
The bound $\bar\mu=\mathbb{E}(W_1)<\infty$ trivially implies that
$\lim_{z \to\infty} z \mathbb{P}( W_1 >z)=0$. This observation,
together with hypothesis~\eqref{pato}, assures that for any fixed
$\varepsilon>0$ it holds
%
\begin{equation}
\label{richiamo} \lim_{z\to\infty}\mathcal{E}_1=\lim
_{z\to\infty}\mathcal{E}_2=0.
\end{equation}
We point out that the above estimates follow closely the arguments used
to prove the weak LLN. If $\mu$ has finite variance, exactly as in the
proof of the LLN, the truncation $\tilde W_i$ would be unnecessary, and
a~direct application of the Markov inequality would allow us to
estimate $\mathcal{E}_1, \mathcal{E}_2$.

It remains to study the behavior of the series $ \sum_{k=1} ^n \alpha_k$
for $n$ integer. It is known that $ \sum_{k=1}^n \frac{1}{k} = \log n +
\bar\gamma+ o(1)$, where $\bar\gamma$ is Euler--Mascheroni constant.
Assume that $n$ is even and $n=2p$. Then
\[
\phi(n) = \mathop{\sum_{k=1}}_{k\ \mathrm{odd}}^n
\frac{1}{k} = \sum_{k=1}^n
\frac{1}{k} - \frac{1}{2}\sum_{k=1}^p
\frac{1}{k} 
= \frac{1}{2} \log2 + \frac{1}{2}
\log n + \frac{\bar\gamma}{2} + o(1).
\]
%
For $n$ odd, one obtains a~similar expression. Hence, we conclude that
%
\begin{equation}
\label{russia} \lim_{z \to\infty} \Biggl(\sum
_{k=1}^{ \lfloor z \rfloor} \alpha_k - \frac{1}{2}
\ln z \Biggr) = \frac{1}{2} \log2 + \frac{\bar\gamma}{2}.
\end{equation}
%
Collecting~\eqref{tamburo},~\eqref{perpetuo},~\eqref{richiamo} and
\eqref{russia} we get 
that
\[
C_* +\frac{1}{2} \ln \biggl(\frac{z(1-\varepsilon)}{\bar\mu} \biggr) - o(1) \leq\int
_{[0,z-1) } \frac{t^{(1)}(dx)}{1+x}\leq C_* +\frac{1}{2} \ln
\biggl(\frac{z(1+\varepsilon)}{\bar\mu} \biggr) + o(1),
\]
where $o(1)$ goes to zero as $z \to\infty$ (for any fixed
$\varepsilon>0$) and $C_* =
\frac{1}{2} \log2 + \frac{\bar\gamma}{2} $. Hence
\[
\biggl\llvert \int_{[0, z-1)} \frac{t^{(1)}(dx)}{1+x}-
\frac{1}{2} \ln (z/\bar\mu)-C_* \biggr\rrvert \leq C \bigl(\varepsilon+ o(1)
\bigr).
\]
At this point take first the limit $z \to\infty$ and then the limit
$\varepsilon \downarrow0$, thus concluding the proof of our claim.
\end{pf*}

\section{Universal coupling: Graphical construction of the dynamics}\label{universal}

In this section we describe the universal coupling for the OCPs. The
construction is standard and very similar to the one presented in
Section~3.1 of \cite{FMRT0}. On the other hand, it will be used in
Section~\ref{semigruppo_OCP} and is fundamental in order to recover
results as Lemma~\ref{hollanda} and the first part of Proposition
\ref{note}.

Given $ \xi\in\mathcal{N}(d_{\min})$, we enumerate its points in
increasing order with the rule that the smallest positive one (if it
exists) gets the label $1$, while the largest nonpositive one (if it
exists) gets the label $0$. We write $N(x, \xi)$ for the integer number
labeling the point $x \in\xi$. This allows us to enumerate the domains
of $\xi$ as follows: a~domain $[x,x']$ is said to be the $k$th domain
if \textup{(i)} $x$ is finite and $N(x, \xi)=k$, or \textup{(ii)}
$x=-\infty$ and $N(x', \xi)=k+1$. Recall that if $x=-\infty$, then
$\xi$ is unbounded from the left and $x'$ is the smallest number in
$\xi$.

We set $\|\lambda\| _\infty= \sup_{d \in[d_{\min},d_{\max} )}
\lambda(d)$ where we recall that
$\lambda=\lambda_r+\lambda_\ell+\lambda_a$. We consider a~probability
space $ (\Omega, \mathcal{F},P )$ on which the following random objects
are defined and are all independent: the Poisson processes
$\mathcal{T}^{(k)} = \{ T_m^{(k)}\dvtx   m\in\mathbb{N}\}$, $\bar
\mathcal{T}^{(k)} = \{\bar T_m^{(k)}\dvtx   m\in\mathbb{N}\}$ and
$\tilde{\mathcal{T}}^{(k)} = \{\tilde{T}_m^{(k)}\dvtx m\in\mathbb{N}\}$
of parameter $\| \lambda \|_\infty$, indexed by $k \in\mathbb{Z}$, and
the random variables $U^{(k)}_m$, $\bar U^{(k)}_m$ and
$\tilde{U}_m^{(k)}$, uniformly distributed in $[0,1]$, indexed by $k
\in\mathbb{Z}$ and $m \in\mathbb{N}$. Above, the Poisson processes are
described in terms of the jump times $T_m ^{(k)}$, $\bar T_m^{(k)}$,
$\tilde{T}_m^{(k)}$. By discarding a~set of $P$-probability $0$, we may
assume that
%
\begin{eqnarray}
\label{disgarding} &\mbox{As $k_1,k_2,k_3$ vary
in $\mathbb{Z}$, the sets } \mathcal {T}^{(k_1)}, \bar
\mathcal{T}^{(k_2)}\mbox{ and }\tilde{\mathcal{T}}^{(k_3)}&
\nonumber\\[-10pt]\\[-10pt]
&\mbox{are locally finite and disjoint.}&\nonumber
\end{eqnarray}

Next, given $\zeta\in\mathcal{N}(d_{\min})$ and $\omega\in \Omega$, to
each domain $\Delta$ that belongs to $\zeta$, we associate the Poisson
process $\mathcal{T}^{(k)}$ if $\Delta$ is the $k$th domain in $\zeta$.
In this case, we write $\mathcal{T}^{(\Delta)}$ instead of
$\mathcal{T}^{(k)}$. Similarly we define $\bar\mathcal {T}^{(\Delta)}$,
$\tilde{\mathcal{T}}^{(\Delta)}$, $U^{(\Delta )}_m$, $\bar
U^{(\Delta)}_m$ and $\tilde{U}^{(\Delta)}_m$. The idea behind the
construction of the universal coupling is the following: if, for
example, $s=T^{(\Delta)}_m$ for some $m \in\mathbb{N}$ and if the
domain $\Delta$ is present at time~$s-$, then the left extreme of
$\Delta$ has to be erased at time $s$ if and only if $U_m
^{(\Delta)}\leq\lambda _\ell (d)/\|\lambda\|_\infty$, $d$ being the
length of the domain $\Delta $. Similarly, the right extreme (or both
the extremes) of $\Delta$ can be erased at time $s=\bar T^{(\Delta)}_m$
[resp., $\tilde{T}^{(\Delta)}_m$]. Working with infinite domains, to
formalize the above construction one needs some percolation argument as
presented below.

We define $\mathcal{W}_t[\omega,\zeta]$ as
the set of domains $\Delta$ in $\zeta$ such that
%
\begin{eqnarray}
\label{caldo-bestiale}\quad 
&& \bigl\{ s\in[0,t]\dvtx  s \in\mathcal{T}^{(\Delta)}\cup\bar
\mathcal {T}^{(\Delta)} \cup\tilde {\mathcal{T}}^{(\Delta)}
\mbox{ or }s \in \mathcal{T}^{(\Delta')} \cup\bar\mathcal{T}^{(\Delta')}
\cup \tilde{\mathcal{T}}^{(\Delta')}
\nonumber\\[-8pt]\\[-8pt]
&&\hspace*{127pt}\mbox{for some domain } \Delta' \mbox{ neighboring } \Delta
\bigr\} \neq\varnothing.\nonumber
\end{eqnarray}
On $\mathcal{W}_t[\omega,\zeta]$ we define a~graph structure putting an
edge between domains $\Delta$ and $\Delta'$ if and only if they are
neighboring in $\zeta$. Since the function $\lambda$ is bounded from
above, we deduce that the set
\[
\hspace*{-3pt}\mathcal{B}(\zeta):= \bigl\{ \omega\dvtx  \mathcal{W}_t[\omega,\zeta ]
\mbox{ has all connected components of finite cardinality }\forall t\!\geq\!0 \bigr\}
\]
has $P$-probability equal to $1$. Note that the event $ \mathcal
{B}(\zeta)$ depends on $\zeta$ only through the infimum and the
supremum of the set $ \{ N(x,\zeta)\in\mathbb{Z}\dvtx   x \in\zeta\}$.
By a~simple argument based on countability, we conclude that $
P(\mathcal{B})=1$, where $\mathcal{B}$ is defined as the family of
elements $\omega\in\Omega$ satisfying~\eqref{disgarding} and belonging
to $\bigcap_{\zeta\in\mathcal{N}(d_{\min})} \mathcal{B}(\zeta)$,
\begin{equation}
\label{spagna} \mathcal{B}= \bigcap_{\zeta\in\mathcal{N}(d_{\min})} \mathcal
{B}(\zeta) \cap \bigl\{\omega\in\Omega\dvtx  \omega\mbox{ satisfies }\eqref{disgarding} \bigr\}.
\end{equation}

In order to define the path $\{\xi(s)\}_{s\ge0}:=
\{\xi^\zeta(s,\omega)\}_{s\ge0}$ associated to $\zeta\in\mathcal
{N}(d_{\min} )$ and $\omega\in\Omega$, we first fix a~time $t>0$ and
define the path up to time $t$. If $\omega\notin\mathcal{B}$, then we
set
\[
\xi(s)=\zeta\qquad \forall s \in[0,t].
\]
If $\omega\in\mathcal{B}$, recall the definition of the graph
$\mathcal{W}_t[\omega,\zeta] $. Given a~set of domains $ V$ we write
$\bar V$ for the set of the associated extremes, that is, $x \in\bar V$
if and only if there exists a~domain in $V$ having $x$ as left or right
extreme. Moreover, we write $\mathcal{V}_t[\omega,\zeta] $ for the set
of all domains in $\zeta$ that do not belong to
$\mathcal{W}_t[\omega,\zeta]$. We require that
%
\begin{equation}
\label{heidi} \xi(s)\cap\overline{ \mathcal{V}_t[\omega,\zeta]}:=
\overline{ \mathcal{V}_t[\omega,\zeta]} \qquad\forall s \in[0,t],
\end{equation}
that is, up to time $t$ all points in $\overline{ \mathcal
{V}_t[\omega,\zeta] }$
survive.
Let us now fix a~cluster $ \mathcal{C}$ in the graph
$\mathcal{W}_t[\omega,\zeta]$.
The path $ ( \xi(s )\cap\bar\mathcal{C}\dvtx   s \in[0,t]  )$ is
implicitly defined by the following rules (the definition is well posed
since $\omega\in\mathcal{B}$). If $s \in[0,t]$ equals $T ^{(\Delta)}_m$
with $\Delta=[x,x'] \in \mathcal{C}$ and $x,x'\in\xi(s-)$, then the
ring at time $T^{(\Delta)}_m$ is called \textit{legal} if
%
\begin{equation}
\label{eq:uniformbis} U_m^{(\Delta)} \leq\frac{ \lambda_\ell( x'-x)}{\|\lambda\|_\infty}
\end{equation}
and in this case we set $\xi(s) \cap\bar\mathcal{C}:=(\xi(s-) \cap\bar
\mathcal{C}) \setminus\{x\}$, otherwise we set $\xi(s) \cap\bar
\mathcal{C}= \xi(s-) \cap\bar\mathcal{C}$. In the first case we say
that $x$ is erased and that the domain $[x,x']$ has incorporated the
domain on its left. Similarly, if $s \in[0,t]$ equals $ \bar T
^{(\Delta)}_m$ with $\Delta=[x,x'] \in\mathcal{C}$ and $x,x'
\in\xi(s-)$, then the ring at time $\bar T^{(\Delta)}_m$ is called
\textit{legal} if
%
\begin{equation}
\label{eq:uniform2bis} \bar U_m^{(\Delta)} \leq\frac{ \lambda_r ( x'-x)}{\|\lambda\|
_\infty}
\end{equation}
and in this case we set $\xi(s) \cap\bar\mathcal{C}:=(\xi(s-) \cap \bar
\mathcal{C}) \setminus\{x'\}$, otherwise we set $\xi(s) \cap\bar
\mathcal{C}= \xi(s-) \cap\bar\mathcal{C}$. Again, in the first case we
say that $x'$ is erased and that the domain $[x,x']$ has incorporated
the domain on its right. Finally, if $s \in[0,t]$ equals $
\tilde{T}^{(\Delta)}_m$ with $\Delta=[x,x ' ] \in\mathcal{C}$ and $x,x'
\in\xi(s-)$, then the ring at time $\tilde{T}^{(\Delta)}_m$ is called
\textit{legal} if
%
\begin{equation}
\label{eq:uniform3bis} \tilde{U}_m^{(\Delta)} \leq\frac{ \lambda_a ( x'-x)}{\|\lambda\|
_\infty}
\end{equation}
and in this case we set $\xi(s) \cap\bar\mathcal{C}:=(\xi(s-) \cap
\bar \mathcal{C}) \setminus\{x,x'\}$, otherwise we set $\xi(s) \cap\bar
\mathcal{C}=
\xi(s-) \cap\bar\mathcal{C}$. Again, in the first case we say that
$x$ and
$x'$ are erased\vadjust{\goodbreak} and that the domain $[x,x']$ has incorporated both the
domains from its right on its left.

We point out that $\bar\mathcal{C}\cap\bar\mathcal{C}'= \varnothing $
if $ \mathcal{C}$ and $ \mathcal{C}'$ are distinct clusters in
$\mathcal{W}_t[\omega,\zeta ]$. On the other hand, it could be
$\bar\mathcal{C}\cap\overline{ \mathcal{V}_t[\omega,\zeta]}\neq
\varnothing$. Let $x $ a~point in the intersection, and suppose, for
example, that $[a,x] \in\mathcal{C}$ while
$[x,b]\in\mathcal{V}_t[\omega,\zeta] $. Then, by definition of
$\mathcal{W}_t [\omega, \zeta]$, one easily derives that the Poisson
processes associated to the domains $[a,x] $ and $[x,b]$ do not
intersect $[0,t]$, while at least one of the Poisson processes
associated to the domain on the left of $[a,x] $ intersects $[0,t]$. In
particular, $x \in\xi(s) \cap\bar \mathcal{C}$ for all $s \in[0,t]$, in
agreement with~\eqref{heidi}. The same conclusion is reached if
$[a,x]\in\mathcal{V}_t[\omega,\zeta] $ and $[x,b] \in\mathcal{C}$. This
allows us to conclude that the definition of the path
$\{\xi(s)\}_{s\ge0}$ up to time $t$ is well posed. We point out that
this definition is $t$-dependent. The reader can easily check that,
increasing $t$, the resulting paths coincide on the intersection of
their time domains. Joining these paths together we get
$\{\xi(s)\}_{s\ge0}$.

Given a~configuration $\zeta\in\mathcal{N}(d_{\min})$, the law of the
corresponding random path $\{\xi(s)\}_{s\ge0}$ is that of the OCP with
initial condition $\zeta$. The advantage of the above construction is
that all OCP's, obtained by varying the initial configuration, can be
realized on the same probability space. Given a~probability measure
$\mathcal{Q}$ on $\mathcal{N}(d_{\min})$, the OCP with initial
distribution $\mathcal{Q}$ can be realized by the random path
$\{\xi^{\cdot} (s,\cdot)\}_{s\ge 0}$ defined on the product space
$\Omega\times\mathcal{N}(d_{\min})$ endowed with the probability
measure $P\times\mathcal{Q}$.

The next result (similar to \cite{S}, Lemma 2.2) is an immediate
consequence of the above construction and of the metric defined on
$\mathcal{N}(d_{\min})$. We omit the proof.

\begin{Lemma} \label{hollanda}
For any $(\zeta,\omega) \in\mathcal{N}(d_{\min}) \times\mathcal {B}$,
the function $[0,\infty) \ni s \mapsto\xi^\zeta(s,\omega) \in\mathcal
{N}(d_{\min}) $ is c\`adl\`ag. In other words,
$\{\xi^\zeta(s,\omega)\}_{s\ge0}$ belongs to the Skohorod space
$D ( [0,\infty), \mathcal{N}(d_{\min}) )$.
\end{Lemma}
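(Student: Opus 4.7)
The plan is to reduce the càdlàg property on $(\cN(d_{\rm min}), m)$ to a local-in-space piecewise-constant property, and then invoke the vague-topology characterization in Lemma \ref{gnocchi}. Fix $(\zeta,\omega) \in \cN(d_{\rm min}) \times \cB$. By Lemma \ref{gnocchi}$(i)$ and the definition of $m$, right-continuity and existence of left limits of $s \mapsto \xi^\zeta(s,\omega)$ in $\cN(d_{\rm min})$ will follow once I show the following: for every $s_0 \in [0,\infty)$ and every $r>0$ with $\xi^\zeta(s_0,\omega) \cap \{-r,r\} = \emptyset$ (resp.\ $\xi^\zeta(s_0-,\omega) \cap \{-r,r\} = \emptyset$ for left limits), there exists $\delta>0$ such that $\xi^\zeta(u,\omega) \cap (-r,r) = \xi^\zeta(s_0,\omega) \cap (-r,r)$ on $[s_0, s_0 + \delta)$, and analogously $\xi^\zeta(u,\omega) \cap (-r,r) = \xi^\zeta(s_0-,\omega) \cap (-r,r)$ on $(s_0 - \delta, s_0)$.

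The heart of the argument is a finiteness statement. Fix $t>0$ and $r>0$, and let $\cD_r$ denote the collection of (original) domains of $\zeta$ having at least one extreme in $[-r,r]$. By local finiteness of $\zeta$ the set $\cD_r$ is finite. A change of $\xi^\zeta(s,\omega) \cap [-r,r]$ at a time $s \in [0,t]$ can only be triggered by a legal ring of one of the Poisson processes $\cT^{(\D)}, \bar\cT^{(\D)}, \tilde{\cT}^{(\D)}$ with $\D \in \cD_r$, because a ring of any other domain $[x,x']$ of $\zeta$ only erases among the points $x,x'$, both of which lie outside $[-r,r]$. By \eqref{disgarding}, each of these finitely many Poisson processes is locally finite, so the set of ring times in $[0,t]$ affecting $[-r,r]$ is finite, say $s_1 < s_2 < \cdots < s_k$. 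The role of the hypothesis $\omega \in \cB$ is to ensure that at each of these times the rules \eqref{eq:uniformbis}, \eqref{eq:uniform2bis}, \eqref{eq:uniform3bis} actually define the value $\xi^\zeta(s_j,\omega)$, since the dynamics is built cluster by cluster and every cluster of $\cW_t[\omega,\zeta]$ has finite cardinality.

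Between consecutive ring times of $\{s_1,\dots,s_k\}$, and on $[0,s_1)$ and $(s_k, t]$, the restriction $\xi^\zeta(\cdot,\omega) \cap [-r,r]$ is constant by construction. At each $s_j$ the trajectory is defined with the post-ring value, so right-continuity at $s_j$ of the map $s \mapsto \xi^\zeta(s,\omega) \cap (-r,r)$ holds, and the pre-ring value provides the left limit. For an arbitrary $s_0 \in [0,t]$, both $\xi^\zeta(s_0,\omega)$ and $\xi^\zeta(s_0-,\omega)$ lie in $\cN(d_{\rm min})$, so the set of admissible $r$'s (those for which $\{-r,r\}$ misses both) is cofinite; verifying the criterion of Lemma \ref{gnocchi}$(i)$ for every such $r$ is sufficient for convergence in $\cN(d_{\rm min})$. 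Taking $t>s_0$ arbitrary then yields the conclusion for every $s_0 \in [0,\infty)$.

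No real obstacle is expected here: the argument is essentially topological bookkeeping once finiteness of the ring set affecting $[-r,r]$ is secured. The only mild subtlety is to translate the local-in-space right-continuity and existence of left limits into the same statements for the metric $m$; this is immediate by dominated convergence in the integral defining $m$, since $d_r\bigl(\xi^\zeta(u,\omega)^{(r)}, \xi^\zeta(s_0,\omega)^{(r)}\bigr)\to 0$ for a.e.\ $r>0$ as $u\downarrow s_0$, and symmetrically for left limits.
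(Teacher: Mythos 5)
Your argument is correct, and it follows exactly the route the paper indicates without writing out: the paper omits the proof of Lemma \ref{hollanda}, stating it is an immediate consequence of the graphical construction and of the metric on $\cN(d_{\rm min})$. You supply the expected details: finiteness of $\cD_r$ plus local finiteness of the associated Poisson processes (with $\o\in\cB$ guaranteeing that the cluster-by-cluster definition is well posed) makes $s\mapsto\xi^\zeta(s,\omega)\cap(-r,r)$ piecewise constant with right-continuous assignment at ring times, and this local-in-space statement transfers to convergence in $\cN(d_{\rm min})$ either via the criterion of Lemma \ref{gnocchi}$(i)$ or by dominated convergence in the integral defining $m$.
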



\section{\texorpdfstring{Proof of Theorem \protect\ref{amico_marco_zero}}
{Proof of Theorem 2.9}}\label{infinitesimi}

This section is dedicated to the construction and the analysis of
the Markov generator $\mathcal{L}$ 
of the OCP.
We first introduce the Markov semigroup associated to the graphical
construction of Section~\ref{universal} and then introduce the
pregenerator $\mathbb{L}$.

If points (domain extremes) belong always to a~given countable subset
of $\mathbb{R}$ (e.g., points belong to $\mathbb {Z}$), then one can
directly apply the methods developed for interacting particle systems
on countable space \cite{L}, identifying each domain extreme with
a~particle. In the general case, we have introduced a~lattice structure
(see Section~\ref{sec_OCP}) which strongly simplifies the problem of
the Markov generator from an analytic viewpoint, and allows us to use
again the methods described in~\cite{L}. Endowing the space
$\mathcal{N}$ of locally finite subset of $\mathbb{R}$ of the vague
topology, the map $\mathcal{N}\ni\xi\mapsto\xi\cap[a,b]\in\mathcal{N}$
is not continuous, hence the above discretization requires some special
care.

\subsection{Markov semigroup and pregenerator}\label{semigruppo_OCP}

Given an initial configuration $\xi\in\mathcal{N}(d_{\min})$, define
the path $\{\xi(s)\}_{s\ge0} = \{\xi^\xi(s)\}_{s\ge0}$ as in
Section~\ref{universal}, with $\zeta=\xi$. Note that the dependence on
the element $\omega\in \Omega$ is understood. In what follows we will
alternatively use the notation $\{\xi^\xi(s,\omega)\}_{s \geq0}$, or
$\{\xi^\xi(s)\}_{s \geq0}$ or $\{\xi(s)\}_{s \geq0}$, depending on the
context.

Let $\mathbb{P}_\xi$ be the law of the OCP starting from
$\xi\in\mathcal{N}(d_{\min}) $,
\[
\mathbb{P}_\xi(A) = P \bigl( \bigl\{ \omega\in\Omega\dvtx  \bigl\{ \xi(s)
\bigr\}_{s \geq0} \in A \bigr\} \bigr) \qquad\forall A \subset D\bigl([0,
\infty), \mathcal{N}(d_{\min})\bigr)\mbox{ Borel}.
\]
We write $\mathbb{E}_\xi$ for the corresponding expectation. Then, for
any $f \in\mathbb{B}$, we set
\[
P_t f(\xi) = \mathbb{E}_\xi \bigl( f \bigl( \xi(t)
\bigr) \bigr) \qquad\forall t \geq0.
\]

Below we shall prove that $(P_t)_{t \geq0}$ is a~Markov semigroup on
$\mathbb{B}$ in the sense of the following definition \cite{L}:
%
\begin{definition}[(Markov semigroup)] \label{markov}
A family of linear operators $(S_t)_{t \geq0}$ on $\mathbb{B}$ is
called a~Markov semigroup if it is Feller, that is, $S_t f
\in\mathbb{B}$ for all $f\in \mathbb{B}$, and satisfies the following
properties:
\begin{longlist}[(iii)]
\item[(i)] $S_0 = \mathbh{1}_\mathbb{B}$, the identity
    operator on $\mathbb{B}$;

\item[(ii)] for any $f \in\mathbb{B}$, $\lim_{t \to0} \| S_t f
    - f \| =0$;

\item[(iii)] for any $s,t \geq0$, any $f \in\mathbb{B}$$,
    S_{t+s} f = S_t ( S_s f)$;

\item[(iv)] for any $t \geq0$, $S_t \mathbh{1} = \mathbh{1}$;

\item[(v)] for any $f \in\mathbb{B}$, $f \geq0 \Rightarrow P_t
    f   \geq0$.
\end{longlist}
\end{definition}

Before moving to the proof of the fact that $(P_t)_{t \geq0}$ is
a Markov semigroup, we need to introduce some operators and to fix some
notation.

Given $s>0$ we consider the operator $L_s$ on $\mathbb{B}$ defined as
%
\begin{eqnarray}\label{santinumi}
L_s f(\xi) &=& \mathop{\sum
_{[x,x+d]\ \mathrm{domain}}}_{\mathrm{in\  \xi\cap(-s,s)}}  \bigl\{ \lambda_\ell(d)
\bigl[ f \bigl(\xi\setminus\{x\} \bigr)- f(\xi) \bigr]\nonumber
\\[-4pt]
&&\hspace*{59pt}{} + \lambda_r (d) \bigl[ f \bigl(\xi\setminus\{x+d\} \bigr)- f(\xi) \bigr]
\\
&&\hspace*{59pt}{}+ \lambda_a (d) \bigl[ f \bigl(\xi\setminus\{x, x+d\}
\bigr)- f(\xi) \bigr] \bigr\}.\nonumber
\end{eqnarray}
Since $\xi$ is locally finite, the RHS is given by a~finite sum and
therefore is well defined. Given an integer $n\in\mathbb{N}_+$, we
define the operator $\mathbb{L}_n$ on $\mathbb{B}$ as
%
\begin{equation}
\label{ln} \mathbb{L}_n f (\xi):= d_{\min}^{-1}
\int_{n d_{\min} }^{(n+1) d_{\min} } L_s f(\xi) \,ds.
\end{equation}
Note that, given $\xi\in\mathcal{N}(d_{\min} )$, the integrand is
a~bounded stepwise function with a~finite family of jumps. Hence, it is
integrable.

Recall the notation at the beginning of Section~\ref{prova_differisco}.
Given $k \in\mathbb{Z}$ and $\xi\in \mathcal{N}(d_{\min})$, let
\begin{eqnarray*}
 c_k( \xi)&:=& \mathbh{1} \bigl( |\xi\cap I_k|=1 \bigr) \bigl[
\lambda_r \bigl( d_{z_k } ^\ell\bigr)+
\lambda_\ell\bigl( d_{z_k} ^r\bigr) \bigr],
\\
 c_{k,k'} (\xi)&:=& \mathbh{1} \bigl(|\xi\cap I_k|=1, |\xi
\cap I_{k'}|=1, |\xi\cap I_r|=0\ \forall r\dvtx  k
<r<k' \bigr) \lambda_a ( z_{k'}-z_k
),
\end{eqnarray*}
where for any $\xi$ such that $ |\xi\cap I_k|=1$ we set $z_k:= \xi \cap
I_k$ [due to the definition of $\mathcal{N}(d_{\min})$ each interval
$I_k$ contains at most one point of $\xi$].

Finally, for any $f \in\mathbb{D}$ [recall~\eqref{ddd}], set
%
\begin{equation}
\label{preistorico} \mathbb{L}f (\xi): = \sum_{r \in\mathcal{R}}
c_r (\xi) \nabla_r f(\xi), \qquad \xi\in
\mathcal{N}(d_{\min}).
\end{equation}
Since the rates $\lambda_\ell,\lambda_r,\lambda_a$ are bounded, for all
$f \in\mathbb{D}$ the series on the RHS of~\eqref{preistorico} is
absolutely convergent; hence $\mathbb{L}f (\xi)$ is well defined.

\begin{Lemma}\label{giostra}
The following holds:
\begin{longlist}[(iii)]
\item[(i)] For each $n \in\mathbb{N}_+$, $\mathbb{L}_n$ is a~bounded
    operator from $\mathbb{B}$ to $\mathbb{B}$.

\item[(ii)] For each $f \in\mathbb{D}$, $\mathbb{L}f \in\mathbb{B}$
    and $\mathbb{L}f = \lim_{n \to\infty}\mathbb{L}_n f$. In
    particular, $\mathbb{L}$ is an operator with domain
    $\mathcal{D}(\mathbb{L}):=\mathbb{D}$ into $\mathbb{B}$.

\item[(iii)] $\mathbb{B}_{\mathrm{loc}} \subset\mathbb{D}$
    ($\mathbb{B}_{\mathrm{loc}}$ being the set of local functions
    $f\in\mathbb{B}$).
\end{longlist}
\end{Lemma}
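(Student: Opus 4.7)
My plan is to address (iii) first, then (i), and build up to (ii). For (iii), if $f \in \bbB_{\rm loc}$ has support in a bounded interval $I \subset \bbR$, then $\nabla_k f(\xi) = 0$ whenever $I_k \cap I = \emptyset$ (removing $I_k$ does not alter $\xi\cap I$), and similarly $\nabla_{k,k'} f \equiv 0$ when $(I_k \cup I_{k'}) \cap I = \emptyset$. Only finitely many $k \in \bbZ$ (and correspondingly finitely many pairs $(k,k') \in \cR$, since $k'-k \leq \lceil d_{\rm max}/d_{\rm min}\rceil$) meet these conditions; each $\Delta_f(r)$ is bounded by $2\|f\|$, so $|||f|||<\infty$.

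For (i), fix $n \in \bbN_+$. For any $s>0$ and $\xi \in \cN(d_{\rm min})$, the bound $|\xi\cap(-s,s)|\leq 2s/d_{\rm min}+1$ (domains have length at least $d_{\rm min}$) yields $|L_s f(\xi)| \leq 6\|\lambda\|_\infty(2s/d_{\rm min}+1)\|f\|$, whence $\|\bbL_n f\| \leq C_n\|f\|$ after integration over $s$. For continuity, I would fix $\xi_m \to \xi$ in $\cN(d_{\rm min})$ and observe that for every $s$ with $\xi \cap \{-s, s\} = \emptyset$---all but finitely many $s \in [nd_{\rm min},(n+1)d_{\rm min}]$---Lemma \ref{gnocchi}(ii) gives $\xi_m \cap (-s,s) \to \xi \cap (-s,s)$ with matching cardinalities and individual points converging. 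Continuity of $\lambda_\ell,\lambda_r,\lambda_a,f$ then yields $L_s f(\xi_m) \to L_s f(\xi)$, and the dominated convergence theorem in $s$ (with the uniform bound $|L_s f| \leq C(n)\|f\|$) transfers this to $\bbL_n f$.

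For (ii), the main work is to show $\bbL_n f \to \bbL f$ uniformly on the compact space $\cN(d_{\rm min})$, whence $\bbL f$, as the uniform limit of continuous functions, lies in $\bbB$. I would reindex the finite sum defining $L_s f(\xi)$ by the point(s) of $\xi$ erased rather than by the underlying domain, rewriting $L_s f(\xi)= \sum_{r \in \cR} c_r^{(s)}(\xi) \nabla_r f(\xi)$, where $c_r^{(s)}(\xi)$ is the truncation of $c_r(\xi)$ keeping only contributions from domains with both endpoints inside $(-s,s)$. Comparing with \eqref{preistorico} gives
\[
|\bbL f(\xi)-L_s f(\xi)| \leq 3\|\lambda\|_\infty \sum_{r \in \cR(\xi,s)} \Delta_f(r),
\]
where $\cR(\xi,s)$ collects those indices whose relevant points of $\xi$ lie within distance $d_{\rm max}$ of $\pm s$. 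This localization relies crucially on the vanishing of $\lambda_\ell,\lambda_r,\lambda_a$ outside $[d_{\rm min}, d_{\rm max})$, which bounds each boundary term's reach. For $s \in [nd_{\rm min},(n+1)d_{\rm min}]$, every such index involves some $k \in \bbZ$ with $|k|$ within a constant $C = O(\lceil d_{\rm max}/d_{\rm min}\rceil)$ of $n$; averaging over $s$ then yields $\|\bbL f-\bbL_n f\| \leq 3\|\lambda\|_\infty \sum_{r \in \cR_n} \Delta_f(r)$, where $\cR_n$ consists of those $r \in \cR$ involving some index of absolute value at least $n-C$. Since $|||f|||<\infty$, this tail tends to $0$ as $n\to\infty$, which also shows that the series in \eqref{preistorico} converges absolutely with $\|\bbL f\| \leq 3\|\lambda\|_\infty |||f|||$. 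The main obstacle is the careful bookkeeping in the reindexing step, particularly for the pair indices $(k,k')$ and in verifying that the boundary window remains narrow enough to extract a genuine tail of the convergent series $|||f|||$.
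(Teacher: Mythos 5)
Your proposal is correct and follows essentially the same route as the paper: (iii) by counting the finitely many $r \in \cR$ whose support meets the interval $I$ on which $f$ is supported; (i) by combining Lemma \ref{gnocchi}(ii) and continuity of the rates and of $f$ to get pointwise convergence of $L_s f(\xi_m)$ for all but finitely many $s$, then dominated convergence in $s$; (ii) by bounding $\|\bbL f - \bbL_n f\|$ by a tail of the absolutely convergent series $|||f|||$. If anything, your reindexing of $L_s f$ in terms of erased points and your explicit tracking of the boundary window of width $O(d_{\rm max})$ around $\pm s$ is a bit more careful than the paper's statement that the discrepancy is controlled by $\sum_{{\rm supp}(r)\not\subset[-n,n]}\D_f(r)$ (which tacitly absorbs a constant shift of the index range), so the proposal is sound.
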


\begin{remark}\label{linguini}
Observe that, for any $f \in\mathbb{B}_{\mathrm{loc}}$ and any $\xi\in
\mathcal{N}(d_{\min})$$,  \mathbb{L}f(\xi)$ equals the RHS of
\eqref{santi}. On the other hand, we point out that the operator
$\mathbb{L}\dvtx  \mathbb {B}\supset \mathbb{D}\to\mathbb{B}$ is
a~\emph{Markov pregenerator} as defined in \cite{L}, Chapter~1,
Definition 2.1. Indeed, it holds \textup{(i)} $\mathbh{1}
\in\mathbb{D}$, \textup{(ii)} $\mathbb{D}$ is dense in $\mathbb{B}$
since it contains the subset $\mathbb{B}_{\mathrm{loc}}$ which we know
by Lemma~\ref{lemino} to be dense and finally \textup{(iii)} if $f
\in\mathbb{D}$ and $f(\xi)= \min \{ f(\xi ')\dvtx  \xi'
\in\mathcal{N}(d_{\min}) \}$ then $\mathbb{L}f (\xi) \geq0$. Due to
\cite{L}, Chapter~1, Proposition~2.2, these conditions ensure that
$\mathbb{L}$ is a~Markov pregenerator.
\end{remark}

\begin{pf} Without loss of generality, for simplicity of notation we
take \mbox{$d_{\min}=1$}.

We consider part \textup{(i)}. Let $\xi_k \to\xi$ in
$\mathcal{N}(d_{\min})$.
We set $R:=\{ s\in[n,n+1]\dvtx   \xi\cap\{-s,s\} = \varnothing\}$.
We claim that $L_s f(\xi_k) \to L_s f(\xi)$ for $s \in R$. To this aim
we apply Lemma~\ref{gnocchi}\textup{(ii)}. For $k$ large, it holds that
$\xi \cap(-s,s)$ and $\xi_k\cap(-s,s)$ have the same finite cardinality
$N$. Writing $x_j$ and $x_j^{(k)}$ for their $j$th point (from the
left), we can write
\begin{eqnarray*}
L_s f (\xi) &=& \sum_{j=1}^{N-1}
 \bigl\{ \lambda_\ell(x_{j+1}-x_j) \bigl[ f
\bigl(\xi \setminus\{x_j\} \bigr)- f(\xi) \bigr]
\\[-4pt]
&&\hspace*{20pt}{} + \lambda_r (x_{j+1}-x_j) \bigl[ f
\bigl(\xi\setminus\{x_{j+1} \} \bigr)- f(\xi) \bigr]
\\
&&\hspace*{20pt}{} + \lambda_a (x_{j+1}-x_j) \bigl[ f
\bigl(\xi\setminus\{x_j,x_{j+1} \} \bigr)- f(\xi) \bigr]
\bigr\}
\end{eqnarray*}
and a~similar expression for $ L_s f (\xi_k)$. The thesis then follows
from (a) the convergence $x_j \to x_j ^{(k)}$ as $k \to\infty$ due
to\vspace*{-2pt} Lemma~\ref{gnocchi}\textup{(ii)}, (b) the continuity
of the jump rates, (c) the convergence $\xi_k\setminus\{x^{(k)}_j\}
\to\xi\setminus\{x _j\}$, $\xi_k\setminus\{x^{(k)}_{j+1}\}
\to\xi\setminus\{x _{j+1}\}$ and $\xi_k\setminus\{x^{(k)}_j,
x^{(k)}_{j+1} \} \to\xi\setminus \{x_j, x_{j+1}\}$ as $k \to\infty$ for
$j\dvtx  1\leq j < N$, (d) the continuity of $f$.

We can now prove that $\mathbb{L}_n f$ belongs to $\mathbb{B}$. To this
aim it is enough to apply the dominated convergence theorem together
with the above claim and the following observations: (a) $R
\setminus[n,n+1]$ is finite, (b) due to the definition of
$\mathcal{N}(d_{\min})$ the function $L_s f $ has uniform norm bounded
by $C s\|f\|$, $C$ being independent from $s$.

Let us now prove part \textup{(ii)}. Since we already now that
$\mathbb{L}_n f \in \mathbb{B}$, it is enough to show that
$\sup_{\xi\in\mathcal {N}(d_{\min})} | \mathbb{L}f(\xi) -\mathbb{L}_n
f(\xi)|$ converges to zero as $n \to \infty$. By the boundedness of the
rates it holds
\[
\bigl| \mathbb{L}f(\xi) -\mathbb{L}_n f(\xi)\bigr|\leq C \mathop{\sum
_{r\in\mathcal{R}}}_{\operatorname{supp}(r) \not\subset
[-n,n]} \Delta_f(r),
\]
where the support of $r$ is defined as $\operatorname{supp}(r)= k$ if
$r=k $ and $\operatorname{supp}(r)= \{k,k'\}$ if $r=(k,k')$. The above
estimate and the fact that $f \in\mathbb{D}$ allow us to conclude.

Part \textup{(iii)} is obvious.
\end{pf}

\begin{Proposition} \label{note}
The family of linear operators $(P_t)_{t \geq0}$ is a~Markov semigroup
on $\mathbb{B}$ given by contraction maps (i.e., $\| P_t f \|
\leq\|f\|$ for all $f \in\mathbb{B}$). Moreover, for any $f
\in\mathbb{B}_{\mathrm{loc}}$, it holds
\begin{equation}
\lim_{t \downarrow0} \biggl\llVert \frac{P_t f -f}{t} - \mathbb{L} f
\biggr\rrVert = 0.
\end{equation}
\end{Proposition}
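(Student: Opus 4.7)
Items $(i)$, $(iv)$, $(v)$ of Definition~\ref{markov} and the contraction property are immediate from the representation $P_t f(\xi)=\bbE_\xi[f(\xi(t))]$. The semigroup property $(iii)$ follows from the strong Markov structure built into the graphical construction of Section~\ref{universal}: restarting the universal coupling from $(\xi(s),\o)$ uses only Poisson atoms in $(s,\infty)$, which are independent of the history up to time $s$ and distributed as shifted atoms from $(0,\infty)$. I would establish the remaining non-trivial claims in the order below.

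For the Feller property, given $\xi_n\to\xi$ in $\cN(d_{\rm min})$, the aim is to show $\xi^{\xi_n}(t,\o)\to\xi^\xi(t,\o)$ vaguely for $P$-a.e.\ $\o$; dominated convergence and continuity of $f$ then yield $P_tf(\xi_n)\to P_tf(\xi)$. Fix $a<b$ with $\xi\cap\{a,b\}=\emptyset$; Lemma~\ref{gnocchi}(ii) ensures that for large $n$ the sets $\xi_n\cap(a,b)$ and $\xi\cap(a,b)$ have the same cardinality with matched points converging. Since the graphical construction acts by local rules on the clusters $\cW_t[\o,\z]$, on the event (of $P$-probability tending to $1$ as $a\to-\infty$, $b\to+\infty$) that no clock attached to a domain touching $\{a,b\}$ fires in $[0,t]$, the trajectories $\xi^{\xi_n}$ and $\xi^\xi$ agree on any $[a',b']$ strictly inside $(a,b)$, for all $n$ sufficiently large; vague convergence on $\bbR$ follows.

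For strong continuity at $t=0$, density of $\bbB_{\rm loc}$ (Lemma~\ref{lemino}) and contraction reduce the problem to $f\in\bbB_{\rm loc}$ with $\supp f\subset I=[-M,M]$; the event $\{f(\xi(t))\neq f(\xi)\}$ forces an erasure of some point of $\xi\cap I$, whose total instantaneous rate is uniformly bounded by $2\|\l\|_\infty(2M/d_{\rm min}+1)$, giving $\|P_tf-f\|\leq Ct$. For the derivative formula with the same $f$, let $\cD_I(\xi)$ be the family of initial domains with an extreme in $I$, so that $|\cD_I(\xi)|\leq N_*:=2M/d_{\rm min}+2$ uniformly in $\xi$, and let $A_k$ be the event that exactly $k$ clock rings (of any of the three types, legal or not) attached to $\cD_I(\xi)$ fire in $[0,t]$. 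Then $P(A_0)=e^{-R_\xi t}$ with $R_\xi\leq 3\|\l\|_\infty N_*$, and $P(A_{\geq 2})\leq (3\|\l\|_\infty N_* t)^2$, both uniformly in $\xi$. On $A_0$ no point of $I$ can be erased, for the first such erasure would have to come from a ring on the current domain having that point as extreme, which, being the first ring, equals the initial domain and thus lies in $\cD_I(\xi)$; so $f(\xi(t))=f(\xi)$. On $A_1$, condition on which ring fires: a ring at $T_m^{(\D_0)}$ of type $\ell$ on $\D_0=[x,x+d]\in\cD_I(\xi)$ is, by independence of $U_m^{(\D_0)}$ from the Poisson atoms, legal with conditional probability $\l_\ell(d)/\|\l\|_\infty$, and then produces $f(\xi\setminus\{x\})-f(\xi)$; analogous contributions hold for types $r$ and $a$. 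Marginalising, summing over $\D_0\in\cD_I(\xi)$, and using that the terms in the series \eqref{santi} with $\D\notin\cD_I(\xi)$ vanish by locality of $f$, one obtains
\begin{equation*}
\bbE_\xi[f(\xi(t))-f(\xi)] \;=\; t\,\bbL f(\xi)+O(t^2),
\end{equation*}
with $O(t^2)$ uniform in $\xi$, which gives the stated limit.

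\textbf{Main obstacle.} The hardest point is securing uniformity in $\xi$ of the $O(t^2)$ error in the derivative step; this hinges on Assumption~(A2) through the uniform bound $|\cD_I(\xi)|\leq N_*$. A subsidiary subtlety is to ensure that, on $A_1$, rings on clocks of initial domains outside $\cD_I(\xi)$ cannot propagate into $I$ during $[0,t]$; this is handled by applying the same $O(t^2)$ bound to an enlarged finite family consisting of $\cD_I(\xi)$ together with all initial domains adjacent to it, and absorbing the resulting contribution in the error term.
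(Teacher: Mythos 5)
Your overall plan is sound, and for the derivative formula and strong continuity your decomposition by the number of clock rings attached to $\cD_I(\xi)$ (and, for the propagation issue, to the enlarged family including neighbours) is a correct and self-contained argument; the paper itself defers this part to the standard techniques of \cite{S}, so you are filling in a step the paper leaves implicit. The Feller property, however, is where you take a genuinely different route from the paper, and where your write-up has a gap.

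The paper proves $P_t f\in\bbB$ for $f\in\bbB_{\rm loc}$ by a quantitative two-stage truncation: first a ``sealing'' event $\cC$ guaranteeing that, with probability $\geq 1-\e$, the dynamics inside $(-N,N)$ over $[0,t]$ is determined by the restriction of the initial configuration to a bounded window, reducing the comparison to two finite configurations; then, conditioning on the number $X$ of rings, it bounds the probability that any one ring is legal for one process and not the other by $L\,c(\d_1)$, where $c(\d_1)$ is the uniform modulus of continuity of the rates. You instead aim at $P$-a.s.\ vague convergence of trajectories plus dominated convergence. This is a legitimate alternative, but your key sentence --- that on the good percolation event the trajectories $\xi^{\xi_n}(\cdot,\o)$ and $\xi^{\xi}(\cdot,\o)$ ``agree'' on a window for $n$ large --- cannot literally hold: the initial point sets are distinct, so the trajectories are distinct point sets at all times; at best they are structurally matched and converge. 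More importantly, nothing in your argument explains \emph{why} the legal/illegal decisions for $\xi_n$ and $\xi$ eventually coincide. The ring at $T_m^{(\D)}$ is legal iff $U_m^{(\D)} \leq \l_i(d)/\|\l\|_\infty$, and for $\xi_n$ the threshold involves the perturbed length $d_n\to d$; coincidence of decisions for $n$ large requires that a.s.\ none of the finitely many relevant uniforms lands exactly on its limiting threshold $\l_i(d)/\|\l\|_\infty$. This is true (by Fubini, since the $U$'s are continuous and independent of the Poisson atoms and of $\xi$), but it is precisely the ingredient that makes the a.s.\ convergence work, and it is the same phenomenon that the paper controls quantitatively via $c(\d_1)$; your proof does not invoke it. Without it your argument does not close, so this is a real gap, although a small one that is easily repaired once identified.
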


\begin{pf}
We focus on the only point that is not standard, namely the Feller
property. The rest is either a~direct consequence of the graphical
construction, or can be easily derived using the arguments presented in
\cite{S}, Chapter~2. Details are left to the reader.

Let us prove the Feller property. Fix $f \in\mathbb{B}$ and
$\varepsilon>0$.
Thanks to Lemma~\ref{lemino}, setting $f_N(\xi) =\int_N^{N+1}f(\xi
\cap(-r,r)) \,dr$, we are guaranteed that $f _N
\in\mathbb{B}_{\mathrm{loc}}$ and $\lim_{N \to\infty}\|f-f_N\|=0$.
Since $\|P_t f-P_t f_N\|_\infty\leq\|f-f_N\|_\infty$ as functions on
$\mathcal {N}(d_{\min} )$, approximating $f$ by $f_N$ we conclude that
it is enough to show that $P_tf_N\in\mathbb{B}$, or equivalently that
$P_t f \in\mathbb {B}$ for any $f \in\mathbb{B}_{\mathrm{loc}}$.

Let us fix $f \in\mathbb{B}_{\mathrm{loc}}$ and suppose that $f$ has
support inside $(-N,N)$ for some $N\geq1$. For simplicity of notation
we take $d_{\min}\geq1$ (the general case is completely similar). Since
$\mathcal{N}(d_{\min})$ is compact, $f$ is uniformly continuous. Hence,
there exists $\delta_0>0$ such that
%
\begin{equation}
\label{fuoriberlusconi} m(\zeta,\eta)<\delta_0 \quad\Longrightarrow\quad
\bigl|f(\zeta)-f(\eta)\bigr| \leq\varepsilon.
\end{equation}
Recall the universal coupling discussed in Section~\ref{universal} and the notation introduced therein. Depending on
$\varepsilon$, we can fix $\gamma> 10 $
large enough
such that $P(\mathcal{C}) \geq1- \varepsilon$ where $\mathcal{C}$
is the event given
by the elements $\omega\in\Omega$ for which there exist integers $k,k'$
with $ 10N \leq k,k'\leq\gamma N$ and
\[
[0,t]\cap \bigl( \mathcal{T}^{(j)} \cup\bar{\mathcal{T}}^{(j)}
\cup\tilde{\mathcal{T}}^{(j)
} \bigr) = \varnothing\qquad\forall
j=k,k-1,-k',-\bigl(k'-1\bigr).
\]
Given a~generic configuration $\zeta\in\mathcal{N}( d_{\min})$, all the
points $x$ of $\zeta\cap(-N,N)$ have index $N(x,\zeta)$ belonging to
$[-N,N]$ due to our assumption $d_{\min} \geq1$.

We claim that, if $\omega\in\mathcal{C}$, then the configuration
$\xi^{\zeta}(t)[\omega] $ inside $(-N,N)$ is univocally determined
knowing $ \mathcal{T}^{(j)}$$,  \bar{\mathcal{T}}^{(j)}$$,  \tilde
{\mathcal{T}}^{(j)}$, $(U^{(j)}_m)_{m \geq0}$, $(\bar{ U}^{(j)}_m)_{m
\geq0}$,\break  $(\tilde{U}^{(j)}_m)_{m \geq0}$ for $j \in[-\gamma N, \gamma
N]$. In order to explain this, suppose, for example, that $\zeta$ is
unbounded from the left and from the right. Then the Poisson processes
associated to the domains $[x,y]$ and $[y,z]$ do not have any time
inside $[0,t]$, where $N(x,\zeta)=k-1$, $N(y,\zeta)=k$. In particular,
both these domains can be incorporated but cannot incorporate other
domains. This implies that the point $y$ survives for all times in
$[0,t]$. Hence, whatever has happened on the right of $y$ up to time
$t$ has not influenced the dynamics on the left of $y$. The same
argument holds observing the domains $[x',y']$ and $[y',z']$ with
$N(x',\zeta)=-k'$, $N(y',\zeta)=-(k'-1)$. If $\zeta$ is bounded from
the left of from the right, the proof of our claim is even simpler.

Due to the above claim, for each $\zeta$ it holds
\[
\xi^{\zeta}(t)[\omega] \cap(-N,N)=\xi^{\bar\zeta}(t)[\omega]
\cap(-N,N)
\]
if $\omega\in\mathcal{C}$ and if $ \bar\zeta$ is the configuration obtained
from $\zeta$ by removing all the points $x \in\zeta$ with $|N(x,
\zeta)|> \gamma N$.
Recalling that $f$ has support in $(-N,N)$, we have
%
\begin{eqnarray}
\label{cantare}
\bigl| P_t f (\zeta)-P_t f ( \bar\zeta) \bigr|\nonumber
&=& \biggl|\int P(d\omega) f \bigl(\xi^{\zeta }(t)[\omega]\cap(-N,N) \bigr)
\\
&&\hspace*{3pt} - \int P(d\omega) f \bigl(\xi^{\bar\zeta
}(t)[\omega]\cap(-N,N) \bigr)
\biggr|
\\
&\leq& 2 P\bigl(\mathcal{C}^c\bigr) \|f\|\leq 2 \|f\| \varepsilon.\nonumber
\end{eqnarray}
Fix now $\zeta$. Let us suppose for simplicity that $\zeta$ is
unbounded from the left and from the right (the other cases can be
treated similarly). Then $\bar\zeta$ contains all the points $x\in
\zeta$ with index $N(x,\zeta)\in[-\gamma N,\gamma N]$. We have $\bar
\zeta= \zeta\cap(-a,b)$ for suitable $a,b>0$. Due to Lemma
\ref{gnocchi}, one can prove that there exists $\delta>0$ [smaller than
$\delta_0$, defined in~\eqref{fuoriberlusconi}] such that if
$\eta\in\mathcal {N}(d_{\min} )$ and $m(\zeta,\eta) \leq\delta$, then
$\eta\cap(-a,b) $ has the same cardinality of $\zeta\cap(-a, b)$. In
particular, $\eta\cap (-a,b)$ is given by all the points $x$ of $\eta$
with index $N(x, \eta) \in[-\gamma N,\gamma N]$. This implies that for
all $\eta\in\mathcal {N}( d_{\min})$ such that $m( \zeta,\eta)
\leq\delta$ it holds $ \bar\eta= \eta\cap(-a,b) $. Fix $\delta_1
>0$. Taking $\delta$ smaller if necessary, we can assume that
if $m(\zeta,\eta) \leq\delta$, then any two points $x\in\zeta$
and $x'
\in\eta$ with $N(x,\zeta)=N(x',\eta)$ satisfy
$|x-x'|\leq\delta_1$.\vadjust{\goodbreak}

Since~\eqref{cantare} has been obtained for any configuration in $
\mathcal{N} (d _{\min} )$, we conclude that
%
\begin{eqnarray}\label{incendio}
\qquad \bigl| P_t f (\zeta)-P_t f ( \eta) \bigr|\leq2 \|f
\| \varepsilon+ \bigl|P_t f \bigl( \zeta\cap(-a,b) \bigr) -P_t
f\bigl(\eta\cap(-a,b)\bigr) \bigr|
\nonumber\\[-8pt]\\[-8pt]
\eqntext{\forall \eta\dvtx m(\zeta, \eta) \leq\delta.}
\end{eqnarray}
Hence, in order to prove that $\zeta\mapsto P_t f (\zeta)$ is
continuous, it remains to prove that $|P_t f ( \zeta\cap(-a,b) ) -P_t
f(\eta\cap(-a,b)) |$ is small with $\varepsilon$. Fix an integer $L$
that will
be chosen later and $\eta$ so that $m(\zeta, \eta) < \delta$. Then we
decompose the expectation according to the event that the total
(random) number $X$ of clock rings
inside
$(-a,b)$, up to time $t$, is smaller or larger than $L$. Namely
%
\begin{eqnarray}
\label{corrovia}
&& \bigl|P_t f \bigl( \zeta\cap(-a,b) \bigr)
-P_t f\bigl(\eta\cap(-a,b)\bigr) \bigr|\nonumber
\\
&&\qquad \leq \bigl|\mathbb{E} \bigl(f\bigl(\xi^{\zeta\cap(-a,b)}(t)\bigr)
\mathbh{1}_{X \leq L}\bigr) - \mathbb{E} \bigl(f\bigl(\xi^{\eta\cap(-a,b)}(t)
\bigr) \mathbh{1}_{X \leq L}\bigr) \bigr|
\\
&&\quad\qquad{} + 2 \|f \| P(X \geq L),\nonumber
\end{eqnarray}
where $X$ is the cardinality of the set
\[
\bigl\{ s \in[0,t]\dvtx  s\in\mathcal{T}^{(k)} \cup\bar
\mathcal{T}^{(k)} \cup\tilde \mathcal{T}^{(k)}\mbox{ for
some } k \in\mathbb{Z}\cap[-\gamma N, \gamma N] \bigr\}.
\]
Let $t_1< \cdots<t_X $ be clock rings in the above set. Consider the
first ring $t_1$. Either this ring is legal/not legal
[see~\eqref{eq:uniformbis},~\eqref{eq:uniform2bis},
\eqref{eq:uniform3bis}] for both processes [i.e., the dynamics starting
from $\zeta\cap(-a,b)$ and the dynamics starting from
$\eta\cap(-a,b)$], or it is legal for one process and not legal for the
other one. In the first case one easily sees that
$m(\xi^{\zeta\cap(-a,b)}(t_1) \cap(-N,N)),\xi^{\eta \cap(-a,b)}(t_1)
\cap(-N,N))<\delta$ (and thus $m(\xi^{\zeta \cap(-a,b)}(s)
\cap(-N,N)),\xi^{\eta\cap(-a,b)}(s) \cap (-N,N))<\delta$ for any $s
\in[0,t_2)$). The second case takes place with probability bounded by
\[
c(\delta_1):=\sup_{ i=a,\ell,r}\ \sup_{d,d'\geq0\dvtx  |d-d'| \leq
2\delta_1}
\frac{ |\lambda_i(d)-\lambda_i(d') |}{\|\lambda\|}.
\]
By assumption, the jump rates $\lambda_a, \lambda_\ell,\lambda_r$ are
continuous functions with support in $[0,d_{\max}]$, and hence they are
uniformly continuous and thus $\lim_{\delta_1\downarrow0}
c(\delta_1)=0$. Iterating the above argument, we end up with
%
\begin{eqnarray}\label{scotta}
&& \bigl| \mathbb{E} \bigl(f\bigl(\xi^{\zeta\cap(-a,b)}(t)\bigr)
\mathbh{1}_{X \leq L}\bigr) - \mathbb{E} \bigl(f\bigl(\xi^{\eta\cap(-a,b)}(t)
\bigr) \mathbh{1}_{X \leq L}\bigr) \bigr|\nonumber
\\
&&\qquad  \leq Lc( \delta_1)
+ \mathbb{E} \bigl( \bigl|f\bigl(\xi^{\zeta\cap(-a,b)}(t) \bigr) - f_N
\bigl(\xi^{\eta\cap(-a,b)}(t) \bigr)\bigr|
\nonumber\\[-8pt]\\[-8pt]
&&\hspace*{85pt}{}\times  \mathbh{1}_{m(\xi^{\zeta\cap(-a,b)}(t) \cap(-N,N)),\xi^{\eta\cap
(-a,b)}(t) \cap(-N,N))<\delta} \bigr)\nonumber
\\
&&\qquad \leq L c(\delta_1) + \varepsilon,\nonumber
\end{eqnarray}
where in the last line we used~\eqref{fuoriberlusconi} (together with
the fact that $\delta< \delta_0$).

In remains to estimate the deviation $P(X \geq L) $ with $X$ a~Poisson
variable of mean $3tM$, where $M$ is the cardinality of $ [-\gamma
N,\gamma N] \cap\mathbb{Z}$. Since $E( e^{ X}) = \exp \{ (e-1) 3tM \}
$, setting $L= \kappa t M$ by Chebyshev inequality we get
%
\begin{equation}
\label{sole} P(X \geq\kappa t M ) \leq\exp \bigl\{ 3tM (e-1)- \kappa t M
\bigr\} \leq e^{-\kappa t M/2 }
\end{equation}
for $\kappa\geq\kappa_0$. Summing up the above estimates
[see~\eqref{incendio}, \eqref {corrovia},~\eqref{scotta},~\eqref{sole}]
we finally get the following. Fixed $\delta_1>0$ and $
\kappa>\kappa_0$, for $\delta$ small enough the bound
$m(\zeta,\eta)<\delta$ implies
\[
\bigl|P_tf(\zeta)-P_tf(\eta)\bigr| \leq2\|f\| \varepsilon+ \kappa
t M c(\delta_1)+ \varepsilon+ \|f \| e^{-\kappa t M/2}.
\]
Choosing $\kappa$ large enough, and then $\delta_1$ small enough
amounts to the desired result.
\end{pf}

\subsection{\texorpdfstring{Proof of Theorem \protect\ref{amico_marco_zero}}
{Proof of Theorem 2.9}}

By definition, the Markov generator $\mathcal{L}\dvtx
\mathbb{B}\supset\mathcal{D}(\mathcal{L}) \to\mathbb{B}$, associated to
the Markov semigroup $\{ P_t\dvtx  t \geq0\}$ acting on the space
$\mathbb{B}$, has domain $\mathcal{D}(\mathcal{L})$ given by
\[
\mathcal{D}(\mathcal{L}):= \biggl\{ f\in\mathbb{B}\dvtx  \lim_{t\downarrow0}
\frac{ P_t f
- f}{t}\mbox{ exists in }\mathbb{B} \biggr\}.
\]
Moreover, given $f \in\mathcal{D}(\mathcal{L})$, one sets $\mathcal
{L}f:= \lim_{t\downarrow0}\frac{ P_t f - f}{t}$. We stress that the
above limits are thought w.r.t. the uniform norm. In addition, we
recall that the space $\mathbb{B}$ depends on the parameter $d_{\min}$,
although omitted. Note that, when speaking of Markov generators, we do
not follow the definition given in \cite{L}, Chapter~1 (even if,
invoking the Hille--Yosida theorem, the two definitions coincide).

Our aim is to prove the following theorem, which corresponds to Theorem~\ref{amico_marco_zero}:
%
\begin{Theorem}\label{amico_marco}
The subspaces $\mathbb{B}_{\mathrm{loc}}$ and $\mathbb{D}$ are a~core
of the Markov generator~$\mathcal{L}$; that is, $\mathcal{L}$ is the
closure of the operator $\mathbb{L}\dvtx  \mathbb{D}\ni f
\mapsto\mathbb{L}f\in\mathbb{B}$, and of its restriction to
$\mathbb{B}_{\mathrm{loc}}$. Moreover, if $f \in \mathbb{D}$,
$\mathcal{L}f(\xi)$ equals the absolutely convergent series on the RHS
of~\eqref{santi}.
\end{Theorem}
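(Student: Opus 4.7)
The strategy I would adopt follows the framework for Markov pregenerators developed in \cite[Ch.\ 1]{L}, exploiting two facts already established: $\bbL$ on $\bbD$ is a Markov pregenerator (Remark \ref{linguini}), and $\bbL$ coincides with the semigroup generator $\cL$ on the local subspace $\bbB_{\rm loc}$ (Proposition \ref{note}). The proof decomposes into two tasks: (A) promote the identity $\cL f = \bbL f$ from $\bbB_{\rm loc}$ to all of $\bbD$ (which by \eqref{preistorico} and Remark \ref{linguini} also yields the absolutely convergent series expression of the r.h.s.\ of \eqref{santi}); and (B) show $\bbB_{\rm loc}$ (hence also $\bbD$) is a core of $\cL$.

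For task (A), I would approximate a given $f \in \bbD$ by the local cut--offs $f_N(\xi) := \int_N^{N+1} f(\xi \cap (-r,r))\,dr \in \bbB_{\rm loc}$ from Lemma \ref{lemino}. These satisfy $f_N \to f$ uniformly and $\cL f_N = \bbL f_N$. If one also proves $\bbL f_N \to \bbL f$ in $\bbB$, closedness of $\cL$ yields $f \in \cD(\cL)$ with $\cL f = \bbL f$. The convergence $\bbL f_N \to \bbL f$ reduces, via the uniform bound $|\bbL f(\xi) - \bbL f_N(\xi)| \leq \|\lambda\|_\infty \sum_{r \in \cR} \Delta_{f - f_N}(r)$ coming from \eqref{preistorico}, to showing $|||f - f_N||| \to 0$. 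I would split the sum over $\cR$ into a ``bulk'' part, consisting of indices $r$ whose supporting intervals $I_k$ (or $I_k \cup I_{k'}$) are well inside $(-N,N)$ uniformly in the averaging variable $r \in [N,N+1]$, for which $\nabla_r f_N(\xi) = \nabla_r f(\xi)$ by construction; and a ``tail'' part, whose contribution is majorised by $2\sum_{r:\ \mathrm{supp}(r) \not\subset [-N,N]} \Delta_f(r)$, which tends to zero because $|||f|||<\infty$.

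For task (B), I plan to verify the Hille--Yosida range condition as in \cite[Ch.\ 1]{L}: it suffices to show $(\lambda I - \bbL)(\bbB_{\rm loc})$ is dense in $\bbB$ for some $\lambda > 0$. Given $g \in \bbB$, the resolvent $f := \int_0^\infty e^{-\lambda t} P_t g\,dt$ lies in $\cD(\cL)$ with $(\lambda I - \cL) f = g$. Applying the same local approximants $f_N \in \bbB_{\rm loc}$ to this $f$, one gets $\lambda f_N \to \lambda f$; combined with task (A), it would then remain to prove $\bbL f_N \to \cL f$ uniformly, so that $\lambda f_N - \bbL f_N \to g$ in $\bbB$.

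The main obstacle is precisely the convergence $\bbL f_N \to \cL f$ at the end of task (B), since a priori $f = R_\lambda g$ is known only to lie in $\cD(\cL)$ and not in $\bbD$, so the tail control from task (A) is not directly available. I would attempt to overcome this by showing that $R_\lambda$ maps $\bbB_{\rm loc}$ into $\bbD$: using the graphical construction of Section \ref{universal}, one can bound $\Delta_{R_\lambda g}(r)$ by the expected number of clock rings (weighted by $e^{-\lambda t}$) that can propagate an influence across the relevant intervals $I_k, I_{k'}$, a quantity that is summable in $r \in \cR$ uniformly in $\xi$ thanks to boundedness of $\lambda_\ell + \lambda_r + \lambda_a$. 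If this direct estimate proves too delicate, a backup route is to exploit the bounded approximants $\bbL_n$ of Lemma \ref{giostra}: being bounded, each $\bbL_n$ admits an explicit resolvent by Neumann series, the $(\lambda - \bbL_n)^{-1} g$ are local for $g$ local, and the uniform convergence $\bbL_n f \to \bbL f$ on $\bbD$ together with the known Feller semigroup $P_t$ should allow a limiting argument identifying $\bar\bbL$ with $\cL$.
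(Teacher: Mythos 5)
Your overall plan — first extend the identity $\cL f = \bbL f$ from $\bbB_{\rm loc}$ to $\bbD$ by cut--off approximation and closedness, then verify a Hille--Yosida range condition — is the same skeleton the paper uses. There are, however, two concrete gaps.

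\smallskip
\emph{Task (A).} The ``bulk'' part of your decomposition relies on the claim that $\nabla_r f_N(\xi) = \nabla_r f(\xi)$ whenever $I_r$ lies well inside $(-N,N)$. This is false: since $(\xi_s)^r = (\xi^r)_s$ for $\xi_s := \xi\cap(-s,s)$, one gets
$\nabla_r f_N(\xi) = \int_N^{N+1}\nabla_r f(\xi_s)\,ds$, which is \emph{not} $\nabla_r f(\xi)$ unless $f$ itself is local. Consequently $|||f - f_N|||$ does not tend to zero by the reasoning you give. The paper's Lemma \ref{cavallino} instead bounds the bulk contribution by $2\|c\|_\infty\,|\{r: \mathrm{supp}(r)\subset[-N,N]\}|\cdot\|f - f_n\|$ (a quantity whose first factor grows with $N$) and uses a two--parameter argument: first fix $N$ so the tail $\sum_{\mathrm{supp}(r)\not\subset[-N,N]}\Delta_f(r)$ is small, then send $n\to\infty$ to kill the bulk. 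Your one--parameter argument with a single cut--off $N$ does not close.

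\smallskip
\emph{Task (B).} You correctly identify that the obstacle is proving a suitable object belongs to $\bbD$, and you float two fixes. Neither is carried out and the one the paper actually uses requires an ingredient you are missing. The paper does \emph{not} show $R_\lambda g\in\bbD$ directly; instead, given $g\in\bbD$ it sets $f_n := (\mathds{1}-\lambda\bbL_n)^{-1}g$ (resolvent of the \emph{bounded} pregenerator $\bbL_n$ of Lemma \ref{giostra}) and proves the crucial triple--norm estimate of Lemma \ref{pixar:deigeni}: for $f-\lambda\bbL f = g$, one has $\Delta_f(r)\le\Delta_g(r)+\lambda\sum_{r'\ne r}\gamma(r,r')\Delta_f(r')$, i.e.\ $[\mathds{1}-\lambda\Gamma]\Delta_f\le\Delta_g$ componentwise, where $\Gamma$ is a bounded operator on $\ell^1(\cR)$ with finite range. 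Inverting (for $\lambda$ small) gives $\Delta_{f_n}\le[\mathds{1}-\lambda\Gamma]^{-1}\Delta_g\in\ell^1(\cR)$, hence $f_n\in\bbD$ uniformly, after which $g_n:=f_n-\lambda\bbL f_n\to g$ follows and the range density is established. Your proposal lacks this Gronwall--type estimate; the "boundedness of $\lambda_\ell+\lambda_r+\lambda_a$" alone is not enough, one needs the finite-range structure of $\gamma(r,r')$ and the componentwise inversion on $\ell^1(\cR)$. Also, your backup claim that $(\lambda-\bbL_n)^{-1}g$ is local for $g$ local is both dubious (the Neumann series converges only for $\lambda>\|\bbL_n\|$, and $\|\bbL_n\|$ grows with $n$) and beside the point: what must be controlled is membership in $\bbD$, which Lemma \ref{pixar:deigeni} delivers without any locality of $f_n$. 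Finally, once density of the range of $\mathds{1}-\lambda\bbL$ on $\bbD$ is known and $\bar\bbL$ is recognized as a Hille--Yosida generator, the identification $\bar\bbL=\cL$ is a short but non--trivial argument (via closedness of $\cL$ plus the general fact that a Markov generator cannot be strictly extended by another Markov generator); your proposal gestures at ``a limiting argument'' but does not give it.
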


We need some preparation. Our first target is to prove that the image
of $\mathbh{1} -\lambda\mathbb{L}$ (where $\mathbh{1}$ is the identity
operator) is dense in $\mathbb{B}$ for $\lambda$ sufficiently small.
To this aim, we follow a~strategy similar to the one adopted for
particle systems in \cite{L}, Chapter~1.
Set $\| c\|_\infty:=\sup_{r \in\mathcal{R}} \|c_r \|_\infty$ and
note that,
by boundedness of the rates, $\| c\|_\infty< \infty$.

\begin{Lemma}\label{pixar:deigeni}
Suppose that $f \in\mathbb{D}$ and $f-\lambda\mathbb{L}f = g$ for
some $\lambda\geq
0$. Then for any $r \in\mathcal{R}$ it holds
%
\begin{equation}
\label{wall-e} \Delta_f (r) \leq\Delta_g (r) +
\lambda\sum_{ r'\in\mathcal
{R},  r' \neq r } \gamma\bigl(r,r'
\bigr) \Delta_f \bigl(r'\bigr),
\end{equation}
where $\gamma(r,r'):=\sup_{\xi\in\mathcal{N}(d_{\min})}
|c_{r'}(\xi^r)-c_{r'}(\xi) |$.
\end{Lemma}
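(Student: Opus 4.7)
The plan is to derive a pointwise decomposition of $\nabla_r \bbL f(\xi)$ that isolates a non-positive self-contribution $-c_r(\xi)\nabla_r f(\xi)$, and then to compare with $\nabla_r g$ at a near-maximiser of $\nabla_r f$ (up to sign).

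The first step is to telescope $\nabla_r \bbL f(\xi) = \bbL f(\xi^r) - \bbL f(\xi)$ term by term. Two structural facts do the work: the removal operations commute, $(\xi^r)^{r'} = (\xi^{r'})^r$, yielding the symmetry
\[
\nabla_{r'} f(\xi^r) - \nabla_{r'} f(\xi) = \nabla_r f(\xi^{r'}) - \nabla_r f(\xi);
\]
and $c_r(\xi^r) = 0$, because the indicators in $c_k$ and $c_{k,k'}$ require a point of $\xi$ in each interval of $I_r$, a property destroyed in passing to $\xi^r$. Splitting $c_{r'}(\xi^r)\nabla_{r'} f(\xi^r) - c_{r'}(\xi)\nabla_{r'} f(\xi)$ as $c_{r'}(\xi^r)[\nabla_{r'} f(\xi^r) - \nabla_{r'} f(\xi)] + [c_{r'}(\xi^r) - c_{r'}(\xi)]\nabla_{r'} f(\xi)$ in each term, the two facts combine to give
\[
\nabla_r \bbL f(\xi) = -c_r(\xi)\nabla_r f(\xi) + \sum_{r'\ne r} c_{r'}(\xi^r)\bigl[\nabla_r f(\xi^{r'}) - \nabla_r f(\xi)\bigr] + \sum_{r'\ne r}\bigl[c_{r'}(\xi^r) - c_{r'}(\xi)\bigr]\nabla_{r'} f(\xi),
\]
with all sums converging absolutely because $f \in \bbD$ and because $\gamma(r,r')=0$ for all but finitely many $r'$ (the rate $c_{r'}$ depending on $\xi$ only through a bounded neighbourhood of $I_{r'}$).

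Substituting into $\nabla_r f = \nabla_r g + \lambda \nabla_r \bbL f$ and rearranging, I obtain $(1 + \lambda c_r(\xi))\nabla_r f(\xi)$ on the left, equal to $\nabla_r g(\xi)$ plus $\lambda$ times the two remaining sums on the right. I would then choose a sequence $\xi^*_n$ with $\nabla_r f(\xi^*_n) \to \D_f(r)$ (after possibly replacing $f$ by $-f$, a harmless symmetry for all the quantities appearing in the lemma), discard the non-negative term $\lambda c_r(\xi^*_n)\nabla_r f(\xi^*_n)$ on the left, and bound the third sum trivially by $\sum_{r'\ne r} \gamma(r,r')\D_f(r')$. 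The remaining task is to show that the middle sum has non-positive $\limsup$ as $n\to\infty$.

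This last point is where I expect the main obstacle: $\sum_{r'} c_{r'}(\xi^{*r}_n)$ may be infinite for infinite configurations, so the crude estimate $c_{r'}(\xi^{*r}_n)[\nabla_r f(\xi^{*r'}_n) - \nabla_r f(\xi^*_n)] \leq c_{r'}(\xi^{*r}_n)\epsilon_n$, with $\epsilon_n := \D_f(r) - \nabla_r f(\xi^*_n) \downarrow 0$, cannot be summed uniformly. My plan is to truncate the index set at a finite $A \subset \cR$: on $A\setminus\{r\}$ use the crude bound, contributing at most $|A|\|c\|_\infty\epsilon_n \to 0$, and on $\cR\setminus A$ use the absolute bound $2c_{r'}(\xi^{*r}_n)\D_f(r') \leq 2\|c\|_\infty \D_f(r')$ obtained from the commutation identity together with $|\nabla_{r'} f| \leq \D_f(r')$, contributing at most $2\|c\|_\infty \sum_{r'\notin A}\D_f(r')$, which is arbitrarily small because $|||f||| < \infty$ by $f \in \bbD$. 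Sending first $n \to \infty$ and then $A \uparrow \cR$ drives the middle-sum contribution to zero and yields \eqref{wall-e}.
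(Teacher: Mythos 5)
Your proof is correct and follows essentially the same route as the paper: pick a near-maximiser of $|\nabla_r f|$ (the paper uses an $\epsilon$-approximant, you use a sequence $\xi_n^*$ -- equivalent), exploit $c_r(\xi^r)=0$ to drop or isolate the non-positive $r'=r$ term, use the commutation identity $\nabla_{r'}f(\xi^r)-\nabla_{r'}f(\xi)=\nabla_r f(\xi^{r'})-\nabla_r f(\xi)$ together with near-maximality to make the ``middle'' contribution small, and truncate to a finite $\hat\cR\subset\cR$ for the tail so that one can first let the near-maximality error vanish and then let $\hat\cR\nearrow\cR$. The exact decomposition you write is just the paper's inequality \eqref{rum}--\eqref{stima2} made into an equality before discarding terms; the sign reduction via $f\mapsto -f$ matches the paper's remark that the case $|\nabla_r f(\xi)|=-\nabla_r f(\xi)$ is handled similarly. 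No gaps.
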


\begin{pf}
Fix $\varepsilon>0$ and a~finite subset $\hat\mathcal{R}\subset
\mathcal{R}$. Take $\xi \in\mathcal{N}(d _{\min})$ such that $ \Delta_f
(r)\leq \varepsilon+ |\nabla_r f(\xi) |$. Since the map $\xi\mapsto
f(\xi^r)$ can be discontinuous we cannot avoid the error $\varepsilon$
(the setting in \cite{L} is different due to continuity). We first
consider the case that $|\nabla_r f (\xi)|= \nabla_r f (\xi)$. Then it
holds
%
\begin{eqnarray}
\label{nanomech}
\Delta_f (r) &\leq& \varepsilon+\nabla_r
f(\xi)=\varepsilon+ \nabla _r g(\xi)+ \lambda\mathbb{L}f \bigl(
\xi^r\bigr) -\lambda\mathbb{L}f (\xi)
\nonumber\\[-8pt]\\[-8pt]
&\leq& \varepsilon+
\Delta_g (r)+ \lambda\mathbb{L}f \bigl(\xi^r\bigr) -
\lambda \mathbb{L}f (\xi).\nonumber
\end{eqnarray}
Since $\xi^r = \xi\setminus I_r$ ($I_r:=I_k \cup I_{k'} $ if
$r=(k,k')$), $c_r(\xi^r)=0$ and $\nabla_r f(\xi) \geq0$, we have
%
\begin{eqnarray}
\label{rum} \mathbb{L}f \bigl( \xi^r\bigr)-\mathbb{L}f (\xi) & =&
\sum_{ r'\in\mathcal
{R} } \bigl\{ c_{r'}\bigl(
\xi^r\bigr) \nabla_{r'} f \bigl( \xi^r \bigr)-
c_{r'} (\xi) \nabla_{r'} f (\xi) \bigr\}
\nonumber\\[-8pt]\\[-8pt]
& \leq&\sum_{ r'\in\mathcal{R},  r' \neq r } \bigl\{ c_{r'}\bigl(
\xi^r\bigr) \nabla_{r'} f \bigl( \xi^r \bigr)-
c_{r'} (\xi) \nabla_{r'} f (\xi) \bigr\}.\nonumber 
\end{eqnarray}
%
By our choice of $\xi$ we can write
\[
f \bigl( \bigl(\xi^{r'}\bigr)^r \bigr) - f \bigl(
\xi^{r'} \bigr) \leq\Delta _f(r) \leq\varepsilon+ f\bigl(
\xi^r\bigr) -f(\xi),
\]
thus implying that $\nabla_{r'} f (
\xi^r  ) \leq\varepsilon+\nabla_{r'} f(\xi) $. In particular,
it holds
%
\begin{eqnarray}
\label{stima1}
\qquad c_{r'}\bigl(\xi^r\bigr)
\nabla_{r'} f \bigl( \xi^r \bigr)- c_{r'} (\xi)
\nabla_{r'} f (\xi) & \leq& \bigl[ c_{r'}\bigl(
\xi^r\bigr) - c_{r'} (\xi) \bigr]\nabla_{r'} f (
\xi)+ \varepsilon\|c\|_\infty
\nonumber\\[-8pt]\\[-8pt]
& \leq&\gamma\bigl(r,r'\bigr)\Delta_f
\bigl(r'\bigr) + \varepsilon\|c\|_\infty.\nonumber
\end{eqnarray}
On the other hand, we have the trivial bound
%
\begin{equation}
\label{stima2} c_{r'}\bigl(\xi^r\bigr)
\nabla_{r'} f \bigl( \xi^r \bigr)- c_{r'} (\xi)
\nabla_{r'} f (\xi) \leq2\|c\|_\infty\Delta_f
\bigl(r'\bigr).
\end{equation}
Combining~\eqref{nanomech},~\eqref{rum} and using~\eqref{stima1} for
$r'\in\hat\mathcal{R}$ and~\eqref{stima2} for $r' \in\mathcal
{R}\setminus\hat \mathcal{R}$, we get
%
\begin{eqnarray}\label{eve}
\Delta_f (r) &\leq& \varepsilon+\Delta_g
(r)+ \lambda\sum_{r' \in
\hat\mathcal{R}\dvtx  r'\neq r} \gamma\bigl(r,r'
\bigr)\Delta_f\bigl(r'\bigr)+ \lambda\varepsilon\|c \|_\infty|\hat\mathcal{R}|
\nonumber\\[-8pt]\\[-8pt]
&&{}   +2 \lambda\| c\|_\infty\sum
_{ r'\in \mathcal{R}\setminus\hat\mathcal{R} } \Delta_f \bigl(r'\bigr).\nonumber
\end{eqnarray}
It is simple to check, by similar arguments, that the above bound
\eqref{eve} holds also in the case $|\nabla_r f (\xi)|= -\nabla_r f
(\xi)$. Note moreover that, since $f \in\mathbb{D}$, the last series in
\eqref{eve} is finite and converges to zero as $\hat\mathcal
{R}\nearrow\mathcal{R}$.
Taking first the limit $\varepsilon\downarrow0$ and then the
limit $\hat\mathcal{R}\nearrow\mathcal{R}$ we get the thesis.
\end{pf}

We can finally prove our first target:

\begin{Lemma}
The image $\{ f -\lambda\mathbb{L}f\dvtx   f \in\mathbb{D}\}$ is dense
in $\mathbb{B}$ for $\lambda\geq0$ small enough.
\end{Lemma}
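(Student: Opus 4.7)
The plan is to follow a Hille--Yosida/Liggett-style strategy: combine the a priori estimate of Lemma \ref{pixar:deigeni} with a bounded-operator approximation built from $\bbL_n$. Since $\bbB_{\rm loc} \subset \bbD$ by Lemma \ref{giostra}(iii) and $\bbB_{\rm loc}$ is dense in $\bbB$ by Lemma \ref{lemino}, it suffices to approximate every $g \in \bbB_{\rm loc}$ in $\bbB$ by functions of the form $f_n - \lambda \bbL f_n$ with $f_n \in \bbD$.

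The first step is the a priori estimate. Set $M := \sup_{r' \in \cR} \sum_{r \neq r'} \gamma(r,r')$. The crucial point is that $M < \infty$: inspecting the definitions of $c_k$ and $c_{k,k'}$, the value $c_{r'}(\xi)$ depends on $\xi$ only through the intervals $I_j$ whose index $j$ lies in a neighborhood of $r'$ of radius $L \sim \lceil d_{\rm max}/d_{\rm min}\rceil$, so $\gamma(r,r') = 0$ unless $r$ is within $L$ indices of $r'$, and in any case $\gamma(r,r') \leq 2\|c\|_\infty$. Summing the bound of Lemma \ref{pixar:deigeni} over $r \in \cR$ and interchanging the double sum then yields $|||f||| \leq |||g||| + \lambda M |||f|||$, and hence
$$|||f||| \leq \frac{|||g|||}{1 - \lambda M} \qquad \text{whenever } f \in \bbD,\; f - \lambda \bbL f = g,\; \lambda M < 1.$$

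The next step is existence via the bounded operators $\bbL_n$. By Lemma \ref{giostra}(i), each $\bbL_n$ is a bounded operator on $\bbB$; as an integral average of the bounded pregenerators $L_s$ it is itself a bounded Markov pregenerator, hence the generator of a contraction semigroup, so that $(\mathds{1} - \lambda \bbL_n)^{-1}$ is a well-defined contraction on $\bbB$ for every $\lambda > 0$. Define $f_n := (\mathds{1} - \lambda \bbL_n)^{-1} g$. The coefficients $c_r^{(n)}$ of $\bbL_n$ inherit the locality property of $c_r$, so the argument of the first step applies verbatim to $\bbL_n$ and yields $|||f_n||| \leq |||g|||/(1 - \lambda M)$ uniformly in $n$; in particular $f_n \in \bbD$. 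Rewriting $(\mathds{1} - \lambda \bbL_n) f_n = g$ as
$$f_n - \lambda \bbL f_n = g - \lambda (\bbL - \bbL_n) f_n,$$
the density claim reduces to showing $\|(\bbL - \bbL_n) f_n\| \to 0$ in $\bbB$ as $n \to \infty$.

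This last convergence is the main obstacle. From the bound used in the proof of Lemma \ref{giostra}(ii) one has $\|(\bbL - \bbL_n) f_n\| \leq C \sum_{r \in \cR :\, \supp(r) \not\subset [-n,n]} \Delta_{f_n}(r)$, and the uniform bound on $|||f_n|||$ alone does not force this tail to vanish. The way around it is to exploit that $g$ has bounded support and that each application of $\bbL_n$ enlarges the support of its argument only by an amount depending on $n$: iterating $f_n = g + \lambda \bbL_n f_n$ shows that $f_n$ itself is supported in a bounded region, so $\Delta_{f_n}(r) = 0$ for all $r$ whose support avoids that region; combined with the uniform $|||f_n|||$ bound and a careful accounting of the boundary contributions, this furnishes the required tail decay and concludes the proof.
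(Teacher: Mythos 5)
Your setup follows the paper's strategy quite closely (a Liggett-style density argument using the bounded operators $\bbL_n$, the a priori estimate of Lemma \ref{pixar:deigeni}, and the reduction to controlling $\|(\bbL-\bbL_n)f_n\|$), but the final step has a genuine gap. You correctly recognize that the uniform bound $|||f_n|||\leq |||g|||/(1-\l M)$ does not force the tail $\sum_{r:\,\supp(r)\not\subset[-n,n]}\D_{f_n}(r)$ to vanish, and you propose to fix this by noting that $f_n$ has bounded support together with ``a careful accounting of the boundary contributions.'' This does not close the argument. It is true that $f_n=(\mathds{1}-\l\bbL_n)^{-1}g$ is local with support roughly in $(-n-1,n+1)$ (each $\bbL_n^k g$ for $k\geq 1$ lives there), so that $\D_{f_n}(r)=0$ for $r$ with $\supp(r)$ well outside that window. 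But there remain $O(1)$ indices $r$ straddling the boundary, i.e.\ with $\supp(r)\not\subset[-n,n]$ yet $\supp(r)\cap(-n-1,n+1)\neq\emptyset$, and for those the only control your estimates give is $\D_{f_n}(r)\leq |||f_n|||\leq |||g|||/(1-\l M)$, a constant independent of $n$. A sum of $O(1)$ terms each of size $O(1)$ is $O(1)$, not $o(1)$, so the bounded-support observation plus the uniform $|||\cdot|||$ bound gives no decay at all.

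What actually makes the tail vanish in the paper is a strictly stronger, \emph{pointwise} consequence of Lemma \ref{pixar:deigeni}: writing the estimate in componentwise form as $[\mathds{1}-\l\G]\D_{f_n}\leq\D_g$ with $\G$ a bounded positive operator on $\ell_1(\cR)$, and using that $[\mathds{1}-\l\G]^{-1}$ (Neumann series) has nonnegative entries, one gets the componentwise domination $\D_{f_n}(r)\leq\bigl([\mathds{1}-\l\G]^{-1}\D_g\bigr)(r)$ for every $r$, uniformly in $n$. The right-hand side is a single fixed element of $\ell_1(\cR)$, so its tail over $\{r:\supp(r)\not\subset[-n,n]\}$ tends to zero, which is exactly the decay you need and which a mere bound on $|||f_n|||$ cannot provide. (To see that the boundary terms really are small, note that for $r$ near $\pm n$ one has $\D_g(r)=0$, and iterating the pointwise inequality effectively reproduces the Neumann series for $[\mathds{1}-\l\G]^{-1}\D_g$; there is no shortcut that avoids this.) You should upgrade your summed estimate to the pointwise one and invoke tail decay of a fixed $\ell_1$ sequence; as written, the proof does not go through.
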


\begin{pf}
Part of the proof is similar to the proof of \cite{L}, Chapter~1,
Theorem 3.9. We give it for completeness. Without loss of generality,
for simplicity of notation we take $d_{\min}=1$. Consider the operator
$ \mathbb{L}_n$ defined in~\eqref{ln}. As already observed in Lemma
\ref{giostra}, $\mathbb{L}_n$ is a~bounded operator $\mathbb{L}_n\dvtx
\mathbb{B}\to\mathbb{B}$. It is simple to check that $\mathbb{L}_n$ is
a Markov pregenerator; see the criterion in Remark~\ref{linguini}.
Being $\mathbb{L}_n$ a~bounded Markov pregenerator, the image of
$\mathbh{1} - \lambda\mathbb{L}_n$ is the entire space $\mathbb{B}$ for
each $\lambda\geq0$; see \cite{L}, Chapter~1, Proposition 2.8. Hence,
fixed $g \in\mathbb{D}$ we can find $f_n \in\mathbb{B}$ such that
\[
f_n - \lambda\mathbb{L}_n f_n =g.
\]
Take $ s \in(n, n+1)$. Fix $r \in\mathcal{R}$. If $\nabla_r
f_n(\xi)\geq0$, we can bound
\[
L_s f_n \bigl( \xi^r
\bigr)-L_s f_n (\xi) \leq \mathop{\sum
_{r'\in\mathcal{R}\dvtx  r' \neq r,}}_
{\operatorname{supp}(r')\subset[-n-1,n+1) } \mathcal{U} \bigl( c_{r'}
\bigl(\xi^r\bigr) \nabla_{r'} f_n \bigl(
\xi^r \bigr)- c_{r'} (\xi) \nabla_{r'}
f_n (\xi) \bigr),
\]
where $\mathcal{U}(x)=x \mathbh{1}_{\{x \geq0\}}$. Hence, averaging
over $s$, the same estimate holds for $\mathbb{L}_n$ instead of $L_s$.
Using this observation and the same arguments used in the proof of
Lemma~\ref{pixar:deigeni}, we get
%
\begin{equation}
\label{wall-eeee} \Delta_{f_n} (r) \leq\Delta_g (r) +
\lambda\mathop{\sum_{r'\in\mathcal{R}\dvtx  r' \neq r,}}_
{\operatorname{supp}(r')\subset[-n-1,n+1)
} \gamma
\bigl(r,r'\bigr) \Delta_{f_n} \bigl(r'\bigr).
\end{equation}
Introduce now the bounded operator $\Gamma\dvtx  \ell_1 ( \mathcal{R})
\to\ell_1 (\mathcal{R})$ as
\[
(\Gamma\underline x) (r)= \sum_{ r'\in
\mathcal{R}\dvtx  r' \neq r } \gamma
\bigl(r,r'\bigr) \underline x \bigl(r'\bigr), \qquad
\underline x \in\ell_1 (\mathcal{R}).
\]
The operator is bounded since $\gamma(r,r') $ is bounded by $\|c\|
_\infty$ and is zero if the supports of $r$ and $r'$ are at distance
larger than a~suitable constant depending on $d_{\min }$ and $d_{\max}$
only (recall that that rates $\lambda_\ell,\lambda_r,\lambda_a$ are
zero when evaluated at $d \geq d_{\max}$). Then bound~\eqref{wall-eeee}
implies that $ [\mathbh{1}- \lambda \Gamma] \Delta_{f_n}
\leq\Delta_{g}$. If $\lambda$ is small enough, the operator
$\mathbh{1}- \lambda\Gamma$ can be inverted, and therefore we get
%
\begin{equation}
\label{capitano} \Delta_{f_n} \leq[\mathbh{1}- \lambda
\Gamma]^{-1} \Delta_g.
\end{equation}
Let us define $g_n:= f_n - \lambda\mathbb{L}f_n$. Then
\begin{eqnarray*}
\|g- g_n\| &=& \lambda\bigl\|( \mathbb{L}- \mathbb{L}_n)f_n
\bigr\|  \leq \mathop{\sum_{r \in\mathcal{R}\dvtx }}_
{\operatorname{supp}(r) \not\subset(-n,n)} \|
c_r\|_\infty\Delta_{f_n} (r)
\\
&\leq& \|c\|_\infty \mathop{\sum_{r \in\mathcal{R}\dvtx }}_{\operatorname{supp}( r)
\not\subset(-n,n)}
[\mathbh{1}- \lambda\Gamma]^{-1} \Delta_g (r).
\end{eqnarray*}
Since $[\mathbh{1}- \lambda\Gamma]^{-1} \Delta_g \in\ell_1
(\mathcal{R})$, the above bound implies that $\lim_{n \to\infty} \|g-
g_n \|=0$. Recalling that $g \in\mathbb{D}$ and that $g_n$ belongs to
the image of $\mathbh{1}- \lambda \mathbb{L}$, we conclude that the
image of this last operator is dense in $\mathbb{D}$ and therefore in~$\mathbb{B}$.
\end{pf}

As a~consequence of the above result and Remark~\ref{linguini}, we get
that the closure $\bar\mathbb{L}$ of $\mathbb{L}$ is a~Markov generator
in the sense of \cite{L}, Chapter~1, Definition~2.7 (briefly, we will
say that $\mathbb{L}$ is an $L$-Markov generator).

\begin{Lemma}\label{cavallino}
If $f\in\mathbb{D}$, then there exists a~sequence $f_n\in\mathbb
{B}_{\mathrm{loc}}$ such that
$f_n \to f$ and $\mathbb{L}f_n \to\mathbb{L}f$ in $\mathbb{B}$.
\end{Lemma}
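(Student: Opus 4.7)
The plan is to take the concrete approximating sequence furnished by Lemma \ref{lemino}, namely
\[
f_N(\xi) := \int_N^{N+1} f(\xi \cap (-r, r)) \, dr,
\]
which already lies in $\bbB_{\rm loc}$ and converges to $f$ in $\bbB$. So the only remaining task is to show $\bbL f_N \to \bbL f$ in $\bbB$. Writing
\[
\|\bbL f_N - \bbL f\| \leq \|c\|_\infty \sum_{r\in\cR} \sup_{\xi}|\nabla_r f_N(\xi)-\nabla_r f(\xi)|,
\]
I would estimate each term using a case analysis on the relative position of $I_r$ with respect to the intervals $(-s,s)$ for $s\in[N,N+1]$.

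Two ingredients drive the argument. First, since $\cN(d_{\rm min})$ is compact by Lemma \ref{gnocchi}(iv), $f$ is uniformly continuous with modulus $\omega_f$; combined with the uniform bound $m(\zeta,\zeta\cap(-s,s))\le e^{-s}$ (already implicit in the proof of Lemma \ref{lemino}), one obtains that $\omega(N):=\sup_{\zeta}\sup_{s\geq N}|f(\zeta\cap(-s,s))-f(\zeta)|$ tends to $0$. Second, an elementary case analysis gives: (a) if $I_r\subset(-s,s)$ for every $s\in[N,N+1]$ (``deep inside''), then $\xi^r\cap(-s,s)=(\xi\cap(-s,s))^r$ so $\Delta_{f_N}(r)\le\Delta_f(r)$; (b) if $I_r\cap(-s,s)=\emptyset$ for every such $s$ (``outside''), then $\nabla_r f_N\equiv 0$; (c) the remaining ``boundary'' indices, for which $I_r$ straddles an endpoint $\pm s$ for some $s\in[N,N+1]$, are uniformly $O(1)$ in number (bounded by a constant depending only on $d_{\rm max}/d_{\rm min}$) and satisfy $|r|\sim N$.

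Given $\epsilon>0$, I would choose a finite $\cR^*\subset\cR$ with $\sum_{r\notin\cR^*}\Delta_f(r)<\epsilon$ and let $M$ be such that $I_r\subset(-M,M)$ for all $r\in\cR^*$. Taking $N>M+1$ and splitting the sum according to the four cases above yields the following estimates. For $r\in\cR^*$, writing $\nabla_r f_N(\xi)-\nabla_r f(\xi)=\int_N^{N+1}\big([f(\xi^r\cap(-s,s))-f(\xi^r)]-[f(\xi\cap(-s,s))-f(\xi)]\big)ds$ gives $|\nabla_r f_N-\nabla_r f|\le 2\omega(N)$ uniformly, so that region contributes at most $2|\cR^*|\omega(N)\to 0$. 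For inside $r\notin\cR^*$, case (a) gives a contribution of at most $2\sum_{r\notin\cR^*}\Delta_f(r)<2\epsilon$. For outside $r$, case (b) gives a contribution of at most $\sum_{r\notin\cR^*}\Delta_f(r)<\epsilon$.

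The main obstacle is the boundary contribution in case (c), which I would handle by a refined estimate. Writing $\xi\cap(-s,s)\setminus I_r=\xi\cap(-s,s)\setminus(I_r\cap(-s,s))$ and using $m(\xi\cap(-s,N),\xi\cap(-s,s))\le e^{-N}$ (and its symmetric counterpart), uniform continuity gives $\Delta_{f_N}(r)\le\omega_f(e^{-N})+\Delta_f(r)$ for boundary $r$. Since the number of boundary indices is $O(1)$ uniformly in $N$ while all these indices satisfy $|r|\sim N\to\infty$, the summability $\sum_r\Delta_f(r)<\infty$ forces $\Delta_f(r)\to 0$ for them, and $\omega_f(e^{-N})\to 0$, so the boundary contribution vanishes. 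Combining, $\limsup_N\|\bbL f_N-\bbL f\|\le 3\epsilon\|c\|_\infty$, and the arbitrariness of $\epsilon$ concludes the proof.
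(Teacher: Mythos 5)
Your proof is correct and follows essentially the paper's strategy: use the approximating sequence $f_N$ from Lemma \ref{lemino}, write $\bbL f - \bbL f_N$ as a series over $r\in\cR$, and split into a finite set of indices (controlled by the uniform convergence $f_N\to f$) plus a tail (controlled by the summability $\sum_r \Delta_f(r)<\infty$ coming from $f\in\bbD$). However, your three-way case analysis (``deep inside''/``outside''/``boundary'') is not needed, and the boundary case you identify as ``the main obstacle'' is in fact a non-issue. The identity $\xi^r\cap(-s,s)=(\xi\cap(-s,s))^r$ holds for \emph{every} $r\in\cR$ and every $s>0$ (both sides equal $\xi\cap(-s,s)\setminus I_r$), not just when $I_r\subset(-s,s)$. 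Consequently $\nabla_r f_N(\xi)=\int_N^{N+1}\nabla_r f\bigl(\xi\cap(-s,s)\bigr)\,ds$, and since $\xi\cap(-s,s)\in\cN(d_{\rm min})$ one gets the uniform bound $\Delta_{f_N}(r)\le\Delta_f(r)$ for all $r$, so that $|\nabla_r f_N(\xi)-\nabla_r f(\xi)|\le 2\Delta_f(r)$ everywhere. With that, the entire tail over $r\notin\cR^*$ (inside, outside, and boundary alike) is handled in one stroke by the summability of $\Delta_f$, and the refined estimate via $\omega_f(e^{-N})$, the $O(1)$ count of boundary indices, and the ``$\Delta_f(r)\to 0$ along boundary indices'' argument all become superfluous. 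This is exactly the observation the paper makes (it records $(\xi_s)^r=(\xi^r)_s$ up front), which reduces the proof to the two-way split ${\rm supp}(r)\subset[-N,N]$ versus ${\rm supp}(r)\not\subset[-N,N]$, bounding the finite part by $2\|c\|_\infty\,|\{r:{\rm supp}(r)\subset[-N,N]\}|\,\|f-f_n\|$ and the tail by $2\|c\|_\infty\sum_{{\rm supp}(r)\not\subset[-N,N]}\Delta_f(r)$.
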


\begin{pf}
Given $n$ set $f_n(\xi):= \int_n^{n+1} f( \xi\cap(-s,s)) \,ds$. Due to
Lemma~\ref{lemino}, we know that $\|f-f_n\|\to0$ and $f_n \in
\mathbb{B}_{\mathrm{loc}}$. Let us prove that $\| \mathbb{L}f_n-\mathbb
{L}f \|\to0$. To this aim, setting $\xi_s:= \xi\cap(-s,s)$ and
observing that $(\xi_s)^r= (\xi^r)_s$ for all $r \in\mathcal{R}$, for
any integer $N$ we can write
%
\begin{eqnarray}
&& \bigl| \mathbb{L} f(\xi)-\mathbb{L}f_n (\xi) \bigr|\nonumber
\\
&&\qquad = \biggl| \int
_n^{n+1} \sum_{r\in\mathcal{R}}
c_r(\xi) \bigl( \nabla _r f(\xi) - \nabla_r f
(\xi_s)\bigr) \,ds \biggr|\nonumber
\\
&&\qquad \leq \biggl| \int_n^{n+1} \mathop{\sum
_{r\in\mathcal{R}\dvtx }}_{\operatorname{supp}(r)\not
\subset
[-N,N]} c_r(\xi) \bigl(
\nabla_r f(\xi) - \nabla_r f (\xi_s) \bigr) \,ds
\biggr|\nonumber
\\
&&\quad\qquad {}+ \biggl| \int_n^{n+1} \mathop{\sum
_{r\in\mathcal{R}\dvtx }}_{\operatorname{supp}(r) \subset[-N,N]} c_r(\xi) \bigl(
\nabla_r f(\xi) - \nabla_r f (\xi_s) \bigr)\,ds
\biggr|\nonumber
\\
&&\qquad  \leq 2\|c\|_\infty \mathop{\sum_{r\in
\mathcal{R}\dvtx  }}_{\operatorname{supp}(r)\not\subset[-N,N]}
\Delta_f(r) \label{mentana1}
\\
&&\quad\qquad {}+ 2 \|c\|_\infty\bigl|\bigl\{r\in\mathcal{R}\dvtx \operatorname{supp}(r) \subset
[-N,N]\bigr\}\bigr| \cdot\|f-f_n\|.\label{mentana2}
\end{eqnarray}
Given $\varepsilon>0$ we choose $N$ large enough that~\eqref{mentana1}
is smaller than $\varepsilon$ (this is possible since
$f\in\mathbb{D}$).
Afterwards, for $n$ large enough~\eqref{mentana2} is smaller
than~$\varepsilon$ (recall that $f_n \to f$ in $\mathbb{B}$). Then we
conclude that $\|\mathbb{L}f - \mathbb{L}f_n\|\leq2 \varepsilon$ for
$n$ large enough.
\end{pf}

We can finally prove Theorem~\ref{amico_marco}.

\begin{pf*}{Proof of Theorem~\ref{amico_marco}}
In Proposition~\ref{note} we have already showed that $\mathcal
{L}f=\mathbb{L}f $ if $f \in \mathbb{B}_{\mathrm{loc}}$. As observed
after Lemma~\ref{giostra}, in this case $\mathcal{L}f$ must
equal~\eqref{santi}. By Lemma~\ref{cavallino}, $\bar\mathbb{L}$ is the
closure of the restriction of $\mathbb{L}$ to
$\mathbb{B}_{\mathrm{loc}}$. Hence, $\mathbb{B}_{\mathrm{loc}}$ is
a~core of $\bar\mathbb{L}$. By Lemma~\ref{giostra}\textup{(i)}, given
$f\in\mathbb {D}$ the value $\mathbb{L}f (\xi)$ equals the RHS of
\eqref{santi} which is an absolutely convergent series.

It remains to prove that $\bar\mathbb{L}=\mathcal{L}$. Since
$\mathbb{L}f=\mathcal{L}f$ for all $f \in\mathbb{B}_{\mathrm{loc}}$,
Lemma~\ref{cavallino} and the closure of $\mathcal{L}$ implies that
$f\in\mathcal{D}(\mathcal{L})$ and $\mathbb {L}f=\mathcal{L}f $ for all
$f \in\mathbb{D}$ (the fact that $\mathcal{L}$ is close is a~standard
fact: combine Definition 2.1 in \cite{L}, Chapter~1, with the
Hille--Yosida theorem as stated in Theorem 2.9 in \cite{L}, Chapter~1,
leading to the fact that $\mathcal {L}$ is an $L$-Markov generator, and
therefore closed). This observation implies that $\mathcal{L}$ is an
extension of $\bar\mathbb{L}$. It is a~general fact that this implies
that $\mathcal{L}= \bar\mathbb{L}$; cf. \cite{S}, Proposition~3.13,
together with the Hille--Yosida theorem as stated in Theorem 2.9 in
\cite{L}, Chapter~1.
\end{pf*}

%
%

\section*{Acknowledgements}
We warmly thank T. Kuna and F. Martinelli for useful discussions. We
acknowledge the anonymous referee for his/her careful reading of the
manuscript and for his/her useful suggestions. We thank the Laboratoire
de Probabilit\'{e}s et Mod\`{e}les Al\'{e}atoires of the University
Paris VII and the Department of Mathematics of the University of Roma
Tre for the support and the kind hospitality.




\printaddresses

\end{document}